\let\strokeL\l
\definecolor{darkgreen}{RGB}{0,102,0}
\definecolor{lightblue}{RGB}{111, 151, 242}
\definecolor{darkred}{RGB}{178,0,0}
\definecolor{lightgrey}{rgb}{0.5,0.5,0.5}
\definecolor{mymauvnne}{rgb}{0.58,0,0.82}
\definecolor{brightyellow}{RGB}{237, 217, 83}
\definecolor{orange}{RGB}{255,123,0}
\definecolor{lightpink}{RGB}{255, 103, 129}
\definecolor{brightpink}{RGB}{232, 88, 232}
\definecolor{grassgreen}{RGB}{0,154,23}
\definecolor{seablue}{RGB}{52,111,111}
\definecolor{skyblue}{RGB}{135, 206, 235}
\newcommand{\opacity}{0.3}
\newcommand{\alias}[1]{ |[alias=#1]| }
\newcommand{\monoidalcolour}							 {lightpink}
\newcommand{\strongpseudofuncolour}					{grassgreen}
\newcommand{\extcolour}											{skyblue}
\newcommand{\actioncolor}										{darkred}
\newcommand{\naturalitycolour}								{white}
\newcommand{\monoidalpcolour}{\monoidalcolour}
\newcommand{\monoidalmcolour}{\monoidalcolour}
\newcommand{\monoidallcolour}{\monoidalcolour}
\newcommand{\monoidalrcolour}{\monoidalcolour}
\newcommand{\actmcolour}{\actioncolor}
\newcommand{\actlcolour}{\actioncolor}
\newcommand{\actpcolour}{\actioncolor}
\newcommand{\kcolour}{\strongpseudofuncolour}
\newcommand{\Klambdacolour}{\naturalitycolour}
\newcommand{\actlambdacolour}{\naturalitycolour}
\newtheorem{theorem}{Theorem}
\newtheorem{definition}[theorem]{Definition}
\newtheorem{proposition}[theorem]{Proposition}
\newtheorem*{lemma*}{Lemma}
\newtheorem*{proposition*}{Proposition}
\newtheorem*{theorem*}{Theorem}
\newcommand{\vertequals}{\rotatebox{90}{$\,=$}}
\newcommand{\para}[1]{\widetilde{#1}}
\newcommand{\homBicat}[2]{ \mathrm{Hom}({#1}, {#2}) }
\newcommand{\cf}{\emph{c.f.}}
\newcommand{\eg}{\emph{e.g.}}
\newcommand{\ie}{\emph{i.e.}}
\newcommand{\st}{\mid} % such that
\newcommand{\K}{\mathcal{K}}
\newcommand{\To}{\ensuremath{\Rightarrow}} % alias for \Rightarrow
\newcommand{\xra}[1]{\ensuremath{\xrightarrow{#1}}} % alias for xrightarrow
\newcommand{\XRA}[1]{\ensuremath{\xRightarrow{#1}}}
\newcommand{\id}{\ensuremath{\mathrm{id}}} % for roman id
\newcommand{\Id}{\mathsf{Id}} % for Id 
\renewcommand{\epsilon}{\varepsilon}
\newcommand{\psinv}[1]{{#1}^{{\bullet}}} % for pseudo inverses
\newcommand{\iso}{\cong}
\newcommand{\Nat}{\mathbb{N}}
\newcommand{\act}{\triangleright}
\newcommand{\lact}{\triangleright}
\newcommand{\ract}{\triangleleft}
\newcommand{\tens}{\otimes}
\newcommand{\tensu}{I}
\newcommand{\cellOf}[1]{\overline{#1}}
\newcommand{\monoidal}[1]{ (#1, \tens, \tensu) }
\newcommand{\pentagonator}{\mathfrak{p}}
\newcommand{\montrianglel}{\mathfrak{l}}
\newcommand{\montriangler}{\mathfrak{r}}
\newcommand{\montrianglem}{\mathfrak{m}}
\newcommand{\actalpha}{{\widetilde\alpha}}
\newcommand{\actlambda}{{\widetilde\lambda}}
\newcommand{\actpentagonator}{{\widetilde\pentagonator}}
\newcommand{\acttrianglel}{{\widetilde\montrianglel}}
\newcommand{\acttrianglem}{{\widetilde\montrianglem}}
\newcommand{\acttriangler}{{\widetilde\montriangler}}
\newcommand{\catC}{\ensuremath{\mathbb{C}}} % for mathcal C categories
\newcommand{\catD}{\ensuremath{\mathbb{D}}} % for mathcal D categories
\newcommand{\catE}{\ensuremath{\mathbb{E}}} % for mathcal D categories
\newcommand{\catV}{\mathbb{V}}
\newcommand{\baseCat}{\B}
\newcommand{\Set}{\mathbf{Set}}
\newcommand{\Cat}{\mathbf{Cat}}
\newcommand{\lc}[1]{  \mathsf{lc}^{#1} }
\newcommand{\rc}[1]{  \mathsf{rc}^{#1} }
\renewcommand{\l}{\mathsf{l}}
\newcommand{\premonoidal}[1]{ \left(#1, {\leftTens{}{}}, {\rightTens{}{}}, \tensu \right)}
\newcommand{\binoidal}[1]{(#1, {\leftTens{}{}}, {\rightTens{}{}}\!)}
\newcommand{\centreOf}[1]{\mathcal{Z}(#1)}
\newcommand{\leftTens}[2]{#1 \rtimes #2}
\newcommand{\rightTens}[2]{#1 \ltimes #2}
\newcommand{\ltie}{ \rtimes }
\newcommand{\rtie}{ \ltimes }
\newcommand{\Kl}{\mathcal{K}}
\newcommand{\freydCat}{\mathcal{F}}
\newcommand{\un}{\mathsf{u}}
\newcommand{\co}{\mathsf{c}}
\DeclareMathOperator{\bulletop}{\bullet}
\renewcommand{\vert}{\bulletop}
\newcommand{\vertsub}[1]{\vert}
  \newcommand{\Span}{\mathbf{Span}}
\newcommand{\modif}{m}
\newcommand{\twocellIso}[1]{\overset{\small{#1}}{\iso}}
\newenvironment{bprooftree}
  {\leavevmode\hbox\bgroup}
  {\DisplayProof\egroup}
\renewcommand{\@marginparreset}{%
  \reset@font\footnotesize
  \footnote
  \raggedright
  \@setminipage
}
\newcommand{\oncell}[1]{  #1 }
\newcommand{\B}{\mathscr{B}}
\newcommand{\C}{\mathscr{C}}
\newcommand{\V}{\mathcal{V}}
\newcommand{\J}{\mathsf{J}}
\newcommand{\FreydBicat}[1]{\mathbf{FreydBicat}}
\newcommand{\FreydAct}[1]{\mathbf{FreydAct}}
\newcommand{\lefttrans}{{\vartheta}}
\newcommand{\righttrans}{{\chi}}
\newcommand{\centre}{\mathcal{Z}}
\renewcommand{\d}{\mathrm{d}}
\newcommand{\actrho}{\widetilde{\rho}}
\newcommand{\ractrho}{\widetilde{\rho^{\ract}}}
\newcommand{\lactlambda}{\toact{\lambda^\lact}}
\newcommand{\lactlambdaprime}{\toact{\lambda^{\!\lactprimesmall}}}
\newcommand{\ractlambda}{\ractrho}
\newcommand{\lactalpha}{\toact{\alpha^\lact}}
\newcommand{\lactalphaprime}{\toact{\alpha^{\lactprimesmall}}}
\newcommand{\ractalpha}{\toact{\alpha^\ract}}
\newcommand{\kappaone}{{\kappa}}
\newcommand{\kappaprime}{{\kappa'}}
\newcommand{\toact}[1]{\widetilde{#1}}
\newcommand{\lactprimesmall}{\scaleobj{.8}{\blacktriangleright}}
\DeclareMathOperator{\lactprime}{\lactprimesmall}
\title{Effectful Semantics in 2-Dimensional Categories: \\ Premonoidal and Freyd Bicategories}
\author{Hugo Paquet
\institute{LIPN,  Université Sorbonne Paris Nord \\ 
Paris, France}
\email{paquet@lipn.univ-paris13.fr}
\and
Philip Saville
\institute{University of Oxford\\
Oxford, UK}
\email{philip.saville@cs.ox.ac.uk}
}
\begin{document}
\maketitle

\begin{abstract}
  Premonoidal categories and Freyd categories provide an encompassing
  framework for the semantics of call-by-value programming languages.
  Premonoidal categories
  are a weakening of monoidal categories in which the interchange law
  for the tensor 
  product 
  may not hold,
  modelling the fact that
  effectful programs cannot generally be re-ordered. A Freyd category
  is a pair of categories
  with the same objects: 
  a premonoidal category of general programs, and a monoidal category of `effect-free' programs which do admit re-ordering.

Certain recent innovations in semantics, however, have produced  
models which are not categories but bicategories. 
Here we develop the theory to capture such examples by introducing premonoidal and Freyd structure in a bicategorical setting. 
The second dimension introduces new subtleties, so we 
verify our definitions with several examples and a correspondence theorem---between Freyd bicategories and certain actions of monoidal bicategories---which parallels the categorical framework. 
\end{abstract}

\section{Introduction}
\label{sec:introduction}

A fundamental aspect of call-by-value functional programming languages is the distinction between \emph{values} and \emph{computations}. While
values are `pure' program fragments that can be passed around safely, 
computations may interact with their environment in the form of \emph{effects} (such
as raising exceptions, interacting with state, or behaving probabilistically), and must therefore be manipulated with care. 

Values and computations obey different algebraic properties, and in
particular only
computations are sensitive to the evaluation order.  
 For instance
	{\tt print \!"a"\!;\! print \!"b"}
is not equivalent to 
	{\tt print \!"b"\!;\! print \!"a"}. This is reflected in the
        denotational semantics of call-by-value languages, which
        consists of  a pair of categories: a monoidal category of
        values, and a \emph{premonoidal} category of
        computations. These are related by an identity-on-objects functor
coercing values into effect-free computations, and the
resulting structure is called a \emph{Freyd category} 
(\cite{Power1997env,Levy2003}).

In this paper we generalize these notions from categories to
bicategories.  The resulting theory includes models of
programming languages 
in which the morphisms are themselves objects 
with structure---spans, strategies, parameter spaces,
profunctors, open systems, \emph{etc.}---for which the notion of
composition uses a universal construction, such as a pullback or a pushout.
In these models, the 2-cells play a central role in characterizing the composition
operation for morphisms, and additionally provide refined semantic~information
	(see~\eg~\cite{Hilken1996,LICS2019, ong-tsukada,Olimpieri2021,
	  Kerinec2023}).

\subsection{Bicategorical models}
\label{sec:case-for-bicategories}

A bicategory is a 2-dimensional category in which the associativity and  unit laws
for the composition of morphisms are
replaced by invertible 2-cells satisfying coherence axioms \cite{Benabou1967}.  
Bicategories have recently found prominence as models of computational processes: see \eg~\cite{template-games,  Fiadeiro2007,Genovese2021, Baez2016}. 
We illustrate this with two simple examples: spans of sets, and graded
monads. For reasons of space we have omitted definitions of the basic notions in bicategory theory, such as pseudofunctors, pseudonatural
transformations, and modifications. For a textbook account, see~\eg~\cite{Benabou1967}.

\subparagraph{Bicategories of spans.}
The bicategory $\Span(\Set)$ has objects sets and 1-cells 
	$A \rightsquigarrow B$ 
spans of functions	
	$A \longleftarrow S \longrightarrow B$.
We can compose pairs of morphisms 
	$A \longleftarrow S \longrightarrow B$ 
and 
	$B \longleftarrow R \longrightarrow C$ 
        using a pullback in the category of sets, as on the left below:
	\begin{equation*}
          \begin{tikzcd}[column sep=.3em, row sep=0.3em]
	&[1em] & {R \circ S} \\
	& S && R \\
	A && B &&[1em] C
	\arrow[from=2-2, to=3-1]
	\arrow[from=2-2, to=3-3]
	\arrow[from=2-4, to=3-3]
	\arrow[from=2-4, to=3-5]
	\arrow[from=1-3, to=2-2]
	\arrow[from=1-3, to=2-4]
	\arrow["\lrcorner"{anchor=center, pos=0.125, rotate=-45}, draw=none, from=1-3, to=3-3]
\end{tikzcd}\qquad
	\begin{tikzcd}[column sep=0.7em, row sep=0.7em, scalenodes=0.6]
	&&& \bullet \\
	&& \bullet \\
	& \bullet && \bullet && \bullet \\
	\bullet && \bullet && \bullet && \bullet
	\arrow[from=3-2, to=4-3]
	\arrow[from=3-4, to=4-3]
	\arrow[from=3-4, to=4-5]
	\arrow[from=2-3, to=3-2]
	\arrow[from=2-3, to=3-4]
	\arrow["\lrcorner"{anchor=center, pos=0.125, rotate=-45}, draw=none, from=2-3, to=4-3]
	\arrow[from=3-6, to=4-5]
	\arrow[from=1-4, to=2-3]
	\arrow[from=1-4, to=3-6]
	\arrow[from=3-6, to=4-7]
	\arrow["\lrcorner"{anchor=center, pos=0.125, rotate=-45}, draw=none, from=1-4, to=3-4]
	\arrow[from=3-2, to=4-1]
      \end{tikzcd}
	\hspace{5mm}
	\begin{tikzcd}[column sep=0.7em, row sep=0.7em, scalenodes=0.6]
	&&& \bullet \\
	&&&& \bullet \\
	& \bullet && \bullet && \bullet \\
	\bullet && \bullet && \bullet && \bullet
	\arrow[from=3-2, to=4-3]
	\arrow[from=3-4, to=4-3]
	\arrow[from=3-4, to=4-5]
	\arrow[from=3-6, to=4-5]
	\arrow[from=3-6, to=4-7]
	\arrow[from=3-2, to=4-1]
	\arrow[from=1-4, to=3-2]
	\arrow[from=2-5, to=3-4]
	\arrow[from=2-5, to=3-6]
	\arrow[from=1-4, to=2-5]
	\arrow["\lrcorner"{anchor=center, pos=0.125, rotate=-45}, draw=none, from=1-4, to=3-4]
	\arrow["\lrcorner"{anchor=center, pos=0.125, rotate=-45}, draw=none, from=2-5, to=4-5]
	\end{tikzcd}
    \end{equation*}
This composition correctly captures a notion of `plugging together' spans, but is only 
associative in a weak sense, since the two ways of taking pullbacks
(on the right above) are not generally equal. But, by the universal property of
   pullbacks, they are canonically isomorphic as spans.%

\subparagraph{Kleisli bicategories for graded monads.}
For another example we consider monads graded by monoidal
categories. Formally, a graded monad on a category $\catC$ consists of a monoidal category $(\catE, \bullet, \tensu)$ 
of \emph{grades} and a lax monoidal functor 
	$T : \catE \to [\catC, \catC]$
(see~\eg~\cite{Smirnov2008,Mellies2012, Katsumata2014}).
In particular, this gives a functor $T_e : \catC \to \catC$ for every
$e \in \catE$, and natural transformations 
	$\mu_{e, e'}  : T_{e'} \circ T_e \To T_{e \bullet e'}$ and
	$\eta : \id \To T_{\tensu}$
corresponding to a multiplication and unit. 

 Previous Kleisli-like constructions for graded monads have used presheaf-enriched categories 
 	(\eg~\cite{Gaboardi2021,McDermottFlexible}),
 but there is also a natural bicategorical construction. 
The objects are those of $\catC$ and 1-cells $A \rightsquigarrow B$ consist of a grade $e$ and a map 
	$f : A \to T_eB$ in $\catC$.
The 2-cells $(e, f) \To (e', f')$  are re-gradings: maps $\gamma : e \to e'$ in $\catE$ such that
		$T_\gamma(B) \circ f = f'$.
The composition and identities use the multiplication and unit, as
for a Kleisli category. But, unless $\catE$ is strict monoidal, this
operation is only weakly associative and unital.

A concrete instance of this is the $\mathbf{coPara}$
construction on a monoidal
category $\catC$ (\cite{Fong2019, Cruttwell2022}), equivalently defined as the Kleisli bicategory for the
monad graded by $\catC$ itself and given by $T_C(A) = A \otimes C$.\\

The broader context for this
work is the recent occurrence of bicategories in the semantics of
programming languages. Bicategories of profunctors are
now prominent in the analysis of linear logic and the
$\lambda$-calculus (\cite{FioreSpecies, galal-profunctors,
  Kerinec2023}), and game semantics employs a variety of span-like
constructions that compose weakly (\cite{template-games,cg1}). These
models have also influenced the development of 2-dimensional type theories
(\cite{LICS2019, Olimpieri2021}).   This paper supports these
developments from the perspective of call-by-value languages. (The
connection to linear logic explains our insistence on monoidal rather
than cartesian Freyd bicategories.)

\subsection{Monoidal bicategories}

A monoidal bicategory is a bicategory equipped with a unit object and
a tensor product which is only weakly associative and unital. In the
categorical setting `weakly' typically means `up to isomorphism'; in
bicategory theory it typically means `up to \emph{equivalence}'.

\begin{definition}%
	\label{def:equivalence}
	An \emph{equivalence} between objects $A$ and $B$ in a
    bicategory $\B$ is a pair of 1-cells $f : A \to B$ and
        $\psinv{f} : B \to A$ together with invertible 1-cells
        $\un : \Id_A \To \psinv{f} \circ f$
        and
        $\co : f \circ \psinv{f} \To \Id_B$.
	 This is an \emph{adjoint equivalence} if the witnessing 2-cells $\un$ and $\co$ satisfy the usual triangle laws for an adjunction (see~\eg~\cite{Leinster2004}). 
  \end{definition}
  
  It is common in bicategory theory for definitions to ask for adjoint equivalences: these are easier to work with and no stronger than asking for just equivalences 
  	(see~\eg~\cite[Proposition 1.5.7]{Leinster2004}).  

The bicategorical version of a natural isomorphism is a
\emph{pseudonatural (adjoint) equivalence}: a pseudonatural transformation in which each 1-cell component has the structure of an (adjoint) equivalence. 
  
\begin{definition}[{\eg~\cite{Stay2016}}]
\label{def:monoidal-bicategory}
A \emph{monoidal bicategory} is a bicategory $\B$ equipped with a 
pseudofunctor $\otimes : \B \times \B \to \B$ and an object $I \in \B$, 
together with: 
\begin{itemize}
    \item 
    	pseudonatural adjoint equivalences $\alpha, \lambda$ and $\rho$ with 
    	components 
    		$\alpha_{A, B, C} : (A \otimes B) \otimes C \to 
    			{A \otimes (B \otimes C)}$
    		(the \emph{associator}),
    		$\lambda_A : I \otimes A \to A$,
    	and
    		$\rho_A : A \otimes I \to A$
    		(the \emph{unitors}); and
    \item 
    	invertible modifications
    		$\mathfrak{p}, \mathfrak{l}, \mathfrak{m}$ and $\mathfrak{r}$
    	with components as in \Cref{fig:monoidal-bicat-modifications},
    	subject to coherence axioms.
      \end{itemize}
  \end{definition}
  
  \begin{figure}
  	\vspace{-3mm}
      \centering
  	\hspace{-6mm}
\begin{tikzcd}
	[column sep=1.3em, 
		execute at end picture={
					\foreach \nom in  {A,B,C, D,E}
		  				{\coordinate (\nom) at (\nom.center);}
					\fill[\monoidalpcolour,opacity=\opacity] 
		  				(A) -- (B) -- (C) -- (D) -- (E);
		}
		]
	\alias{A}
	{\left((A  B)  C\right)}  D 
	\arrow{r}{\alpha}
	\arrow[swap]{d}{\alpha D}
	\arrow[	"{\mathfrak{p}\:}"{left},
					from=2-2, 
					to=1-2,  
					Rightarrow, 
					shorten = 2pt
					]
	&
	\alias{B}
	(AB) (CD)
	\arrow{r}{\alpha}
	&
	\alias{C}
	A{\left(B(C D) \right)}
	\\
	\alias{E}
	{\left( A (BC) \right)}D
	\arrow[swap]{rr}{\alpha}
	&
	\:
	&
	\alias{D}
	A{\left((BC) D\right)}
	\arrow[swap]{u}{A\alpha}
\end{tikzcd}
\hspace{-3mm}
\begin{tikzcd}[
	column sep = .5em,
	execute at end picture={
								\foreach \nom in  {A,B,C}
					  				{\coordinate (\nom) at (\nom.center);}
								\fill[\monoidalmcolour,opacity=\opacity] 
					  				(A) -- (B) -- (C);
	  }
	]
\: &
\alias{C}
{A  B}
&
\: 
\\
\alias{A}
{(A  I) B} 
\arrow{ur}{\rho B}
\arrow[swap]{rr}{\alpha}
&
\:
& 
\alias{B}
{A (I B)} 
\arrow[name=0]{ul}[swap]{A\lambda}
\arrow["{\mathfrak{m}\:}"{}, 
 				shift right=1, 
 				shorten <=6pt, 
 				shorten >=8pt, 
 				Rightarrow, 
 				from=2-2, 
 				to=1-2]
\end{tikzcd}
\hspace{-2mm}
\begin{tikzcd}[
	column sep=1.5em,
	execute at end picture={
			\foreach \nom in  {A,B,C}
  				{\coordinate (\nom) at (\nom.center);}
			\fill[\monoidallcolour,opacity=\opacity] 
  				(A) -- (B) to[bend left = 36] (C);
		}
	]
\alias{A}
{(I A)  B} 
& 
\alias{B}
{I  (A  B)} \\
\alias{C}
{A B}
\arrow["\alpha", from=1-1, to=1-2]
\arrow[""{name=0, anchor=center, inner sep=0}, bend left = 25, "\lambda", from=1-2, to=2-1]
\arrow[""{name=1, anchor=center, inner sep=0}, "{\lambda  B}"', from=1-1, to=2-1]
\arrow["{\mathfrak{l}}"{}, 
	    					shift left=1,
	    					yshift = 1mm,
	    					shorten <=6pt, 
	    					shorten >=6pt, 
	    					Rightarrow, 
	    					from=1, 
	    					to=0]
\end{tikzcd}		
\hspace{-3mm}
\begin{tikzcd}[
	column sep = 1.5em, 
	execute at end picture={
											\foreach \nom in  {A,B,C}
								  				{\coordinate (\nom) at (\nom.center);}
											\fill[\monoidalrcolour,opacity=\opacity] 
								  				(A) to[bend right=38] (B) -- (C);
				  }
]
\alias{A}
(AB)I 
\arrow{r}{\alpha}
\arrow[bend right =30, name=0]{dr}[swap]{\rho}
&
\alias{C}
A(BI)
\arrow{d}{A\rho}
\\
\:
\arrow["{\mathfrak{r}}",
			Rightarrow, 
			xshift=6mm,
			yshift=1mm,
			shorten <=14pt, 
			shorten >=14pt, 
			from=2-1, 
			to = 1-2]
&
\alias{B}
AB
\end{tikzcd}
  	\caption{The structural modifications of a monoidal bicategory}
  	\label{fig:monoidal-bicat-modifications}
  	\vspace{-2mm}
  \end{figure}
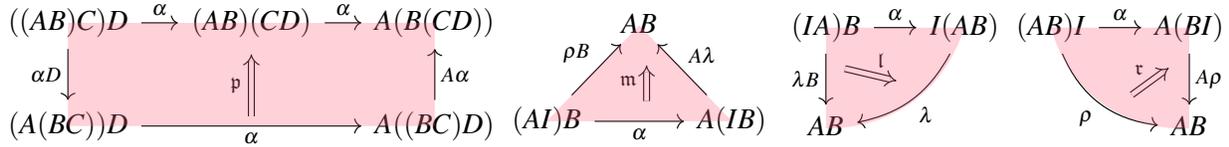

Monoidal bicategories have a technical algebraic definition but nonetheless
arise naturally.
For example, the cartesian product on the category
$\Set$ induces a monoidal structure on the bicategory $\Span(\Set)$.
Many other examples appear in a similar fashion: see~\cite{wester2019constructing}.

\paragraph{Coherence theorems.} 
\label{sec:coherence}

A careful reader might observe that the diagrams in \Cref{fig:monoidal-bicat-modifications} are not, strictly speaking, well-typed: for example, the anti-clockwise route around the diagram for $\pentagonator$ could denote 
	$(A\alpha \circ \alpha) \circ \alpha D$
or 
	$ A\alpha \circ (\alpha \circ \alpha D)$.
This is justified by a suitable \emph{coherence} theorem.

Typically, coherence theorems show that any two parallel 2-cells built out of the structural data are equal. 
Appropriate coherence theorems apply to bicategories~\cite{MacLane1985}, 
	pseudofunctors~\cite{Gurski2013}, 
	and
	(symmetric) monoidal bicategories%
		~(\cite{Gordon1995,Gurski2013,Gurski2013symmetricbicats}).
These results justify writing simply $\iso$ for composites of structural data in commutative diagrams of 2-cells, in much the same way as one does for monoidal categories.

As is common in the field, we rely heavily on the coherence of bicategories and pseudofunctors
when writing pasting diagrams of 2-cells. We omit all compositors and unitors for pseudofunctors, and  
ignore the weakness of 1-cell composition. 
Thus, even though our diagrams do not strictly type-check, coherence guarantees the resulting 2-cell is the same no matter how one fills in the structural details. 
For example, for a pseudofunctor $T$ on a monoidal bicategory we may write $T\mathfrak{l}$ as a 2-cell of type 
  $T(\lambda B) \To T\lambda \circ T\alpha$.
For a detailed justification see \eg%
	~\cite[Remark~4.5]{Gurski2013} or~\cite[\S2.2]{SchommerPries2009}.

\subsection{Premonoidal categories and Freyd categories}
\label{sec:strong-monads-intro}

Premonoidal categories generalize monoidal categories in that 
the tensor product $\tens$ is only functorial in each argument separately \cite{PowerRobinson1997}.
The lack of a monoidal ``interchange law'' reflects the fact that one cannot generally re-order the statements of an effectful program, even
if the data flow permits it. 
As a consequence, one can directly model effectful programs in a premonoidal category,
in the sense that a typed program
	$(\Gamma \vdash M : A)$ 
is modelled directly as an arrow
	$\Gamma \to A$ and
the result of substituting $M$ into another effectful program
	$(\Delta, x: A \vdash N : B)$ 
is modelled by the composite 
$
	 \Delta \otimes \Gamma 
	 	\xra{\Delta \otimes M}  
	 	\Delta \otimes A
	  	\xra{N} 
	  	B$. 
Thus, the composition of morphisms in a
premonoidal category should be understood as encoding control
flow. This is illustrated in Figure~\ref{fig:controlflow} using the
graphical calculus for premonoidal categories
(\cite{jeffrey1997premonoidal,DBLP:journals/corr/abs-2205-07664}),
where the dashed red line indicates control flow.
This direct interpretation contrasts with monadic approaches~(\cite{Moggi1989, Moggi1991}),
which rely on a monad whose structure may not be reflected in the syntax.

 \begin{wrapfigure}[16]{r}[0pt]{5cm}
\vspace{-2mm}
\begin{framed}
\centering
  \begin{minipage}{\linewidth}
  \centering
\scalebox{.95}{  \begin{tikzpicture}
\node[draw] (M) at (0, 2) {$M$};
\node[draw] (N) at (1, 1.2) {$N$};
\node[draw] (P) at (0.5, 0.2) {$\ \ \ \ P \ \ \ \ $};
\node (gam) at (0, 2.7) {$\Gamma$};
\draw[thick, color=red, dash pattern=on 3pt off 1pt] (0.2, 2.7) -- (0.2, 2.25);
\draw[thick, color=red, dash pattern=on 3pt off 1pt] (0.2, 1.75) to[out=270, in=90]
(0.85, 1.45);
\draw[thick, color=red, dash pattern=on 3pt off 1pt] (0.88, 0.95) to[out=270, in=90]
(0.88, 0.45);
\draw[thick, -latex, color=red, dash pattern=on 3pt off 1pt] (0.88, -0.05) to[out=270, in=90]
(0.88, -0.3);

\node (del) at (1, 2.7) {$\Delta$};
\node (C) at (0.5, -0.4) {$C$};

\draw (M) -- (0, 0.45);
\draw (N) -- (1, 0.45);
\draw (P) -- (C);
\draw (gam) -- (M);
\draw (del) -- (N);
\end{tikzpicture}
\ \  
\raisebox{1.5cm}{$\neq$}
\ \ 
\begin{tikzpicture}
\node[draw] (M) at (0, 1.2) {$M$};
\node[draw] (N) at (1, 2) {$N$};
\node[draw] (P) at (0.5, 0.2) {$\ \ \ \ P \ \ \ \ $};
\node (gam) at (0, 2.7) {$\Gamma$};
\draw[thick, color=red, dash pattern=on 3pt off 1pt] (0.8, 2.7) -- (0.8, 2.25);
\draw[thick, color=red, dash pattern=on 3pt off 1pt] (0.8, 1.75) to[out=270, in=90]
(0.2, 1.45);
\draw[thick, color=red, dash pattern=on 3pt off 1pt] (0.2, 0.95) to[out=270, in=90]
(0.2, 0.45);
\draw[thick, -latex, color=red, dash pattern=on 3pt off 1pt] (0.2, -0.05) to[out=270, in=90]
(0.2, -0.3);

\node (del) at (1, 2.7) {$\Delta$};
\node (C) at (0.5, -0.4) {$C$};

\draw (M) -- (0, 0.45);
\draw (N) -- (1, 0.45);
\draw (P) -- (C);
\draw (gam) -- (M);
\draw (del) -- (N);
\end{tikzpicture}}
\end{minipage}%
\\[-1mm]
\begin{minipage}{\linewidth}
\[   \begin{tikzcd}[column sep=0em, row sep=1em]
	& {\Gamma \otimes \Delta} \\
	{A \otimes \Delta} && {\Gamma \otimes B} \\
	& {A \otimes B} \\
	& C
	\arrow["{M \otimes \Delta}"', from=1-2, to=2-1]
	\arrow["{A \otimes N}"', from=2-1, to=3-2]
	\arrow["P", from=3-2, to=4-2]
	\arrow["{\Gamma \otimes N}", from=1-2, to=2-3]
	\arrow["{M \otimes B}", from=2-3, to=3-2]
	\arrow["{\neq}"{description}, draw=none, from=2-3, to=2-1]
      \end{tikzcd}
    \]
  \end{minipage}
\vspace{-2mm}
\end{framed}
\vspace{-3mm}
\captionsetup{justification=centering}
\caption{Failure of interchange in a premonoidal category. 
  }
\label{fig:controlflow}
\vspace{0mm}
 \end{wrapfigure}

We can axiomatize the morphisms for which 
interchange does hold. Let $\catD$ be a category equipped with functors
	$A \ltie (-) : \catD \longrightarrow \catD$
and
	$(-) \rtie B : \catD \longrightarrow \catD$
for every $A, B \in \catD$, 
such that
	$A \ltie B = A \rtie B$. 
We write $A \tens B$, or
        just $AB$, for their joint value, and $\binoidal\catD$ is called a
        \emph{binoidal category}. 
A map  $f : A \to A'$ in $\catD$ is \emph{central} if the two diagrams
\begin{equation}
 \label{eq:centrality-cat}
	\begin{tikzcd}[column sep = 3 em]
		A B 
		\arrow{r}{A \ltie g}
		\arrow[swap]{d}{f \rtie B}
		&
		A B' 
		\arrow{d}{f \rtie B'}
		\\
		A'  B
		\arrow[swap]{r}{A' \ltie g}
		&
		A' B'
	\end{tikzcd}
	\hspace{16mm}
	\begin{tikzcd}[column sep = 3 em]
		B   A
		\arrow{r}{g \rtie A}
		\arrow[swap]{d}{B \ltie f}
		&
		B' A
		\arrow{d}{B' \ltie f}
		\\
		B  A'
		\arrow[swap]{r}{g \rtie A'}
		&
		B'  A'
	\end{tikzcd}
\end{equation}
commute for every $g :B \to B'$.  
Semantically, $f$ corresponds to a computation which may be run at any
point without changing the observable result. 

A premonoidal category is a binoidal category $\binoidal\catD$ with central structural isomorphisms $\alpha, \lambda$ and $\rho$ similar to those in a monoidal category. 
Unlike with monoidal categories, however, the associator $\alpha$ cannot be a natural transformation in all arguments simultaneously, because $\tens$ is not a functor on
$\catD$. Instead, we must ask for naturality in each argument separately, so the following three diagrams commute:
\begin{equation}
\label{eq:naturality-for-alpha}
\begin{tikzcd}[column sep=4em, row sep = 1.5em]
  (AB)C 
  \arrow{r}{(f \rtie B) \rtie C}  
  \arrow[swap]{d}[xshift=0mm]{\alpha}  &
  (A'B)C 
  \arrow{d}[xshift=0mm]{\alpha} \\
  A(BC) 
  \arrow{r}[swap]{f \rtie (B C)} & 
  A'(BC)
\end{tikzcd}
\hspace{4mm}
\begin{tikzcd}[column sep=4em, row sep = 1.5em]
  (AB)C \arrow{d}[swap]{\alpha} \arrow{r}{(A \ltie g) \rtie C} & 
  (AB')C \arrow{d}{\alpha}\ \\
  A(BC) \arrow{r}[swap]{A \ltie (g \rtie C)}  & A(B'C)
\end{tikzcd}
\hspace{4mm}
\begin{tikzcd}[column sep=4em, row sep = 1.5em]
  (AB)C \arrow{d}[swap]{\alpha}  \arrow{r}{(A B) \ltie h}&
  (AB)C'  \arrow{d}{\alpha}  \\
  A(BC') \arrow[]{r}[swap]{A \ltie (B \ltie h)} & A(BC')
\end{tikzcd}
\end{equation}

\begin{definition}[\cite{PowerRobinson1997}]
\label{def:premonoidal-category}
A \emph{premonoidal category} is a binoidal category 
	$\binoidal\catD$ 
equipped with a unit object $I$ and central isomorphisms 
	$\rho_A : A I \to A$, 
	$\lambda_A : IA \to A$ 
and 
	$\alpha_{A, B, C} : (AB)C \to A  (B  C)$ 
for every $A, B, C \in \catD$, 
natural in each argument separately
and satisfying the axioms for a monoidal category.
\end{definition}

One important contribution of this paper is to bicategorify the
notion of central morphism. We will
see that as we move  from categories to bicategories centrality evolves from property to structure
(\Cref{def:bino-bicat-centr}).

\paragraph{Freyd categories.} 
When modelling call-by-value languages in premonoidal categories, it
is natural to think of the values as effect-free
computations. Semantically, this is captured by Freyd
categories~\cite{Power1997env}, which are  premonoidal categories
together with a choice of effect-free maps.

Precisely, a Freyd category consists of a monoidal
category $\catV$ (often \emph{cartesian} monoidal), a premonoidal
category $\catC$, and an identity-on-objects functor $\J : \catV \to
\catC$ that strictly preserves the tensor product and structural
morphisms, and such that every morphism $\J(f)$ is central in
$\catC$. 

Although every premonoidal category $\catD$ canonically induces a Freyd
category $\centreOf{\catD} \hookrightarrow \catD$, where $\centreOf{\catD}$ is the
subcategory of central maps (called the \emph{centre}), there are several reasons to consider Freyd
categories directly. First, it does not always make sense to regard all
central maps as values: for instance, in a language with
commutative effects (\eg~probability), \emph{all} computations are central. Second, functors between binoidal categories do not in general preserve central maps, whereas morphisms of Freyd categories include a functor between the categories of values specifying how values are sent to values.

\paragraph{Relationship to monad models.}
Freyd categories encompass the
strong monad semantics of call-by-value proposed by
Moggi~(\cite{Moggi1989, Moggi1991}). 
Indeed, if 
	$(\catC, \otimes,I)$ 
is symmetric monoidal, then any strength for a monad $(T, \mu, \eta)$ on $\catC$ induces a
premonoidal structure on the Kleisli category $\catC_T$, and $\eta
\circ (-) : \catC \to \catC_T$ becomes a Freyd category. Conversely,
a Freyd category corresponds to a monad whenever $\J$ has a right
adjoint \cite{PowerRobinson1997}. 
This adjoint is necessary if the programming language has
higher-order functions, but some `first-order' 
Freyd categories are not known to arise from a monad
	(\eg~\cite{Thielecke1997,Power2002, Staton2017}).

\subsection{Contributions and outline}

The central aim of this paper is to introduce definitions of 
	premonoidal bicategories (\Cref{def:premonoidal-bicategory}) 
and
	Freyd bicategories (\Cref{def:freyd-bicategory}).
Premonoidal structure relies on an adequate
notion of centrality for 1-cells and 2-cells in a bicategory (\Cref{def:bino-bicat-centr}). Freyd bicategories
then require a coherent assignment of centrality data, which leads to
subtle compatibility issues, outlined in  
\Cref{sec:premonoidal-bicategories} and \Cref{sec:freyd-bicategories}.
	
As ever with bicategorical definitions
 	(see~\eg~\cite[\S2.1]{SchommerPries2009}), 
the main difficulty is in ensuring the right axioms on the 2-cells.
We therefore give further justification for our definitions. 
On the one hand, we show that our definitions are not too strict: they capture natural examples, 
presented in \Cref{sec:first-examples-of-premonoidal-bicats} and \Cref{sec:examples-of-Freyd-bicategories}.
On the other hand, we show that our definitions are not too weak: the well-known correspondence between Freyd categories and actions~\cite{LevyBook} lifts to our setting
	(\Cref{sec:correspondence-theorem}).
We note that our definition of action is extracted from standard
higher-categorical constructions, and so our work connects to an already-existing and well-understood body of theory.

\vspace{3mm}

	The definition of premonoidal bicategory presented here is based on that in the ArXiv preprint~\cite{DBLP:journals/corr/abs-2304-11014}. 
	For reasons of space, we sketch only the proof of the main theorem 		
		\Cref{res:correspondence-theorem}
	here. 
	For more proofs, see the longer version of this paper, available on the authors' webpages.

\label{sec:bicat-theory}

\section{Premonoidal bicategories}
\label{sec:premonoidal-bicategories}

Just as in the categorical setting (\eg~\cite{PowerRobinson1997}), our starting point is \emph{binoidal} structure.

\begin{definition}
A \emph{binoidal bicategory} $\binoidal{\baseCat}$ is a bicategory $\baseCat$ with pseudofunctors 	
	$A \ltie (-)$
and
	$(-) \rtie B$
for every $A, B \in \baseCat$, such that $A \ltie B = A \rtie B$. 
We write $A \tens B$,  or just $AB$, for the joint value on objects.
\end{definition}

As is standard when moving from categories to bicategories, the category-theoretic property of centrality becomes extra structure in a binoidal bicategory. For the definition, we observe that the diagrams defining centrality (\ref{eq:centrality-cat}) amount to requiring that $f$ induces two natural transformations:
\begin{equation}
\label{eq:transformation-for-centrality}
\begin{aligned}
	\lc{f} : A \ltie (-) \To A' \ltie (-) &
	\qquad
	,
	\qquad
	\lc{f}_B :=  \big( A \ltie B = A \rtie B \xra{f \rtie B} A' \rtie B = A' \ltie B \big)  \\
	\rc{f} : (-) \rtie A \To (-) \rtie A' &
	\qquad
	,
	\qquad
	\rc{f}_B := \big( B \rtie A = B \ltie A \xra{B \ltie f} B \ltie A' = B \rtie A' \big)
\end{aligned}
\end{equation}
This lifts naturally to the bicategorical setting, and gives an immediate notion of centrality for 2-cells.

\vspace{2mm}
\noindent
\begin{minipage}{0.7\textwidth}
\begin{definition}
  \label{def:bino-bicat-centr}
  Let $\binoidal{\B}$ be a binoidal bicategory.
A 
	\emph{central 1-cell} 
is a 1-cell $f : A \to A'$ equipped with invertible 2-cells as on the right
for every $g: B \to B'$, such that the 1-cells in (\ref{eq:transformation-for-centrality}) are the components of pseudonatural transformations 
	$\lc{f} : A \ltie (-) \To A' \ltie (-)$
and 
	 $\rc{f} : (-) \rtie A \To (-) \rtie A'$.
A \emph{central 2-cell} $\sigma$ between central 1-cells
	$(f, \lc{f}, \rc{f})$ 
and 
	$(f', \lc{f'}, \rc{f'})$
  	is a 2-cell $\sigma : f \To f'$
such that the 2-cells $\sigma \rtie B$ and $B \ltie \sigma$
(for $B \in \baseCat$) define modifications
        $\lc{f} \To \lc{f'}$ 
and
        $\rc{f} \To \rc{f'}$, 
respectively.
\end{definition}
\end{minipage}
\hfill
\begin{minipage}{0.3\textwidth}
	\centering
	\begin{tikzcd}[
			row sep = 1em,
			column sep = 3em
		]
		A B 
		\arrow{r}{A \ltie g}
		\arrow[swap]{d}{f \rtie B}
		&
		A  B' 
		\arrow{d}{f \rtie B'}
		\arrow[draw=none, yshift=-8mm, swap, shorten=.5em]{l}{\lc{f}_{g}} %
		\\
		A'  B
		\arrow[swap]{r}{A' \ltie g}
		&
		A' B'
	\end{tikzcd}
	\begin{tikzcd}[
				row sep = 1em,
				column sep = 3em
			]
		B   A
		\arrow{r}{g \rtie A}
		\arrow[swap]{d}{B \ltie f}
		&
		B'  A
		\arrow{d}{B' \ltie f}
		\arrow[draw=none, yshift=-8mm, swap, shorten=.5em]{l}{\rc{f}_{g}} 
		\\
		B  A'
		\arrow[swap]{r}{g \rtie A'}
		&
		B' A'
\end{tikzcd}              
\end{minipage}
\vspace{2mm}

Every monoidal bicategory $\monoidal{\baseCat}$ has a canonical
binoidal structure, with  $\ltie$ and $\rtie$ directly
induced from the monoidal structure by fixing one argument. 
Every 1-cell $f$ in $\B$ is canonically central, 
with $\lc{f}_g$ given by the interchange isomorphism induced by the pseudofunctor structure of $\tens$, and  $\rc{f}_g$ by $(\lc{g}_f)^{-1}$:
\begin{equation}
	\label{eq:centrality-from-interchange}
		\lc{f}_g :=
		\big(
			(f \tens B') \circ (A \tens g) 
			\XRA{\iso}
			(f \tens g)
			\XRA{\iso}
			(A' \tens g) \circ (f \tens B)
		\big).
\end{equation}
By the functoriality of $\otimes$, every 2-cell is
central with respect to this structure.

We will define premonoidal bicategories as binoidal bicategories with
central structural equivalences. As in \Cref{def:premonoidal-category}, the
associator $\alpha$ for the tensor product can only be 
pseudonatural in each argument separately, because $\tens$ is
not a functor of two arguments.  
We therefore need a family of equivalences
	$\alpha_{A, B, C} : (A \tens B) \tens C \to A \tens (B \tens C)$ 
together with invertible 2-cells 
	$\cellOf{\alpha}_{f, B, C},
		\cellOf{\alpha}_{A, g, C}$
and 
	$\cellOf{\alpha}_{A, B, h}$
filling the three squares in 
	(\ref{eq:2-cells-for- alpha}),
so that we get three families of pseudonatural transformations:
\begin{align}
\label{eq:2-cells-for- alpha}
  (\alpha_{-, B, C}, \cellOf{\alpha}_{-, B, C})& : (- \rtie B) \rtie C \To (-) \rtie (B \tens C) \nonumber \\
  (\alpha_{A, -, C}, \cellOf{\alpha}_{A, -, C}) &: (A \ltie -) \rtie C \To A \ltie (- \rtie C) \\
  (\alpha_{A, B, -}, \cellOf{\alpha}_{A, B, -}) &: (A \tens B) \ltie (-) \To A \ltie (B \ltie -) \nonumber 
\end{align}

A premonoidal bicategory also involves structural modifications
corresponding to those of \Cref{fig:monoidal-bicat-modifications}.
Here the 2-dimensional structure introduces new subtleties. 
For example, one side of  modification
$\mathfrak{l}$ in \Cref{fig:monoidal-bicat-modifications}
uses the pseudonatural transformation with components 
	$\lambda_A \tens B  : (\tensu A)  B \to A B$. 
For $g : B \to B'$, the 2-cell witnessing pseudonaturality of this
transformation is the canonical isomorphism that interchanges $\lambda_A$ and $g$.
This 2-cell does not exist in a premonoidal bicategory, so instead we
must use the centrality witness $\lc{\lambda_A}_g$ for
$\lambda_A$. 
Thus, we define $\mathfrak{l}$ to be a family of 2-cells 
	$\mathfrak{l}_{A, B} : (\lambda_A \rtie B) \To \lambda_{A
          \tens B} \circ \alpha_{\tensu, A, B}$, pictured on the left below,
       inducing modifications in $\homBicat{\B}{\B}$ of both types
        on the right below:
\begin{equation*}
	\begin{tikzcd}[column sep = .8em, row sep = 1em]
	 (\tensu  A)  B 
	 \arrow{rr}{\lambda \rtie B}
	 \arrow[bend right = 12, swap]{dr}{\alpha_{\tensu, A, B}}
	 &
	 \:
	 \arrow[phantom]{d}{  \mathfrak{l}_{A, B} }
	 &
	 A B
	 \\
	 \: 
	 &
	 \tensu (A  B)
	 \arrow[bend right = 12, swap]{ur}{\lambda_{A  \tens B}}
	 \: 
       \end{tikzcd}
	\qquad
  \begin{tikzcd}[column sep = -1em, row sep = 1em]
	 (\tensu \ltie -) \ltie B 
	 \arrow{rr}{\lambda \rtie B}
	 \arrow[bend right = 12, swap]{dr}{\alpha_{\tensu, -, B}}
	 &
	 \:
	 \arrow[phantom]{d}{  \mathfrak{l}_{-, B} }
	 &
	 (- \rtie B)
	 \\
	 \: 
	 &
	 \tensu \ltie (- \rtie B)
	 \arrow[bend right = 12, swap]{ur}{\lambda_{- \rtie B}}
	 \: 
	\end{tikzcd}
        \qquad
	\begin{tikzcd}[column sep = -1em, row sep = 1em]
	(\tensu A) \ltie (-)
	\arrow{rr}{  \lc{\lambda} }
	\arrow[bend right = 12, swap]{dr}{\alpha_{\tensu, A, -}}
	&
	\:
	\arrow[phantom]{d}{  \mathfrak{l}_{A, -} }
	&
	(A \ltie -)
	\\
	\: 
	&
	\tensu \ltie (A \rtie -)
	\arrow[bend right = 12, swap]{ur}{\lambda_{A \ltie -}}
	\: 
	\end{tikzcd}
\end{equation*}
Notice that the middle diagram appears exactly as in the definition of
a monoidal bicategory; no adjustments are necessary because each
transformation is pseudonatural in the open argument without any
assumptions of centrality.

Modulo the subtleties just outlined, our main definition is a natural extension of the categorical one. We abuse notation by saying ``$f$ is central'' to mean $f$ comes with chosen $\lc{f}$ and $\rc{f}$ making $(f, \lc{f}, \rc{f})$ a central 1-cell and saying
	``the pseudonatural transformation $\eta$ is central'' 
to mean each 1-cell component $\eta_A$ is central. 
\begin{definition}
	\label{def:premonoidal-bicategory}
  A \emph{premonoidal bicategory} is a binoidal bicategory
  $\binoidal{\baseCat}$ equipped with a unit object $\tensu \in \B$,
  together with the following data:
	\begin{enumerate}
	\item 	
		For every $A \in \baseCat$, central pseudonatural adjoint equivalences
			$\lambda_A : \tensu \ltie A \to A$
		and
			$\rho_A : A \rtie \tensu \to A$;
        \item  
        	For every $A, B, C \in \baseCat$, an adjoint equivalence 
        		$\alpha_{A, B, C} : (A \tens B)  \tens C \to A \tens (B \tens C)$
        	with 2-cells as in \eqref{eq:2-cells-for- alpha} inducing central pseudonatural equivalences in each component separately;
	\item 
		For each $A, B, C, D \in \baseCat$, invertible central 2-cells
			$\mathfrak{p}_{A, B, C, D}, \mathfrak{m}_{A, B}, \mathfrak{l}_{A, B}$ 
		and $\mathfrak{r}_{A, B}$, forming
                modifications in each argument as in
          		\Cref{fig:premonoidal-bicat-modifications} 
          	or, if not shown there, as in a monoidal bicategory.
	\end{enumerate}
	This data is subject to the same equations between 2-cells as in a monoidal bicategory. %
\end{definition}

Note that we cannot ask for the 2-cell components of the structural transformations to be central: for example, $\cellOf{\rho}_f$ has type 
	$\rho_{A'} \circ (f \rtie \tensu) \To f \circ \rho_A$, 
but $f$ may not be a central map. 
Also note that, although we have changed the conditions for 
$\mathfrak{p}, \mathfrak{m}, \mathfrak{l}$ and $\mathfrak{r}$
to be modifications, their type as 2-cells has not changed, and thus
the equations for a monoidal bicategory are still well-typed.

\begin{figure*}
		\begin{center}
	\vspace{-3mm}
	\begin{tikzcd}[scalenodes=0.9, column sep=1em, row sep=.9em]
			\: 
			&
			(-B) (CD)
			\arrow{dr}{\alpha_{-, B, CD}}
			&
			\:
			\\
			{\left((-  B)  C\right)}  D 
			\arrow{ur}{\alpha_{-B, C, D}}
			\arrow[swap]{d}{\alpha_{-, B, C} D}
			\arrow[	"{\mathfrak{p}_{-, B, C, D}\:}"{left},
							xshift=6mm,
							from=3-2, 
							to=1-2,  
							Rightarrow, 
							shorten <= 10pt, 
							shorten >= 10pt
							]
			&
			\:
			&
			(-){\left(B(C D) \right)}
			\\
			{\left( - (BC) \right)}D
			\arrow[swap]{rr}{\alpha_{-, BC, D}}
			&
			\:
			&
			(-){\left((BC) D\right)}
			\arrow[swap]{u}{\rc{\alpha}}
		\end{tikzcd}
		\hspace{7mm}
	\begin{tikzcd}[scalenodes=0.9, column sep=1em, row sep=.9em]
			\: 
			&
			(AB) (C-)
			\arrow{dr}{\alpha_{A, B, C-}}
			&
			\:
			\\
			{\left((A  B)  C\right)}  (-) 
			\arrow{ur}{\alpha_{AB, C, -}}
			\arrow[swap]{d}{\lc{\alpha}}
			\arrow[	"{\mathfrak{p}_{A, B, C, -}\:}"{left},
							xshift=6mm,
							from=3-2, 
							to=1-2,  
							Rightarrow, 
							shorten <= 10pt, 
							shorten >= 10pt
							]
			&
			\:
			&
			A{\left(B(C -) \right)}
			\\
			{\left( A (BC) \right)}(-)
			\arrow[swap]{rr}{\alpha_{A, BC, -}}
			&
			\:
			&
			A{\left((BC) (-)\right)}
			\arrow[swap]{u}{A\alpha_{B, C, -}}
		\end{tikzcd}		
		\end{center}
		\vspace{-3mm}
		\begin{center}
		\hspace{-3mm}
		\begin{tikzcd}[scalenodes=0.9, column sep = 0em, row sep = .8em]
				\: &
				 {(-)  B}
				 &
				 \: 
				 \\
				{(-  I) B} 
				\arrow[bend left]{ur}{\rho B}
				\arrow[swap]{rr}{\alpha_{-, \tensu, B}}
				&
				\:
				& 
				{(-) (I B)} 
				\arrow[name=0, bend right]{ul}[swap]{\rc{\lambda}}
				 \arrow["{\mathfrak{m}_{-, B}\:}"{}, 
				 				shift right=1, 
				 				yshift=1mm,
				 				shorten <=2pt, 
				 				shorten >=4pt, 
				 				Rightarrow, 
				 				from=2-2, 
				 				to=1-2]
		\end{tikzcd}
		\begin{tikzcd}[scalenodes=0.9, column sep = 0em, row sep = .9em]
				\: &
				 {A (-)}
				 &
				 \: 
				 \\
				{(A  I) (-)} 
				\arrow[bend left]{ur}{\lc{\rho}}
				\arrow[swap]{rr}{\alpha_{A, I, -}}
				&
				\:
				& 
				{A (I -)} 
				\arrow[name=0, bend right]{ul}[swap]{A\lambda}
				 \arrow["{\mathfrak{m}_{A,-}\:}"{}, 
				 				shift right=1, 
				 				shorten <=4pt, 
				 				shorten >=2pt, 
				 				Rightarrow, 
				 				from=2-2, 
				 				to=1-2]
				\end{tikzcd}
			\hspace{0mm}
		\begin{tikzcd}[scalenodes=0.9, column sep=2em, row sep = .9em]
	    {(I A)  (-)} 
	    & 
	    {I  (A  -)} \\
	    {A (-)}
	    \arrow["\alpha_{I, A, -}", from=1-1, to=1-2]
	    \arrow[""{name=0, anchor=center, inner sep=0}, bend left, "\lambda", from=1-2, to=2-1]
	    \arrow[""{name=1, anchor=center, inner sep=0}, "{\lc{\lambda}}"', from=1-1, to=2-1]
	    \arrow["{\mathfrak{l}_{A, -}}"{}, 
			    					shift left=0,
			    					yshift = 1mm,
			    					shorten <=6pt, 
			    					shorten >=18pt, 
			    					Rightarrow, 
			    					from=1, 
			    					to=0]
			\end{tikzcd}
			\hspace{-1mm}
			\begin{tikzcd}[scalenodes=0.9, column sep=2em, row sep = .9em]
				(-B)I 
				\arrow{r}{\alpha_{-, B, \tensu}}
				\arrow[bend right, name=0]{dr}[swap]{\rho_{-B}}
				&
				(-)(BI)
				\arrow{d}[xshift=.5mm]{\rc{\rho}}
				\\
				\:
				\arrow["{\mathfrak{r}_{-, B}}",
							Rightarrow, 
							xshift=5mm,
							yshift=1mm,
							shorten <=16pt, 
							shorten >=16pt, 
							from=2-1, 
							to = 1-2]
				&
				AB
				\end{tikzcd}
				\vspace{-2mm}
				\end{center}  %
	\caption{
		Modification axioms for the structural 2-cells of a premonoidal bicategory, 
			where they differ from those of a monoidal
                        bicategory. (To save space we suppress $\ltie$ and $\rtie$: these can be inferred.)
	}
	\label{fig:premonoidal-bicat-modifications}
\end{figure*}

Just as every premonoidal category has a centre, so does every premonoidal bicategory. 

\begin{definition} 
	For a premonoidal category $\premonoidal{\baseCat}$, denote by
	$\centreOf \baseCat$ the bicategory with the same objects, whose
	1-cells and 2-cells are the central 1-cells and central 2-cells in
	$\baseCat$. 
	Composition is defined using composition in $\homBicat{\B}{\B}$, and the identity on $A$ is $\Id_A$ with the identity transformations.
\end{definition}

The pseudofunctors $A \ltie (-)$ and $(-) \rtie B$ lift to the centre. Because 
	$A \ltie (-)$ 
is  a pseudofunctor, then for any central 1-cell
	$(f, \lc{f}, \rc{f})$
we already have pseudonatural transformations
	$A \ltie \lc{f}$
and 
	$A \ltie \rc{f}$
in $\B$.
To disambiguate between these transformations and the action of $A \ltie (-)$ on central 1-cells, we denote the latter by
	$A \ltie (f, \lc{f}, \rc{f})
		:= (A \ltie f, \lc{A \ltie f}, \rc{A \ltie f})$, 
and likewise for $(-) \rtie B$.

\begin{proposition}
  \label{res:assoc-binoidal-bicat-has-binoidal-centre}
  Let $\premonoidal{\baseCat}$ be a premonoidal bicategory. 
  	For every $A, B \in \baseCat$ the operations  
  		$A \ltie (-)$
  	and 
  		$(-) \rtie B$
  	induce pseudofunctors on~$\centreOf\B$.
\end{proposition}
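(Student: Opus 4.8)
The plan is to show that the pseudofunctor $A \ltie (-) : \B \to \B$ restricts to a pseudofunctor on $\centreOf\B$ lying over the forgetful pseudofunctor $U : \centreOf\B \to \B$ (strictly on underlying objects, $1$-cells and $2$-cells, and sending compositors and unitors to those of $A \ltie (-)$). This reduces the statement to three checks: (i) that $A \ltie f$ carries canonical centrality data whenever $(f, \lc{f}, \rc{f})$ is a central $1$-cell; (ii) that $A \ltie \sigma$ is central whenever $\sigma$ is; and (iii) that the compositors and unitors of $A \ltie (-) : \B \to \B$ are central $2$-cells, between the composite and identity central $1$-cells computed in $\centreOf\B$. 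Granted these, the two pseudofunctor axioms for $A \ltie (-) : \centreOf\B \to \centreOf\B$ are equations of $2$-cells which, since a central $2$-cell is a $2$-cell with a property, are reflected by the locally faithful $U$ from the corresponding axioms in $\B$. The pseudofunctor $(-) \rtie B$ is treated symmetrically, using the associator pseudonatural in its first argument in place of its last.

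The heart of the matter is (i). Given a central $1$-cell $(f, \lc{f}, \rc{f}) : B \to B'$, \Cref{def:bino-bicat-centr} requires, for each $g : C \to C'$, an invertible $2$-cell
\[
\lc{A \ltie f}_g \colon \big((A \tens B') \ltie g\big) \circ \big((A \ltie f) \rtie C\big) \To \big((A \ltie f) \rtie C'\big) \circ \big((A \tens B) \ltie g\big),
\]
these being the pseudonaturality witnesses of a transformation whose $1$-cell components are the $(A \ltie f) \rtie C$. My plan is to conjugate the whiskered transformation $A \ltie \lc{f}$ by the pseudonatural adjoint equivalences $\alpha_{A, B, -}$ and $\alpha_{A, B', -}$ of \eqref{eq:2-cells-for- alpha}, i.e. to build it from the composite
\[
(A \tens B) \ltie (-) \XRA{\alpha_{A, B, -}} A \ltie (B \ltie -) \XRA{A \ltie \lc{f}} A \ltie (B' \ltie -) \XRA{\psinv{\alpha_{A, B', -}}} (A \tens B') \ltie (-).
\]
Concretely I would define $\lc{A \ltie f}_g$ as the pasting of $A \ltie (\lc{f}_g)$ with the associator cells $\cellOf{\alpha}_{A, B, g}$, $\cellOf{\alpha}_{A, B', g}$, the cells $\cellOf{\alpha}_{A, f, C}$ and $\cellOf{\alpha}_{A, f, C'}$ witnessing pseudonaturality of the associator in its middle argument, and the units of $\alpha_{A, B, -}$ and $\alpha_{A, B', -}$; by coherence for monoidal bicategories and pseudofunctors the bracketing and order of these structural cells is immaterial. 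The two pseudonaturality coherence conditions on the family $\big(\lc{A \ltie f}_g\big)_g$ should then reduce, again by coherence, to those for $\lc{f}$ transported through the pseudofunctor $A \ltie (-)$ together with pseudonaturality of the associator cells. Defining $\rc{A \ltie f}$ by the mirror construction (conjugating $\rc{f}$ whiskered by $(-) \rtie A$ through $\alpha_{-, A, B}$ and $\alpha_{-, A, B'}$) gives the central $1$-cell $A \ltie (f, \lc{f}, \rc{f}) := (A \ltie f, \lc{A \ltie f}, \rc{A \ltie f})$.

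For (ii), centrality of $\sigma : (f, \lc{f}, \rc{f}) \To (f', \lc{f'}, \rc{f'})$ says $\sigma \rtie C$ and $C \ltie \sigma$ are modifications $\lc{f} \To \lc{f'}$ and $\rc{f} \To \rc{f'}$; applying $A \ltie (-)$ produces modifications $A \ltie \lc{f} \To A \ltie \lc{f'}$ and $A \ltie \rc{f} \To A \ltie \rc{f'}$, and conjugating with the transport cells of (i) --- which is compatible with modifications, these cells being pseudonaturality witnesses --- shows $(A \ltie \sigma) \rtie C$ and $C \ltie (A \ltie \sigma)$ are modifications of the required types. For (iii), composition and identities in $\centreOf\B$ are computed in $\homBicat{\B}{\B}$ (up to the evident re-indexing of components along pseudofunctor compositors), so that since $A \ltie (-)$ is a pseudofunctor the needed modification identities follow from naturality of the transport cells and coherence. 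The step I expect to be the real obstacle is the component bookkeeping that pervades (i)--(iii): \Cref{def:bino-bicat-centr} insists the $1$-cell components of $\lc{(-)}$ and $\rc{(-)}$ be \emph{literally} of the form $h \rtie C$ and $C \ltie h$, whereas whiskering and conjugation only deliver composites isomorphic to these, so one must consistently re-index along the associator cells $\cellOf{\alpha}$ and check the re-indexings are compatible with everything else. Coherence makes this harmless in principle, but carrying it out carefully is where the work lies.
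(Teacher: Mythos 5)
Your proposal matches the paper's proof: the paper likewise defines $\lc{X \ltie f}_g$ as the unique invertible 2-cell obtained by conjugating $X \ltie \lc{f}_g$ through the associator equivalences, pasting with the pseudonaturality cells $\cellOf{\alpha}$ in the middle and last arguments (its equation (\ref{eq:diag-bowtie-on-centre}) is exactly your transport construction, phrased as a uniquely-solvable equation since $\alpha$ is an equivalence and all cells are invertible), and then treats the remaining centrality witnesses symmetrically and relegates the modification and pseudofunctor axioms to routine verification. Your additional remarks on central 2-cells, compositors/unitors, and the re-indexing bookkeeping fill in precisely those routine steps, so there is no substantive divergence.
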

\begin{proof}[Proof sketch]
We only sketch the action of $X \ltie (-)$ and $(-) \rtie X$ on a central 1-cell 
$(f, \lc{f}, \rc{f}) : A \to A'$. For $g : B \to B'$, the 2-cell
$\lc{X\ltie f}_g$ is uniquely determined by the equation
 	\begin{equation}
  		\label{eq:diag-bowtie-on-centre}
  		\begin{tikzcd}
	&& {} \\
	{} &&& {} \\
	&& {} \\
	{} &&& {} \\
	& {}
	\arrow["{(Xf)B}", from=2-1, to=1-3]
	\arrow[""{name=0, anchor=center, inner sep=0}, "{(XA')g}"{description}, from=1-3, to=3-3]
	\arrow["\alpha", from=1-3, to=2-4]
	\arrow["\alpha"{description}, from=3-3, to=4-4]
	\arrow[""{name=1, anchor=center, inner sep=0}, "{X(A'g)}", from=2-4, to=4-4]
	\arrow[""{name=2, anchor=center, inner sep=0}, "{X(fB')}"', from=5-2, to=4-4]
	\arrow["\alpha"', from=4-1, to=5-2]
	\arrow[""{name=3, anchor=center, inner sep=0}, "{(Xf)B'}"{description}, from=4-1, to=3-3]
	\arrow[""{name=4, anchor=center, inner sep=0}, "{(XA)g }"', from=2-1, to=4-1]
	\arrow["{\lc{X\ltie f}_g}"{description}, draw=none, from=4, to=0]
	\arrow["{\cellOf{\alpha}}"{description}, draw=none, from=0, to=1]
	\arrow["{\cellOf{\alpha}}"{description}, draw=none, from=3, to=2]
      \end{tikzcd}
    \quad  = \quad\begin{tikzcd}
	&& {} \\
	{} &&& {} \\
	& {} \\
	{} &&& {} \\
	& {}
	\arrow[""{name=0, anchor=center, inner sep=0}, "{(Xf)B}", from=2-1, to=1-3]
	\arrow["\alpha", from=1-3, to=2-4]
	\arrow[""{name=1, anchor=center, inner sep=0}, "{X(A'g)}", from=2-4, to=4-4]
	\arrow["{X(fB')}"', from=5-2, to=4-4]
	\arrow["\alpha"', from=4-1, to=5-2]
	\arrow[""{name=2, anchor=center, inner sep=0}, "{(XA)g }"', from=2-1, to=4-1]
	\arrow[""{name=3, anchor=center, inner sep=0}, "{X(f B)}"{description}, from=2-4, to=3-2]
	\arrow["\alpha"{description}, from=2-1, to=3-2]
	\arrow[""{name=4, anchor=center, inner sep=0}, "{X(Ag')}"{description}, from=3-2, to=5-2]
	\arrow["{X\ltie\lc{f}_g}"{description}, draw=none, from=4, to=1]
	\arrow["{\cellOf{\alpha}}"{description}, draw=none, from=2, to=4]
	\arrow["{\cellOf{\alpha}}"{description}, draw=none, from=0, to=3]
\end{tikzcd}

  		\vspace{-3mm}
              \end{equation}
in which, for clarity, we have omitted object names and left implicit the functors
$\ltie, \rtie$, which can be inferred. This is a valid definition for
$\lc{X \ltie f}_g$ because $\alpha$ is an equivalence, and all 2-cells involved are
invertible.
We similarly construct 2-cells $\lc{f \rtie X}_g$, $\rc{X \ltie f}_g$,
$\rc{f \rtie X}_g$. The rest of the proof consists of routine
verifications.

\end{proof}

\subsection{Examples of premonoidal bicategories}
\label{sec:first-examples-of-premonoidal-bicats}

\paragraph{State-passing style.}
Power \& Robinson motivate their definition of premonoidal categories by considering an uncurried version of the State monad~\cite{PowerRobinson1997}: for a symmetric monoidal category $\monoidal\catC$ and an object $S \in \catC$ modelling a set of states, one can model a program from $A$ to $B$ interacting with the state as a morphism 	
	$S \tens A \to S \tens B$.
The same applies bicategorically.

\begin{restatable}
	[{\cf~\cite{PowerRobinson1997}, \cite[Example~A.1]{Levy2003}}]
	{lemma}
	{StatePassingStylePremonoidal}
\label{res:premonoidal-structure-from-writer-monad}
Let 
	$\monoidal{\B}$
be a symmetric monoidal bicategory (\eg~\cite{Stay2016})
and $S \in \B$. Define a bicategory $\K$ with the same objects as $\B$, hom-categories 
	$\K(A, B) := \B(S \tens A, S \tens B)$,
and composition and identities as in $\B$. 
Then $\K$ admits a canonical premonoidal structure.
\end{restatable}

For the binoidal structure, one whiskers with the canonical pseudonatural equivalences:
\begin{equation}
\begin{aligned}
\label{eq:binoidal-structure-for-State}
	f \rtie B &:= 
			\big(
				S(AB)
				\xra{\simeq}
				(SA)B
				\xra{f \tens B}
				(SA')B
				\xra{\simeq}
				S(A'B)
			\big) \\
	A \ltie g &:= 
		\big(
			S(AB)
			\xra{\simeq}
			A(SB)
			\xra{A \tens g}
			A(SB')
			\xra{\simeq}
			S(AB')
		\big) 
\end{aligned}
\end{equation}
The structural transformations are then given by composing the structural transformations in $\B$ with the naturality 2-cells for the equivalences in (\ref{eq:binoidal-structure-for-State}).

\paragraph*{Bistrong graded monads.}

It is well-known that if a monad $T$ on a monoidal category $\monoidal\catC$ is \emph{bistrong}, meaning that it is equipped with a left strength 
	$t_{A,B} : A \tens TB \to T(A \tens B)$
and a right strength
	$s_{A, B} : T(A) \tens B \to T(A \tens B)$,
and these strengths are compatible in the sense that the two canonical maps 
	$(A \tens T(B)) \tens C \to T{\big(A \tens  (B \tens C)\big)}$
are equal,
then $\catC_T$ is premonoidal 
(see~\eg~\cite{McDermott2022}).
(This definition is obscured in the symmetric setting, because if $\catC$ is symmetric every strong monad is canonically bistrong.)
A similar fact applies to the Kleisli bicategory $\Kl_T$ for a graded monad defined in \Cref{sec:case-for-bicategories}.
To state this we need to define bistrong graded monads: we make a small adjustment to Katsumata's definition of strong graded monads
	\cite[Definition~2.5]{Katsumata2014}.
An endofunctor $T : \catC\to \catC$ equipped with two strengths 
	$t$ %
and 
	$s$ %
which are compatible in the sense above is called  \emph{bistrong} (see~\eg~\cite{McDermott2022}).
\begin{definition}%
	A \emph{bistrong graded monad} on a monoidal category $\monoidal\catC$ consists of a monoidal category 
		$(\catE, \bullet, \tensu)$ 
	of grades and a lax monoidal functor 
		$T : \catE \to [\catC, \catC]_{\text{bistrong}}$, 
	where
		$[\catC, \catC]_{\text{bistrong}}$
	is the category of bistrong endofunctors and natural transformations that commute with both strengths (see~\eg~\cite{McDermott2022}).
\end{definition}

Thus, a bistrong graded monad is a graded monad equipped with natural transformations
	$t_{A, B}^e :  A \tens T_e(B) \to T_e(A \tens B)$
and
	$s_{A, B}^e : T_e(A) \tens B \to T_e(A \tens B)$
for every grade $e$, compatible with the graded monad structure and with maps between grades.
\label{res:l-r-on-graded-monad-bicat}
One  then obtains strict pseudofunctors 
	$A \ltie (-), (-) \rtie B : \Kl_T \to \Kl_T$
for every $A, B \in \Kl_T$, defined similarly to the premonoidal structure on a Kleisli category:
\begin{equation*}
\begin{aligned}
	A \ltie g \:\:&=\:\: 
		\big(
			AB 
				\xra{A \tens g} 
			AT_e(B')
				\xra{t^e}
			T_e(AB')
		\big)	 
\quad 
,
\quad
	f \rtie B \:\:&=\:\: 
		\big(
			AB 
				\xra{f \tens B}
			T_e(A')B
				\xra{s^e}
			T_e(A'B)
		\big).
\end{aligned}
\end{equation*}
Moreover, every $f \in \catC(A, A')$  determines a `pure' 1-cell
in $\Kl_T$, as
	$
		\para{f} := \big(A \xra{f} A' {\xra{\eta_{A'}} T_I A'} \big).
	$
This 1-cell canonically determines a central 1-cell, with 
	$\lc{\para{f}}$ 
and 
	$\rc{\para{f}}$
given by the canonical isomorphism in $\catC$; 
in particular, 
	$\lc{\tilde{f}}_{\tilde{g}} = (\rc{\tilde{g}}_{\tilde{f}})^{-1}$
for every $g \in \catC(B, B')$.
The structural transformations are then all of the form $\widetilde{\sigma}$ for $\sigma$ a structural transformation in $\catE$, and the structural modifications are all canonical isomorphisms of the form $\tensu^{\tens i} \xra\iso \tensu^{\tens j}$ for $i, j \in \Nat$. 
Summarizing, we have the following. 

\begin{restatable}{proposition}{GradedMonadKleisliIsPremonoidal}
Let $(T, \mu, \eta)$ be a bistrong graded monad on $\monoidal\catC$ 
	with grades $(\catE, \bullet, \tensu)$.
Then the bicategory $\Kl_T$ has a canonical choice of premonoidal structure. 

\end{restatable}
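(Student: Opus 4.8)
Most of the data has already been exhibited in the discussion preceding the statement; the plan is to organise it into the shape required by \Cref{def:premonoidal-bicategory} and then verify the axioms. We take the binoidal structure to be the strict pseudofunctors $A \ltie (-)$ and $(-) \rtie B$ built from $t^e$ and $s^e$, and the structural adjoint equivalences to be the pure lifts $\lambda_A := \para{\lambda^\catC_A}$, $\rho_A := \para{\rho^\catC_A}$ and $\alpha_{A,B,C} := \para{\alpha^\catC_{A,B,C}}$, each graded by the unit $\tensu$ of $\catE$. The assignment $\para{(-)}$ is the cell part of an identity-on-objects pseudofunctor $\catC \to \Kl_T$ sending isomorphisms to adjoint equivalences (with pseudoinverse $\para{g^{-1}}$ and unit and counit assembled from $\eta$, $\mu$ and the unitors of $\catE$), so each of $\lambda_A, \rho_A, \alpha_{A,B,C}$ is an adjoint equivalence in $\Kl_T$; being pure, each is canonically central, the centrality witnesses being the canonical isomorphisms in $\catC$ noted above. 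The 2-cells $\cellOf{\alpha}_{f,B,C}$, $\cellOf{\alpha}_{A,g,C}$, $\cellOf{\alpha}_{A,B,h}$ filling \eqref{eq:2-cells-for- alpha}, and likewise the naturality cells for $\lambda$ and $\rho$, come from naturality of $\alpha^\catC$, $\lambda^\catC$, $\rho^\catC$ in $\catC$ combined with the pseudonaturality of the strengths $s$ and $t$ over grade-regradings. Finally, $\mathfrak{p}, \mathfrak{l}, \mathfrak{m}, \mathfrak{r}$ are the unique regradings $\tensu^{\bullet i} \xrightarrow{\iso} \tensu^{\bullet j}$ in $\catE$ assembled from the unitors and associator of $(\catE, \bullet, \tensu)$; they are well-defined 2-cells between the relevant composites of pure 1-cells by Mac Lane coherence for $\monoidal\catC$ on underlying maps, and are central because every 1-cell involved is pure.

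What remains is to check the axioms. That $\mathfrak{p}, \mathfrak{l}, \mathfrak{m}, \mathfrak{r}$ form modifications in each argument---as in \Cref{fig:premonoidal-bicat-modifications}, or as in a monoidal bicategory where not shown there---is a compatibility between these chosen regradings, the pseudonaturality cells of $s$ and $t$, and the laxity constraints of $T$; this is exactly where the hypothesis that $T$ is \emph{bistrong with compatible strengths} is used, just as in the one-dimensional case. The equations between 2-cells demanded of a (pre)monoidal bicategory then reduce, on underlying maps, to the pentagon and triangle coherence of $\monoidal\catC$, and at the level of grades to the coherence of $(\catE, \bullet, \tensu)$ together with the lax monoidal functor axioms for $T$; with the coherence theorems for bicategories and pseudofunctors, this finishes the verification.

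I expect the main obstacle to be precisely the modification axioms for $\mathfrak{p}, \mathfrak{l}, \mathfrak{m}, \mathfrak{r}$, together with the bicategorical pentagon and triangle equations: these force one to reconcile simultaneously the ``pure'' 2-cells inherited from $\catC$, the grade-level regradings from $\catE$, and the pseudonaturality data of the two strengths, and it is the compatibility condition on $s$ and $t$---the bicategorical analogue of the equality of the two maps $(A \tens T(B)) \tens C \to T(A \tens (B \tens C))$---that makes these diagrams commute. The remaining verifications are routine diagram chases of the kind already sketched in the proof of \Cref{res:assoc-binoidal-bicat-has-binoidal-centre}.
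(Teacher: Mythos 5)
Your proposal is correct and follows essentially the same route as the paper: the binoidal structure from the two strengths, the structural equivalences as pure lifts $\para{\lambda^\catC}, \para{\rho^\catC}, \para{\alpha^\catC}$ with canonical centrality witnesses, the structural modifications as canonical regradings $\tensu^{\bullet i} \iso \tensu^{\bullet j}$ in $\catE$, and the axioms reduced to coherence for $\catC$ and $\catE$ together with the lax monoidal functor laws for $T$ and the compatibility of the two strengths. The paper only sketches this construction in the surrounding discussion (deferring full verification to the long version), and your organisation of that data, including where the bistrongness hypothesis is actually needed, matches its account.
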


\paragraph{\emph{Un}natural transformations.}
For any category $\catC$ the category 
	$[\catC, \catC]_{u}$
of functors and \emph{un}natural transformations 
	(\ie~families of maps $\sigma_C: {FC \to GC}$ with no further conditions) 
is strictly premonoidal.  
This is almost by definition, because  Power \& Robinson define a strict premonoidal category to be a monoid with respect to the funny tensor product $\tens$ on the category $\Cat$~\cite{PowerRobinson1997}. 
A version holds bicategorically.

\begin{restatable}[{\cf~\cite{PowerRobinson1997}}]
	{lemma}
	{PremonoidalStructureOnUnnatTrans}
For any bicategory $\B$, let 
	$[\B, \B]_u$
denote the bicategory with objects pseudofunctors $F : \B \to \B$, 1-cells 
	$F \to G$
families of maps 
	$\{ \sigma_B : FB \to GB \st B \in \B \}$,
and 2-cells $\sigma \To \tau$ families of 2-cells
	$\{ \modif_B : \sigma_B \To \tau_B  \st B \in \B \}$.
Then 
	$[\B, \B]_u$
admits a premonoidal structure given by composition.
\end{restatable}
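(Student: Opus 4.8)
The plan is to equip $[\B, \B]_u$ with the maximally strict premonoidal structure whose tensor product is composition of pseudofunctors; because pseudofunctor composition is strictly associative and unital, essentially all of the data required by \Cref{def:premonoidal-bicategory} can then be taken to be identities, and the axioms become trivial.

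First I would check that $[\B,\B]_u$ is a bicategory at all: horizontal composition of 1-cells $F \xra{\sigma} G \xra{\tau} H$ is taken componentwise, $(\tau\circ\sigma)_B := \tau_B \circ \sigma_B$, horizontal and vertical composition of 2-cells are likewise componentwise, and the associator and unitor 2-cells are inherited componentwise from those of $\B$; the bicategory axioms then hold because they hold in $\B$. Next I would set up the binoidal structure. On objects put $F \tens G := F \circ G$, with unit $\tensu := \Id_{\B}$. The assignment $(-) \rtie B : F \mapsto F \circ B$ is a \emph{strict} pseudofunctor, acting on a 1-cell $f : F \to F'$ by the reindexed family $(f \rtie B)_C := f_{BC} : F(BC) \to F'(BC)$ (precomposition with $B$) and similarly on 2-cells. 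The assignment $A \ltie (-) : G \mapsto A \circ G$ sends a 1-cell $g : G \to G'$ to the family $(A \ltie g)_C := A(g_C) : A(GC) \to A(G'C)$, and a 2-cell $\modif$ to $A(\modif_C)$; its compositor and unitor are given componentwise by those of the pseudofunctor $A$, so the pseudofunctor axioms for $A \ltie (-)$ reduce componentwise to those for $A$. Since $A \ltie B = A\circ B = A \rtie B$ on the common object, this makes $\binoidal{[\B,\B]_u}$ a binoidal bicategory.

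The key observation is that composition of pseudofunctors is \emph{strict}: $(F\circ G)\circ H = F\circ(G\circ H)$ and $\Id_{\B} \circ F = F = F\circ \Id_{\B}$ on the nose, compositors and unitors included. Hence the pseudofunctors $\tensu \ltie (-)$, $(-)\rtie\tensu$, $(- \rtie B)\rtie C$, $(-)\rtie(B\tens C)$, and so on, are literally \emph{equal}; so I can take $\lambda_A := \Id_A$, $\rho_A := \Id_A$, $\alpha_{A,B,C} := \Id_{A\circ B\circ C}$, with identity 2-cells for the witnesses $\cellOf{\alpha}_{f,B,C}$, $\cellOf{\alpha}_{A,g,C}$, $\cellOf{\alpha}_{A,B,h}$, identity centrality data $\lc{(-)}$ and $\rc{(-)}$ for each structural equivalence (as in \Cref{def:bino-bicat-centr}), and identity structural modifications $\mathfrak{p}, \mathfrak{m}, \mathfrak{l}, \mathfrak{r}$. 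Every pseudonaturality square, modification condition, centrality condition, and monoidal-bicategory axiom then states that an identity equals an identity, so all hold, and $[\B,\B]_u$ is a premonoidal bicategory. This parallels Power \& Robinson's characterization of strict premonoidal categories as monoids for the funny tensor product on $\Cat$.

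I expect the only real work to be bookkeeping rather than mathematics: verifying carefully that $A \ltie (-)$ is indeed a pseudofunctor (with structure cells inherited pointwise from $A$), confirming that pseudofunctor composition really is strict so that the trivial structural data type-checks against \Cref{def:premonoidal-bicategory}, and keeping track of the somewhat involved types of the various induced transformations and modifications in each argument separately. None of these steps is difficult, only a little tedious.
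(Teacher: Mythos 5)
Your proposal is correct and takes essentially the same route the paper intends: the text introducing this lemma points to Power \& Robinson's characterization of strict premonoidal categories as monoids for the funny tensor product, and the bicategorical proof is exactly your observation that composition of pseudofunctors is strictly associative and unital, so $\alpha$, $\lambda$, $\rho$ and the structural modifications can all be taken to be (essentially) trivial. The one imprecision is the claim that every witness is a literal identity 2-cell: since 1-cell composition in $[\B,\B]_u$ is inherited componentwise from $\B$ and hence weak, cells such as $\lc{\Id_A}_g : \Id \circ (A \ltie g) \To (A \ltie g) \circ \Id$ and the pseudonaturality squares for $\lambda = \Id$ are canonical unitor composites rather than identities, and the axioms then hold by coherence of bicategories and pseudofunctors --- which is the convention the paper adopts throughout and which you implicitly invoke.
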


\section{Freyd bicategories}
\label{sec:freyd-bicategories}

We build up to our definition of Freyd bicategories in
stages. Although the
bicategories of values and computations have the same objects and
their structures are tightly connected, 
bicategories offer a range of levels of strictness, so we must make
careful choices. 

We begin with a useful technical notion for relating two pseudofunctors
which agree on objects:

\begin{definition}[\cite{Lack2008Icons}]
For pseudofunctors $F, G : \B \to \C$ which agree on
objects, an \emph{icon} $\theta : F \to G$ is an oplax natural transformation
whose 1-cell components are all identity. More explicitly, $\theta$
is a family of 2-cells $\theta_f : F(f) \to G(f)$ indexed by 1-cells of
$\B$, subject to naturality, identity and composition laws. 
\end{definition}

Using this, we define a notion of strict morphism between binoidal bicategories.

\vspace{1mm}
\noindent
\begin{minipage}{0.7\textwidth}
\begin{definition}
Let $\binoidal{\V}$ and $\binoidal{\B}$ be binoidal bicategories. A 
	\emph{0-strict binoidal pseudofunctor} 
is a pseudofunctor $\J : \V \to \B$
together  with families of invertible icons $\theta^A$ and
$\zeta^A$ (for $A \in \B$) as on the right; their existence implicitly requires that $\J(A \tens B) = \J A \tens \J B$. 
\end{definition}
\end{minipage}
\hfill
\begin{minipage}{0.3\textwidth}
	\vspace{-3mm}
\[
	\begin{tikzcd}[
		row sep = 2em, 
		column sep = 3em,
		execute at end picture={
						\foreach \nom in  {A,B,C, D}
			  				{\coordinate (\nom) at (\nom.center);}
						\fill[\extcolour,opacity=\opacity] 
			  				(A) -- (B) -- (C) -- (D) -- (A);
	}   			
	]
		\alias{A} \V
		\arrow[phantom]{dr}{\twocellIso\zeta^A}
		\arrow{r}{ (-) \rtie A}
		\arrow[swap]{d}{\J}
		&
		\V 
		\arrow{d}[description]{\J}
		&
		\alias{D} \V  
		\arrow[phantom]{dl}{\twocellIso\theta^A}
		\arrow{l}[swap]{A \ltie (-)}
		\arrow{d}{\J}
		\\
		\alias{B} \B 
		\arrow[swap]{r}{(-) \rtie \J A}
		&
		\B 
		&
		\alias{C} \B
		\arrow{l}{\J A \ltie (-)}
	\end{tikzcd}
      \]
\end{minipage}
\vspace{1mm}

It is crucial that we take preservation up to icons, and not up to identity. In the context of \Cref{res:premonoidal-structure-from-writer-monad}, for instance, we get a 0-strict binoidal pseudofunctor 
	$S \tens (-) :  \B \to \K$
with icons $\theta$ and $\zeta$ constructed using the pseudonaturality of the equivalences in (\ref{eq:binoidal-structure-for-State}).
However, these icons do strictly commute with the premonoidal structure of $\K$ by the coherence of symmetric monoidal bicategories~\cite{Gurski2013symmetricbicats}.
This suggests the following; for simplicity we focus on the case where $\J$ is identity-on-objects.  

\begin{definition}
\label{def:premonoidal-pseudofunctor}
Let $\premonoidal{\V}$ and $\premonoidal{\B}$ be premonoidal
bicategories with the same objects and unit $I$. An \emph{identity-on-objects, 
	0-strict premonoidal pseudofunctor}  
$\V \to \B$ is a 0-strict binoidal pseudofunctor $(\J, \theta, \zeta)$ such that $\J$ is
identity-on-objects and the following axioms hold:
\begin{enumerate}
\item 
	$\J$ strictly preserves the components of the structural transformations:
		for each $A, B, C \in \B$ we have 
			$\J \alpha_{A,B,C} =  \alpha_{A,B,C}, \J \lambda_A = \lambda_A$, 
		and $\J\rho_A = \rho_A$;
\item
	$\J$ preserves structural 2-cells up to the icons $\theta$ and $\zeta$, according to the axioms in 
		\Cref{sec:equations-for-freyd-bicats}.
\end{enumerate}
\end{definition}

A Freyd bicategory is an identity-on-objects 
	0-strict premonoidal pseudofunctor from a monoidal bicategory
        of values
        to a premonoidal bicategory of computations,
        together with a choice of centrality witnesses for every
        value. This choice must be functorial, coherent, and compatible
        with the interchange law whenever two values are being
        interchanged. We formalize this in terms of a strict
        factorization through the centre $\centreOf \B$, as is done for
        Freyd categories \cite{Levy2003}. (Unlike for Freyd categories,
        this factorization is additional structure and not a
        property of the premonoidal pseudofunctor.)

\begin{definition}
  \label{def:freyd-bicategory}
  A \emph{Freyd bicategory} $\freydCat$ consists of a monoidal
  bicategory $\monoidal\V$, a premonoidal bicategory, $\premonoidal{\B}$, an
  identity-on-objects, 0-strict premonoidal pseudofunctor $\J : \V \to \B$, and a binoidal pseudofunctor $\J_{\centre}$ factoring $\J$ through the
  centre of $\B$, as pictured below, 
    \[\begin{tikzcd}
      \V && \B \\
      & \centreOf\B
      \arrow["\mathrm{forget}"', from=2-2, to=1-3]
      \arrow["\J", from=1-1, to=1-3]
      \arrow["{\J_\centre}"', dashed, from=1-1, to=2-2]
    \end{tikzcd}\]
 such that the following axioms hold, where we write 
 	$(\J f, \lc{\J f}, \rc{\J f})$ 
 for
  	$\J_\centre(f)$:
 \begin{enumerate}
 \item The chosen centrality witnesses for the structural 1-cells agree with those in the premonoidal structure of $\B$.%

 \item 	
 	For each value $f : A \to A'$, the chosen $\lc{\J f}$ satisfies the following
        compatibility law for every $g \in \B(B, B')$ and $X\in
        \B$. (This complements  the constructions of
        \Cref{res:assoc-binoidal-bicat-has-binoidal-centre}.)
\[\begin{tikzcd}
	&& {} \\
	{} &&& {} \\
	&& {} \\
	{} &&& {} \\
	& {}
	\arrow["{(fB)X}", from=2-1, to=1-3]
	\arrow["\alpha", from=1-3, to=2-4]
	\arrow[""{name=0, anchor=center, inner sep=0}, "{A'(gX)}", from=2-4, to=4-4]
	\arrow["{f(B'X)}"', from=5-2, to=4-4]
	\arrow[""{name=1, anchor=center, inner sep=0}, "\alpha"', from=4-1, to=5-2]
	\arrow[""{name=2, anchor=center, inner sep=0}, "{(Ag)X }"', from=2-1, to=4-1]
	\arrow[""{name=3, anchor=center, inner sep=0}, "{(A'g)X}"{description}, from=1-3, to=3-3]
	\arrow["{(fB')X}"{description}, from=4-1, to=3-3]
	\arrow[""{name=4, anchor=center, inner sep=0}, "\alpha"{description}, from=3-3, to=4-4]
	\arrow["{\cellOf{\alpha}}"{description}, draw=none, from=1, to=4]
	\arrow["{\cellOf{\alpha}}"{description}, draw=none, from=3, to=0]
	\arrow["{\lc{\J f}_g \rtie X}"{description}, draw=none, from=2, to=3]
      \end{tikzcd}
      \quad = \quad
      \begin{tikzcd}
	&& {} \\
	{} &&& {} \\
	& {} \\
	{} &&& {} \\
	& {}
	\arrow[""{name=0, anchor=center, inner sep=0}, "{(fB)X}", from=2-1, to=1-3]
	\arrow["\alpha", from=1-3, to=2-4]
	\arrow[""{name=1, anchor=center, inner sep=0}, "{A'(gX)}", from=2-4, to=4-4]
	\arrow["{f(B'X)}"', from=5-2, to=4-4]
	\arrow["\alpha"', from=4-1, to=5-2]
	\arrow[""{name=2, anchor=center, inner sep=0}, "{(Ag)X }"', from=2-1, to=4-1]
	\arrow[""{name=3, anchor=center, inner sep=0}, "{f(BX)}"{description}, from=2-4, to=3-2]
	\arrow["\alpha"{description}, from=2-1, to=3-2]
	\arrow[""{name=4, anchor=center, inner sep=0}, "{A(gX)}"{description}, from=3-2, to=5-2]
	\arrow["{\cellOf{\alpha}}"{description}, draw=none, from=2, to=4]
	\arrow["{\lc{\J f}_{g\rtie X}}"{description}, draw=none, from=4, to=1]
	\arrow["{\cellOf{\alpha}}"{description}, draw=none, from=0, to=3]
\end{tikzcd}\]
 \item \label{c:l-equals-r}
   For values 
   	$x : X \to X'$ and $y : Y \to Y'$,  
   $\lc{\J x}_{\J y}$ and $\rc{\J y}_{J x}$ are determined by the interchange law in $\V$~(\ref{eq:centrality-from-interchange}):
   \vspace{-2mm}
   \[
   \vspace{-2mm}
\begin{tikzcd}[
		execute at end picture={
						\foreach \nom in  {A,B,C}
			  				{\coordinate (\nom) at (\nom.center);}
						\fill[\extcolour,opacity=\opacity] 
			  				(A) to[curve={height=12pt}] (B) to[curve={height=12pt}] (A);
						\fill[\extcolour,opacity=\opacity] 
			  				(B) to[curve={height=12pt}] (C) to[curve={height=12pt}] (B);
	}   			
	]
	\alias{A} XY & \alias{B} {XY'} & \alias{C} {X'Y'} \\
	& {X'Y}
	\arrow["{\J(x) \rtie Y}"', curve={height=6pt}, from=1-1, to=2-2]
	\arrow["{\J(Xy)}", curve={height=-10pt}, from=1-1, to=1-2]
	\arrow["{X' \ltie \J y}"', curve={height=6pt}, from=2-2, to=1-3]
	\arrow["{\J(xY')}", curve={height=-10pt}, from=1-2, to=1-3]
	\arrow[curve={height=10pt}, from=1-2, to=1-3]
	\arrow[curve={height=10pt}, from=1-1, to=1-2]
	\arrow["\theta"{description}, draw=none, from=1-1, to=1-2]
	\arrow["\zeta"{description}, draw=none, from=1-2, to=1-3]
	\arrow["{\rc{\J y}_{\J x}}"{description, yshift = -1mm}, draw=none, from=1-2, to=2-2]
	\arrow[
			swap,
			rounded corners,
			to path=
			{ -- ([yshift=.8cm]\tikztostart.north)
			-| ([yshift=.8cm]\tikztotarget.south)
			-- ([yshift=.6cm]\tikztotarget.south)
			}, 
			from=1-1, to=1-3
		]		
	\arrow[
			"\iso"{description, yshift=6mm},
			from=1-1, to=1-3,
			draw = none,
		]		
	\arrow[
			"\J(xy)"{description, yshift=12mm},
			from=1-1, to=1-3,
			draw = none,
		]				
\end{tikzcd}
\hspace{2mm}
=
\hspace{2mm}
\begin{tikzcd}[
		execute at end picture={
						\foreach \nom in  {A,B,C}
			  				{\coordinate (\nom) at (\nom.center);}
						\fill[\extcolour,opacity=\opacity] 
			  				(A) to[curve={height=12pt}] (B) to[curve={height=12pt}] (A);
						\fill[\extcolour,opacity=\opacity] 
			  				(B) to[curve={height=12pt}] (C) to[curve={height=12pt}] (B);
	}   			
	]
	\alias{A} XY  & \alias{B} {X'Y} & \alias{C} {X'Y'}
	\arrow[""{name=0, anchor=center, inner sep=0}, "{\J(xY)}", curve={height=-10pt}, from=1-1, to=1-2]
	\arrow[""{name=1, anchor=center, inner sep=0}, "{\J(X'y)}", curve={height=-10pt}, from=1-2, to=1-3]
	\arrow[""{name=2, anchor=center, inner sep=0}, "{X' \ltie \J y}"', curve={height=10pt}, from=1-2, to=1-3]
	\arrow[""{name=3, anchor=center, inner sep=0}, "{\J(x) \rtie Y}"', curve={height=10pt}, from=1-1, to=1-2]
	\arrow["\theta"{description}, draw=none, from=1, to=2]
	\arrow["\zeta"{description}, draw=none, from=0, to=3]
	\arrow[
			swap,
			rounded corners,
			to path=
			{ -- ([yshift=.8cm]\tikztostart.north)
			-| ([yshift=.8cm]\tikztotarget.south)
			-- ([yshift=.6cm]\tikztotarget.south)
			}, 
			from=1-1, to=1-3
		]			
	\arrow[
			"\iso"{description, yshift=6mm},
			from=1-1, to=1-3,
			draw = none,
		]		
	\arrow[
			"\J(xy)"{description, yshift=12mm},
			from=1-1, to=1-3,
			draw = none,
		]					
\end{tikzcd}
\hspace{2mm}
=
\hspace{2mm}
\begin{tikzcd}[
		execute at end picture={
						\foreach \nom in  {A,B,C}
			  				{\coordinate (\nom) at (\nom.center);}
						\fill[\extcolour,opacity=\opacity] 
			  				(A) to[curve={height=12pt}] (B) to[curve={height=12pt}] (A);
						\fill[\extcolour,opacity=\opacity] 
			  				(B) to[curve={height=12pt}] (C) to[curve={height=12pt}] (B);
	}   			
	]
	\alias{A} XY & \alias{B} {XY'} & \alias{C} {X'Y'} \\
	& {X'Y}
	\arrow["{\J(x) \rtie Y}"', curve={height=6pt}, from=1-1, to=2-2]
	\arrow["{\J(Xy)}", curve={height=-10pt}, from=1-1, to=1-2]
	\arrow["{X' \ltie \J y}"', curve={height=6pt}, from=2-2, to=1-3]
	\arrow["{\J(xY')}", curve={height=-10pt}, from=1-2, to=1-3]
	\arrow[curve={height=10pt}, from=1-2, to=1-3]
	\arrow[curve={height=10pt}, from=1-1, to=1-2]
	\arrow["\theta"{description}, draw=none, from=1-1, to=1-2]
	\arrow["\zeta"{description}, draw=none, from=1-2, to=1-3]
	\arrow["{\lc{\J x}_{\J y}}"{description, yshift=-1mm}, draw=none, from=1-2, to=2-2]
	\arrow[
			swap,
			rounded corners,
			to path=
			{ -- ([yshift=.8cm]\tikztostart.north)
			-| ([yshift=.8cm]\tikztotarget.south)
			-- ([yshift=.6cm]\tikztotarget.south)
			}, 
			from=1-1, to=1-3
		]		
	\arrow[
			"\iso"{description, yshift=6mm},
			from=1-1, to=1-3,
			draw = none,
		]		
	\arrow[
			"\J(xy)"{description, yshift=12mm},
			from=1-1, to=1-3,
			draw = none,
		]					
\end{tikzcd}
\]
 \end{enumerate}        
\end{definition}

\subsection{Examples of Freyd bicategories}
\label{sec:examples-of-Freyd-bicategories}

Two of the examples of premonoidal bicategories from \Cref{sec:first-examples-of-premonoidal-bicats} naturally yield Freyd bicategories. 
First, in the context of \Cref{res:premonoidal-structure-from-writer-monad}, we have a pseudofunctor 
	$S \tens (-) : \B \to \K$ 
and icons $\theta$ and $\zeta$ constructed using the equivalences defining the binoidal structure 	
	(recall~(\ref{eq:binoidal-structure-for-State})).
Moreover, coherence for symmetric monoidal bicategories~\cite{Gurski2013symmetricbicats} gives a unique choice of 2-cell for each $\lc{S \tens f}_g$ and $\rc{S \tens f}_g$, so $S \tens (-)$ factors through the centre, yielding the following.

\begin{restatable}
	{lemma}
	{FreydFromState}
\label{res:Freyd-bicategory-from-state-monad}
Let 
	$\monoidal{\B}$
be a symmetric monoidal bicategory and let $\K$ be the premonoidal bicategory defined in 
	\Cref{res:premonoidal-structure-from-writer-monad}.
Then the pseudofunctor 
	$S \tens (-)$
defines a Freyd bicategory
	$\B \to \K$.
\end{restatable}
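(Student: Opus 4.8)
The plan is to instantiate \Cref{def:freyd-bicategory} with the monoidal bicategory of values taken to be $\monoidal{\B}$ itself, the premonoidal bicategory of computations taken to be $\K$, and the identity-on-objects pseudofunctor taken to be $\J := S \tens (-)$; this pseudofunctor is identity-on-objects because $\K$ has the same objects as $\B$ and $\J$ fixes each object. First I would exhibit the witnessing icons $\theta^A$ and $\zeta^A$ making $\J$ a $0$-strict binoidal pseudofunctor: their components are the pseudonaturality $2$-cells of the canonical equivalences occurring in \eqref{eq:binoidal-structure-for-State} (for instance, $\theta^A_g$ relates $S \tens (A \tens g)$ to the composite written for $A \ltie g$ there), and they are invertible structural isomorphisms of the symmetric monoidal bicategory $\B$; the icon axioms---naturality in $g$, preservation of identities and of composites---then hold by coherence for symmetric monoidal bicategories \cite{Gurski2013symmetricbicats}.

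Next I would verify that $(\J, \theta, \zeta)$ is a $0$-strict \emph{premonoidal} pseudofunctor in the sense of \Cref{def:premonoidal-pseudofunctor}. Strict preservation of the $1$-cell components of $\alpha$, $\lambda$ and $\rho$ is immediate from the way the structural equivalences of $\K$ were built in \Cref{res:premonoidal-structure-from-writer-monad} (apply $S \tens (-)$ to the structural equivalences of $\B$, then compose with the coherence data of the equivalences in \eqref{eq:binoidal-structure-for-State}); preservation of the structural $2$-cells up to $\theta$ and $\zeta$, as spelled out in \Cref{sec:equations-for-freyd-bicats}, reduces in each instance to an equality between two composites of structural $2$-cells sharing a source and target $1$-cell, so again follows from coherence.

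The final and most delicate step is to construct the binoidal pseudofunctor $\J_\centre$ factoring $\J$ through $\centreOf\K$. For a value $\J f = S \tens f$ with $f : A \to A'$ in $\B$, and an arbitrary $1$-cell $g : B \to B'$ of $\K$ (that is, $g : S \tens B \to S \tens B'$ in $\B$), I would take $\lc{\J f}_g$ and $\rc{\J f}_g$ to be the canonical isomorphisms of the symmetric monoidal bicategory $\B$ sliding $f$---which touches only the $A$-factor---past $g$---which touches the $S \tens B$-block; coherence for symmetric monoidal bicategories \cite{Gurski2013symmetricbicats} guarantees these exist and are unique, making the choice canonical. The remaining obligations---pseudonaturality of $\lc{\J f}$ and $\rc{\J f}$, functoriality and pseudofunctor coherence of $f \mapsto (\J f, \lc{\J f}, \rc{\J f})$, and the three axioms of \Cref{def:freyd-bicategory}, including agreement with \eqref{eq:centrality-from-interchange} when both $1$-cells are values---are all equalities between parallel composites of structural $2$-cells together with the interchange isomorphism of $\tens$, and I expect them to follow from coherence. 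The hard part will be organisational rather than conceptual: one must keep track of the many coherence $2$-cells and check at each step that coherence genuinely applies, i.e.\ that the competing $2$-cells are built only from structural isomorphisms and---at the points where the arbitrary $1$-cell $g$ intervenes---from naturality of the interchange isomorphism, which holds in any monoidal bicategory.
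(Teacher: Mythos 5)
Your proposal is correct and follows essentially the same route as the paper: the icons $\theta$ and $\zeta$ are obtained from the pseudonaturality 2-cells of the equivalences in \eqref{eq:binoidal-structure-for-State}, and the factorization through $\centreOf\K$ is given by taking each $\lc{\J f}_g$ and $\rc{\J f}_g$ to be the 2-cell that coherence for symmetric monoidal bicategories determines uniquely. The paper's own argument is only a brief sketch of exactly these two steps, so your write-up is a faithful (and somewhat more explicit) version of it, including the correct caveat that coherence must be applied in the form covering interchange/pseudonaturality constraints at the occurrences of the non-structural 1-cells $f$ and $g$.
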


Similarly, for a bistrong graded monad $T$, we can think of morphisms
in the base monoidal category $\catC$ as parameterized maps with
trivial parameter space, to construct a Freyd bicategory. The
identity-on-objects pseudofunctor has action on morphisms determined by
$\J(f) := \para{f} = \eta \circ f$.
The structural icons $\theta$ and $\zeta$ are the identity, and $\J$ factors strictly through the centre because every $ \para{f}$ has a canonical choice of centrality data.

\begin{restatable}{proposition}{FreydOnGradedKleisli}
Let $(T, \eta, \mu)$ be a bistrong graded monad on $\monoidal\catC$ 
	with grades $(\catE, \bullet, \tensu)$.
Then, writing $\d\catC$ for the monoidal category $\catC$ viewed as a locally-discrete monoidal 2-category, there exists a canonical choice of pseudofunctor $\J$ making 
	$\J : \d\catC \to \Kl_T$
a Freyd bicategory. 
\end{restatable}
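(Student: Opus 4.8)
The plan is to equip $\Kl_T$ with the canonical premonoidal structure established above and then to verify, stage by stage, the conditions of \Cref{def:premonoidal-pseudofunctor} and \Cref{def:freyd-bicategory} for the identity-on-objects pseudofunctor $\J$ defined on a morphism $f : A \to A'$ of $\catC$ by $\J(f) := \para f = \eta_{A'} \circ f$. The argument parallels that of \Cref{res:Freyd-bicategory-from-state-monad}, except that here the centrality data on the image of $\J$ comes from the canonical centrality data on pure $1$-cells rather than from coherence of symmetric monoidal bicategories.

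First I would check that $\J$ is a well-defined pseudofunctor $\d\catC \to \Kl_T$. Since $\d\catC$ is locally discrete, $\J$ sends every $2$-cell to an identity, so the only data to provide are the unitor and compositor isomorphisms: the unitor comes from the left unit law of the monad, which makes $\para{\id_A} = \eta_A$ the Kleisli identity at grade $\tensu$, and the compositor $\para g \circ \para f \To \para{g \circ f}$ comes from the monad laws together with the lax unitor of $T$ identifying $\tensu \bullet \tensu$ with $\tensu$; coherence of these isomorphisms follows from the monad axioms and the coherence of $\catE$. As $\J$ is identity-on-objects and $\J(A \tens B) = \J A \tens \J B$, it remains to exhibit the icons of a $0$-strict binoidal pseudofunctor: these are the identities, since on a morphism $g$ the equation $\theta^A_g = \id$ unfolds to the left-strength unit axiom $t^\tensu_{A, B} \circ (A \tens \eta_B) = \eta_{A \tens B}$ and $\zeta^A_g = \id$ to the corresponding axiom for the right strength $s$. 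The premonoidal pseudofunctor axioms of \Cref{def:premonoidal-pseudofunctor} then reduce to bookkeeping: strict preservation of $\alpha, \lambda, \rho$ holds because in $\Kl_T$ these $1$-cells are pure lifts of the structural maps of $\catC$, and strict preservation of the structural $2$-cells $\mathfrak{p}, \mathfrak{l}, \mathfrak{m}, \mathfrak{r}$ (which are identities in $\d\catC$) holds because, after cancelling the compositor isomorphisms of $\J$, the corresponding modifications of $\Kl_T$ are exactly the canonical grade-isomorphisms $\tensu^{\tens i} \xra{\iso} \tensu^{\tens j}$ determined by the monoidal coherence of $\catE$.

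Next I would supply the factorization through the centre. Set $\J_\centre(f) := (\para f, \lc{\para f}, \rc{\para f})$, using the canonical centrality data on $\para f$ recalled above, so that $\mathrm{forget} \circ \J_\centre = \J$ by construction. That $\para f$ is a central $1$-cell has already been observed; that $\J_\centre$ is a binoidal pseudofunctor follows because the compositor $2$-cells of $\J$ are central (their components are canonical isomorphisms of $\catC$) and because the lifts of $A \ltie (-)$ and $(-) \rtie B$ to $\centreOf{\Kl_T}$ from \Cref{res:assoc-binoidal-bicat-has-binoidal-centre} act on $\para f$ through the strengths, compatibly with $\lc{\para f}$ and $\rc{\para f}$. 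Finally, the three axioms of \Cref{def:freyd-bicategory} hold: axiom~1 because the premonoidal structure of $\Kl_T$ already takes the centrality witnesses of its structural $1$-cells to be the canonical ones, which $\J_\centre$ reproduces; and axioms~2 and~3 because $\d\catC$ arises from an honest monoidal $1$-category, in which the interchange law holds strictly, so that the interchange equivalence of \eqref{eq:centrality-from-interchange} for $\d\catC$ is the identity and $\lc{\para x}_{\para y} = (\rc{\para y}_{\para x})^{-1}$ is the canonical isomorphism of $\catC$ --- whence both diagrams unwind to instances of the coherence of $\catC$ and $\catE$ together with the bistrength compatibility axiom.

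The step I expect to be the main obstacle is the second stage: organizing the compositor isomorphisms of $\J$, the structural modifications of $\Kl_T$, and the trivial modifications of $\d\catC$ so that preservation of structural $2$-cells holds strictly up to the (identity) icons. What makes this tractable is that every $1$-cell in the image of $\J$ carries the trivial grade $\tensu$, so all grade data occurring in these diagrams is governed by the monoidal coherence of $\catE$, while the underlying tensor is the genuine bifunctor $\tens$ of $\catC$, so all interchange-type equations hold on the nose.
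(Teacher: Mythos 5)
Your proposal follows essentially the same route as the paper: the identity-on-objects pseudofunctor $\J(f) := \para{f} = \eta \circ f$, identity icons $\theta$ and $\zeta$ (justified by the strength unit axioms), and strict factorization through the centre via the canonical centrality data on pure $1$-cells, with the Freyd axioms reducing to coherence in $\catC$ and $\catE$ together with the already-noted identity $\lc{\para{x}}_{\para{y}} = (\rc{\para{y}}_{\para{x}})^{-1}$. The added detail on the compositor of $\J$ and the unfolding of the icons is consistent with the paper's construction.
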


Finally, recall the unnatural
transformations discussed in \Cref{sec:first-examples-of-premonoidal-bicats}: although one could expect the inclusion 
	$\iota : [\B, \B] \hookrightarrow [\B, \B]_u$
to be a Freyd bicategory, this is not true even in the categorical
setting: it is not the case that every natural transformation is
central, so $\iota$ does not factor through the centre.

\section{Freyd bicategories and actions}
\label{sec:freyd-bicats-and-actions}

\label{sec:actions}

Freyd categories may equivalently be defined as certain actions of monoidal
categories (\eg~\cite{LevyBook}). In this section we show that this is also possible
in the two-dimensional setting. 

We first define actions of monoidal bicategories.
 As observed in~\cite{JanelidzeKelly2001}, a left action on a category is equivalently a bicategory with two objects and certain hom-categories taken to be trivial. We therefore define a left action on a bicategory so it is equivalently a \emph{tricategory} (see~\cite{Gordon1995}) with two objects and certain hom-bicategories taken to be trivial. 
 It follows from the coherence of
 tricategories~(\cite{Gordon1995,Gurski2013}) that every diagram of
 2-cells constructed using the structural data of an action must
 commute. 

\begin{definition}
\label{def:action-of-monoidal-bicategory}
  A \emph{left action} of a monoidal bicategory $(\V, \otimes, I)$ on a
  bicategory $\B$ consists of a pseudo-functor 
  	$\act : \V \times \B \to \B$, 
  together with the following data:
  \begin{itemize}%
  \item 
  	Pseudonatural adjoint equivalences %
     $\actlambda_A : I \act A \to A$ and 
     $\actalpha_{X,Y,C} : {(X \otimes Y) \act C} \to X \act (Y \act C)$; 
  \item 
  	Invertible modifications as shown below,
  	satisfying the same coherence axioms as $\pentagonator, \montrianglem$, and $\montrianglel$ in a monoidal bicategory (\eg~\cite{Stay2016}):
\end{itemize}
	\begin{tikzcd}
	[column sep=1em, 
		scalenodes=0.9,
		execute at end picture={
					\foreach \nom in  {A,B,C, D,E}
		  				{\coordinate (\nom) at (\nom.center);}
					\fill[\actpcolour,opacity=\opacity] 
		  				(A) -- (B) -- (C) -- (D) -- (E);
		}
		]
	\alias{A}
	{\left((X  Y)  Z\right)}  \act D
	\arrow{r}{\actalpha}
	\arrow[swap]{d}{\alpha \act D}
	\arrow[	"{\mathfrak{\actpentagonator}\:}"{left},
					from=2-2, 
					to=1-2,  
					Rightarrow, 
					shorten = 2pt
					]
	&
	\alias{B}
	(XY) \act (Z \act D)
	\arrow{r}{\actalpha}
	&
	\alias{C}
	X \act {\left(Y \act (Z \act D) \right)}
	\\
	\alias{E}
	{\left( X (YZ) \right)} \act D
	\arrow[swap]{rr}{\actalpha}
	&
	\:
	&
	\alias{D}
	X \act {\left((YZ) \act D\right)}
	\arrow[swap]{u}{X \act \actalpha}
\end{tikzcd}
\hspace{0mm}
\begin{tikzcd}[
	scalenodes=0.9,
	column sep = 0em,
	execute at end picture={
								\foreach \nom in  {A,B,C}
					  				{\coordinate (\nom) at (\nom.center);}
								\fill[\actmcolour,opacity=\opacity] 
					  				(A) -- (B) -- (C);
	  }
	]
\: &
\alias{C}
{X \lact C}
&
\: 
\\
\alias{A}
{(X  I) \act C} 
\arrow{ur}{\rho \act C}
\arrow[swap]{rr}{\actalpha}
&
\:
& 
\alias{B}
{X \act (I \act C)} 
\arrow[name=0]{ul}[swap]{X \act \actlambda}
\arrow["{\acttrianglem\:}"{}, 
 				shift right=1, 
 				shorten <=6pt, 
 				shorten >=8pt, 
 				Rightarrow, 
 				from=2-2, 
 				to=1-2]
\end{tikzcd}
\hspace{2mm}
\begin{tikzcd}[
	scalenodes=0.9,
	column sep=1em,
	execute at end picture={
			\foreach \nom in  {A,B,C}
  				{\coordinate (\nom) at (\nom.center);}
			\fill[\actlcolour,opacity=\opacity] 
  				(A) -- (B) to[bend left = 36] (C);
		}
	]
\alias{A}
{(I Y) \act  C} 
& 
\alias{B}
{I  \act (Y \act C)} \\
\alias{C}
{Y \act C}
\arrow["\actalpha", from=1-1, to=1-2]
\arrow[""{name=0, anchor=center, inner sep=0}, bend left = 25, "\actlambda", from=1-2, to=2-1]
\arrow[""{name=1, anchor=center, inner sep=0}, "{\lambda  \act C}"', from=1-1, to=2-1]
\arrow["{\acttrianglel}"{}, 
	    					shift left=1,
	    					yshift = 1mm,
	    					shorten <=6pt, 
	    					shorten >=6pt, 
	    					Rightarrow, 
	    					from=1, 
	    					to=0]
\end{tikzcd}		

\end{definition}

A \emph{right} action $\ract : \B \times \V \to \B$
can be defined analogously, with a right unitor 
	$\actrho_A : A \ract I \to A$, an
associator 
	$\actalpha_{A, X, Y} : (A \ract X) \ract Y \to A \ract (X \tens Y)$, 
and 2-dimensional structural data. 

\newcommand{\Vact}{\V\text{-}\mathbf{act}_{0s}}

Every monoidal bicategory $\V$ has canonical left and right actions on
itself given by the monoidal data. As we will see, a Freyd bicategory $\J : \V \to \B$
corresponds to a pair of actions $\act : \V \times \B \to \B$ and 
	$\ract : \B \times \V \to \B$ 
that extend the canonical actions: this mirrors the categorical situation. 
To that end, we consider a category $\Vact$ of actions of $\V$ and
identity-on-objects pseudofunctors that preserve the action strictly on objects, but weakly on morphisms. (This is a very special
case of a more canonical notion of map between actions.)

\vspace{\baselineskip}
\noindent
\begin{minipage}{0.75\textwidth}
\begin{definition}
Let $\V$ be a monoidal bicategory and let $(\B, \lact)$ and 
	$(\B', \lactprime)$ be left actions of $\V$. A \emph{0-strict morphism of actions} from 
	$(\B, \lact)$ to $(\B', \lactprime)$ is an
identity-on-objects functor $\J : \B \to \B'$ 
satisfying 
		$\widetilde{\lambda^{\lact}_A} = \J(\widetilde{\lambda^{\lactprime}_A})$
and 
		$\widetilde{\alpha^{\lact}}_{\!\!A, B, C} = \J(\widetilde{\alpha^{\lactprime}}_{\!\!A, B, C})$
for every $A, B, C \in \B$, 
equipped with an icon as on the right, 
which relates the structural data for the actions according to the
axioms below: 
\noindent
\end{definition}
\end{minipage}
\hfill
\begin{minipage}{0.2\textwidth}
	\centering
   \begin{tikzcd}[
		execute at end picture={
							\foreach \nom in  {A,B,C, D}
				  				{\coordinate (\nom) at (\nom.center);}
							\fill[\extcolour,opacity=\opacity] 
				  				(A) -- (C) -- (D) -- (B) -- (A);
	}   
   ]
     \alias{A}
     \V \times \B
     \arrow{r}{\act} 
     \arrow[name=0]{d}[swap]{\V \times \J}
    & 
    \alias{B}
    \V
    \arrow[name=1]{d}{\J} \\
    \alias{C}
     \V \times \B'
     \arrow[swap]{r}{\lactprime}
 	 & 
 	 \alias{D}
 	 \B 
    \arrow[Rightarrow, shorten = 8, "\theta", from=C, to=B, xshift=0mm,yshift=-1mm] 
  \end{tikzcd}
\end{minipage}
\vspace{-6mm}
\[
	\hspace{-8mm}
	\begin{minipage}{0.6\textwidth}
	\begin{align*}
		\begin{tikzcd}[
				ampersand replacement = \&,
				column sep=2.2em,
				row sep = 1.2em,				
				execute at end picture={
									\foreach \nom in  {A,B}
						  				{\coordinate (\nom) at (\nom.center);}
									\fill[\extcolour,opacity=\opacity] 
						  				(A) to[bend left=77] (B) to[bend left=75] (A);
									\fill[\Klambdacolour,opacity=\opacity] 
						  				(A) to[bend left=80] (B) -- (Y) -- (X);
			}
				] 
			\alias{A} {IB} \&\& \alias{X} B \\
			\alias{B} {I B'} \&\& \alias{Y} {B'}
			\arrow["{\J\lactlambda}", from=1-1, to=1-3]
			\arrow["{\J\lactlambda}"', from=2-1, to=2-3]
			\arrow[""{name=0, anchor=center, inner sep=0}, "{\J b}", from=1-3, to=2-3]
		        \arrow[""{name=1, anchor=center, inner sep=0}, bend left = 40, from=1-1, to=2-1]
			\arrow[""{name=2, anchor=center, inner sep=0}, "{I \lactprime \J b}"{swap}', 
								bend right = 40, from=1-1, to=2-1]
			\arrow["\oncell{{\theta}}"{description}, draw=none, from=2, to=1]
			\arrow["\oncell{\J\cellOf{(\lactlambda)}_{b}}"{description}, draw=none, from=1, to=0]
		      \end{tikzcd}
		      \hspace{2mm}
		      &=
		      \hspace{2mm}
		      \begin{tikzcd}[
		      ampersand replacement = \&,
		      	column sep=2em,
		      			execute at end picture={
		      								\foreach \nom in  {A,B,C, D}
		      					  				{\coordinate (\nom) at (\nom.center);}
		      								\fill[\actlambdacolour,opacity=\opacity] 
		      					  				(A) -- (B) -- (D) -- (C);
		      		}
		      ]
			\alias{A} {I B} \&\& \alias{B} B \\
			\alias{C} {I B'} \&\& \alias{D} {B'}
			\arrow["{\lactlambdaprime}", from=1-1, to=1-3]
			\arrow["{\lactlambdaprime}"', from=2-1, to=2-3]
			\arrow[""{name=0, anchor=center, inner sep=0}, "{\J b}", from=1-3, to=2-3]
			\arrow[""{name=1, anchor=center, inner sep=0}, "{I \lactprime \J b}"', curve={height=0pt}, from=1-1, to=2-1]
			\arrow["\oncell{\cellOf{(\toact{\lambda^{\lactprime}})}_{\J b}}"{description}, 
							draw=none, from=1, to=0]
		\end{tikzcd}
	\\[1mm]
\begin{tikzcd}[
		ampersand replacement = \&,
		row sep = 1.2em,		
		column sep=1.5em, scalenodes=1,
				execute at end picture={
									\foreach \nom in  {A,B,C, D}
						  				{\coordinate (\nom) at (\nom.center);}
									\fill[\extcolour,opacity=\opacity] 
						  				(A) to[curve={height=-20pt}] (B) to[curve={height=-20pt}] (A);
									\fill[\Klambdacolour,opacity=\opacity] 
						  				(A) to[curve={height=-20pt}] (B) -- (Y) -- (X);
			}
		]
	\alias{A} {(A B) C} \&\& \alias{X} A(B C) \\
	\alias{B} {(A' B') C'} \&\& \alias{Y} {A' (B' C')}
	\arrow["{\J\lactalpha}", from=1-1, to=1-3]
	\arrow["{\J\lactalpha}"', from=2-1, to=2-3]
	\arrow[""{name=0, anchor=center, inner sep=0}, "\J(f \lactprime (g \lactprime c))", 
							from=1-3, to=2-3]
	\arrow["\J{\cellOf{(\lactalpha)}_{f, g, \J c}}"{description, xshift=2mm}, draw=none, from=2, to=0]
	\arrow[""{name=1, anchor=center, inner sep=0}, 
					curve={height=16pt}, from=1-1, to=2-1, "(f \tens g) \lactprime \J c"{swap}] 
	\arrow[""{name=4, anchor=center, inner sep=0}, 
						curve={height=-16pt}, from=1-1, to=2-1] 			
	\arrow["\theta"{description}, draw=none, from=4, to=1]								
	\end{tikzcd}
      \hspace{1mm}
      &=
      \hspace{0mm}
	\begin{tikzcd}[
		ampersand replacement = \&,
		row sep = 1.2em,		
		column sep=1.5em, scalenodes=1,
		execute at end picture={
								\foreach \nom in  {A,B,C, D}
					  				{\coordinate (\nom) at (\nom.center);}
								\fill[\extcolour,opacity=\opacity] 
					  				(A) to[curve={height=32pt}] (B) to[curve={height=30pt}]  (A);
								\fill[\Klambdacolour,opacity=\opacity] 
					  				(A) to[curve={height=32pt}]  (B) -- (Y) -- (X);
		}
		]
	\alias{X} {(AB)C} \&\& \alias{A} {A(BC)} \\
	\alias{Y} {(A'B')C'} \&\& \alias{B} {A'(B'C')}
	\arrow[""{name=0, anchor=center, inner sep=0}, "\lactalphaprime", from=1-1, to=1-3]
	\arrow["\lactalphaprime"', from=2-1, to=2-3]
	\arrow[""{name=1, anchor=center, inner sep=0}, 
					curve={height=26pt}, from=1-3, to=2-3] 		%
	\arrow[""{name=2, anchor=center, inner sep=0}, %
        from=1-1, to=2-1]
	\arrow[""{name=4, anchor=center, inner sep=0}, "", 
						curve={height=-24pt}, from=1-3, to=2-3] %
	\arrow["\oncell{{\cellOf{(\lactalphaprime)}_{f, g, \J c}}}"{description}, draw=none, from=2, to=1]
	\arrow[""{name=3, anchor=center, inner sep=0}, curve={height=0pt}, from=1-3, to=2-3]	
	\arrow["\oncell{{\theta}}"{description, xshift=.17mm}, draw=none, from=4, to=3]
	\arrow["\oncell{f \lactprime \theta}"{yshift=-2.5mm}, draw=none, from=1, to=3]	
	\end{tikzcd}
	\\[1mm]
	\begin{tikzcd}[
			ampersand replacement = \&,
			row sep = 1.2em,			
	   		column sep=1em, scalenodes=1,
			execute at end picture={
								\foreach \nom in  {A,B,C, D}
					  				{\coordinate (\nom) at (\nom.center);}
								\fill[\extcolour,opacity=\opacity] 
					  				(A) to[curve={height=24pt}] (B) -- (A);
								\fill[\monoidallcolour,opacity=\opacity] 
					  				(A) -- (B) -- (C);
					}
	]
		\alias{A} {(IA)B} \&\& \alias{C} {I(AB)} \\
		\& \alias{B} AB
		\arrow["\J\lactalpha", from=1-1, to=1-3]
		\arrow[""{name=0, anchor=center, inner sep=0}, 
							from=1-1, to=2-2] %
		\arrow[""{name=1, anchor=center, inner sep=0}, "\J\lactlambda", from=1-3, to=2-2]
		\arrow[""{name=2, anchor=center, inner sep=0}, "{\lambda \lactprime B}"', curve={height=22pt}, from=1-1, to=2-2]
		\arrow["\oncell{{\theta}}"{description, yshift=0mm}, shorten
	        				<=3pt, shorten >=3pt, draw=none, from=2, to=0]
		\arrow["\oncell{\J{\acttrianglel^{\lact}}}"{description, yshift=0mm, xshift=1mm},
	        shift left=2, shorten <=7pt, shorten >=7pt, draw=none, from=0, to=1]
	\end{tikzcd}
	\hspace{0mm}
	&=
	\hspace{0mm}
	\begin{tikzcd}[
				ampersand replacement = \&,
	    		column sep=1em, scalenodes=1,
	    					execute at end picture={
	    										\foreach \nom in  {A,B,C, D}
	    							  				{\coordinate (\nom) at (\nom.center);}
	    										\fill[\actlcolour,opacity=\opacity] 
	    							  				(A) -- (B) -- (C);
	    							}
	    	]
		\alias{A} {(IA)B} \&\& \alias{B} {I(AB)} \\
		\& \alias{C} AB
		\arrow["\lactalphaprime", from=1-1, to=1-3]
		\arrow[""{name=0, anchor=center, inner sep=0}, "\lactlambdaprime", from=1-3, to=2-2]
		\arrow[""{name=1, anchor=center, inner sep=0}, "{\lambda \lactprime B}"', from=1-1, to=2-2]
		\arrow["\oncell{{\acttrianglel^{\lactprime}}}"{description}, shift left=2, shorten <=7pt, shorten >=7pt, draw=none, from=1, to=0]
	\end{tikzcd}
	\\[1mm]
	\begin{tikzcd}[
		ampersand replacement = \&,	
		row sep = 1.2em,		
	   	column sep=1em, scalenodes=1,
		execute at end picture={
							\foreach \nom in  {A,B,C, D}
				  				{\coordinate (\nom) at (\nom.center);}
							\fill[\monoidalmcolour,opacity=\opacity] 
				  				(A) -- (B) -- (C);
							\fill[\extcolour,opacity=\opacity] 
				  				(A) to[curve={height=24pt}] (C);	
				}
	   	]
		\alias{A} {(AI)B} \&\& \alias{B} {A(IB)} \\
		\& \alias{C} AB
		\arrow["\J\lactalpha", from=1-1, to=1-3]
		\arrow[""{name=0, anchor=center, inner sep=0}, 
						"{\J(A \lact \lactlambda)}"{yshift=1mm}, from=1-3, to=2-2]
		\arrow[""{name=1, anchor=center, inner sep=0}, "{\rho \lactprime B}"', curve={height=22pt}, from=1-1, to=2-2]
		\arrow[""{name=3, anchor=center, inner sep=0}, from=1-1, to=2-2]
		\arrow["\oncell{{\acttrianglem^{\lact}} }"{description, yshift=0mm}, shift left=2, draw=none, from=3, to=0]
		\arrow["\oncell{{\theta}}"{description, yshift=0mm}, draw=none, from=1, to=3]
	\end{tikzcd}
	\hspace{0mm}
	&=
	\hspace{0mm}
	\begin{tikzcd}[
		ampersand replacement = \&,
		row sep = 1.2em,
	   	column sep=1em, scalenodes=1,
		execute at end picture={
							\foreach \nom in  {A,B,C, D}
				  				{\coordinate (\nom) at (\nom.center);}
							\fill[\monoidalmcolour,opacity=\opacity] 
				  				(A) -- (B) -- (C);
							\fill[\extcolour,opacity=\opacity] 
				  				(B) to[curve={height=-24pt}] (C);					  				
				}
	   	]
		\alias{A} {(AI)B} \&\& \alias{B} {A(IB)} \\
		\& \alias{C} AB
		\arrow["\lactalphaprime", from=1-1, to=1-3]
		\arrow[""{name=0, anchor=center, inner sep=0}, from=1-3, to=2-2]
		\arrow[""{name=2, anchor=center, inner sep=0}, "{\J(A \lact \lactlambda)}"{yshift=1mm}, curve={height=-22pt}, from=1-3, to=2-2]
		\arrow[""{name=3, anchor=center, inner sep=0}, 
							"{\rho \lactprime B}"{swap}, from=1-1, to=2-2]
		\arrow["\oncell{{\acttrianglem^{\lactprime}} }"{description, yshift=0mm}, shift left=2, draw=none, from=3, to=0]
		\arrow["\oncell{{\theta}}"{description,yshift=0mm}, draw=none, from=0, to=2]
	\end{tikzcd}
	\end{align*}
\end{minipage}
\hspace{5mm}
	\begin{minipage}{0.3\textwidth}
	\centering
	\[
	\begin{tikzcd}[
			ampersand replacement = \&,
	  		row sep=0.2em, 
	  		column sep=1em, 
	  		scalenodes=1,
		execute at end picture={
							\foreach \nom in  {A,B,C, D,E}
				  				{\coordinate (\nom) at (\nom.center);}
							\fill[\monoidalpcolour,opacity=\opacity] 
				  				(A) -- (B) -- (C) -- (D) -- (E);			 
							\fill[\extcolour,opacity=\opacity] 
				  				(A) to[curve={height=24pt}] (B) -- (A);							  				 				
							\fill[\extcolour,opacity=\opacity] 
				  				(C) to[curve={height=24pt}] (D) -- (C);					  				
		}
	]
	\& \alias{A} {((AB)C)D} \\
	\alias{B} {(A(BC))D} \\
	\&\&[-3em] \alias{E} {(AB)(CD)} \\
	\alias{C} {A((BC)D)} \\
	\& \alias{D} {A(B(CD))}
	\arrow[""{name=0, anchor=center, inner sep=0}, shorten=-0.2em, from=1-2, to=2-1]
	\arrow[""{name=1, anchor=center, inner sep=0}, "\J\lactalpha"', from=2-1, to=4-1]
	\arrow["\J\lactalpha", from=1-2, to=3-3]
	\arrow["\J\lactalpha", from=3-3, to=5-2]
	\arrow[""{name=2, anchor=center, inner sep=0}, "{\alpha \lactprime D}"', curve={height=20pt}, from=1-2, to=2-1]
	\arrow[""{name=3, anchor=center, inner sep=0}, "{A \lactprime \lactalphaprime}"', curve={height=20pt}, from=4-1, to=5-2]
	\arrow[""{name=4, anchor=center, inner sep=0}, shorten=-0.2em, from=4-1, to=5-2]
	\arrow["\oncell{{\J\actpentagonator^{\lact}}}"{description, pos=0.6}, draw=none, from=1, to=3-3]
	\arrow["\oncell{{\theta}}"{description}, draw=none, from=2, to=0]
	\arrow["\oncell{{\theta}}"{description},
       						 draw=none, from=3, to=4, pos=0.46]
	\end{tikzcd}
   \vspace{-2mm}
   \]
	\[
	\vertequals
	\vspace{-4mm}
	\]
   \[
	\begin{tikzcd}[
			ampersand replacement = \&,
	    		scalenodes=1, 
	    		column sep=1em, 
	    		row sep=0em,
		execute at end picture={
							\foreach \nom in  {A,B,C, D,E}
				  				{\coordinate (\nom) at (\nom.center);}
							\fill[\actpcolour,opacity=\opacity] 
				  				(A) -- (B) -- (C) -- (D) -- (E);			 	  				
		}
	]
	\& \alias{A} {((AB)C)D} \\
	\alias{B} {(A(BC))D} \\
	\&\&[-4em] \alias{E} {(AB)(CD)} \\
	\alias{C} {A((BC)D)} \\
	\& \alias{D} {A(B(CD))}
	\arrow[""{name=0, anchor=center, inner sep=0}, "\lactalphaprime"',
	    from=2-1, to=4-1]
	\arrow["\lactalphaprime", from=1-2, to=3-3]
	\arrow["\lactalphaprime", from=3-3, to=5-2]
	\arrow["{\alpha \lactprime D}"', from=1-2, to=2-1, shorten=-0.2em]
	\arrow["{A \lactprime\lactalphaprime}"', from=4-1, to=5-2, shorten=-0.2em] 
	\arrow["\oncell{{\actpentagonator^{\lactprime}}}"{description, xshift=3mm}, draw=none, from=0, to=3-3]
	\end{tikzcd}
   \]
	\end{minipage}
\]

\vspace{0mm}

A key example is the following:
\begin{definition}
\label{def:extending-canonical-action}
For a monoidal bicategory $(\V, {}{\otimes}{}, I)$, 
a \emph{left extension of the canonical action of $\V$ on itself}
is a $\V$-action $(\B, \act)$, together with a 0-strict morphism $(\J,
\theta) : (\V, \tens) \to (\B, \act)$ such that $\theta$ is
invertible. (We say this is an extension \emph{along $\J$}.)
\end{definition}

\begin{figure}[!t]
\centering

\input{diag-Freyd-bicat-equations}

\caption{Compatibility laws for \Cref{def:premonoidal-pseudofunctor}}
\label{sec:equations-for-freyd-bicats}
\end{figure}

We define a \emph{right extension} analogously; this involves a right action 
	$\ract : \B \times \V \to \B$ 
and an invertible icon with components
	$\zeta_{f, g} : f \ract \J g \To \J(f \tens g)$. 
The rest of this section is devoted to showing 
Freyd bicategories may be equivalently presented as pairs of extensions, which we call \emph{Freyd actions}. 

\begin{definition}
\label{def:freyd-action}
A \emph{Freyd action} consists of an identity-on-objects pseudofunctor 
	$\J : \V \to \B$
from a monoidal bicategory 
	$\monoidal\V$
to a bicategory $\B$, together with: 
\begin{enumerate}
\item 
	A left extension $(\lact, \theta)$ and right extension $(\ract, \zeta)$ along $\J$ of the canonical actions of $\V$ on itself;
\item 
	A pseudonatural adjoint equivalence 
		$\kappa$
	with 1-cell components
	$\kappa_{X, B, Z} = \J(\alpha_{X, B, Z}) : (X \lact B) \ract Z \to X \lact (B \ract Z)$,
	subject to the equation below 
	and additional axioms given in \Cref{app:missingaxioms}:
	\[
		\begin{tikzcd}[
			column sep = 3em, 
			row sep = 2em,
			execute at end picture={
								\foreach \nom in  {A,B,C, D,E, X, Y}
					  				{\coordinate (\nom) at (\nom.center);}
								\fill[\extcolour,opacity=\opacity] 
					  				(A) to[curve={height=35pt}] (B) to[curve={height=35pt}] (A);
								\fill[\kcolour,opacity=\opacity] 
					  				(X) to (Y)  to (B) to[curve={height=-35pt}] (A) to (X);
			}
		]
		\alias{X} {(XY)Z} & \alias{A} {X(YZ)} \\
		\alias{Y} {(X'Y')Z'} & \alias{B} {X'(Y'Z')}
		\arrow[""{name=0, anchor=center, inner sep=0}, "{(f \lact Jg) \ract h}"', from=1-1, to=2-1]
		\arrow["\kappa", from=1-1, to=1-2]
		\arrow[""{name=1, anchor=center, inner sep=0}, curve={height=28pt}, from=1-2, to=2-2]
		\arrow["\kappa"', from=2-1, to=2-2]
		\arrow[""{name=2, anchor=center, inner sep=0}, from=1-2, to=2-2]
		\arrow[""{name=3, anchor=center, inner sep=0}, "{\J(f \tens (g \tens h))}"{}, curve={height=-28pt}, from=1-2, to=2-2]
		\arrow["{\cellOf{\kappa}_{f, Jg, h}}"{description}, draw=none, from=0, to=1]
		\arrow["{X' \lact \zeta}"{description}, shorten <=2pt, shorten >=2pt, Rightarrow, from=1, to=2]
		\arrow["\theta"{description}, draw=none, from=2, to=3]
		\end{tikzcd}	
		\hspace{2mm}
		=
		\hspace{2mm}
		\begin{tikzcd}[
			column sep = 3em, 
			row sep = 2em,
			execute at end picture={
					\foreach \nom in  {A,B,C, D,E}
		  				{\coordinate (\nom) at (\nom.center);}
					\fill[\extcolour,opacity=\opacity] 
		  				(A) to[curve={height=35pt}] (B) to[curve={height=35pt}] (A);
		}		
		]
			\alias{A} {(XY)Z} & {X(YZ)} \\
			\alias{B} {(X'Y')Z'} & {X'(Y'Z')}
			\arrow[""{name=0, anchor=center, inner sep=0}, "{(f \lact Jg) \ract h}"', curve={height=28pt}, from=1-1, to=2-1]
			\arrow["{\J(\alpha)}", from=1-1, to=1-2]
			\arrow["{\J(\alpha)}"', from=2-1, to=2-2]
			\arrow[""{name=1, anchor=center, inner sep=0}, "{\J(f \tens (g \tens h))}", from=1-2, to=2-2]
			\arrow[""{name=2, anchor=center, inner sep=0}, from=1-1, to=2-1]
			\arrow[""{name=3, anchor=center, inner sep=0}, curve={height=-28pt}, from=1-1, to=2-1]
			\arrow["{\J(\cellOf{\alpha}_{f,g,h})}"{description}, draw=none, shorten <=6pt, shorten >=6pt, Rightarrow, from=3, to=1]
			\arrow["{\theta \ract Z}"{description}, draw=none, from=0, to=2]
			\arrow["\zeta"{description}, draw=none, shorten <=4pt, shorten >=4pt, Rightarrow, from=2, to=3]
		\end{tikzcd}
	\]
\end{enumerate}
\end{definition}

\label{sec:correspondence-theorem}

We construct an equivalence of categories between Freyd actions and
Freyd bicategories, over a fixed identity-on-objects pseudofunctor $\J
: \V \to\B$. (The corresponding categorical result is a
bijection, but we must work modulo the structural
isomorphisms, and hence lose the strictness.)

On one side, the category $\FreydAct{}(\J)$ has
objects Freyd actions $(\lact, \theta, \ract, \zeta, \kappa)$ with
underlying pseudofunctor $\J$. 
Morphisms 
	$((\lact, \theta), (\ract, \zeta), \kappa) \to ((\lact', \theta'), (\ract', \zeta), \kappa')$
are pairs of icons $\lefttrans : \lact \To \lact'$ and
 		$\righttrans : \ract \To \ract'$
fitting in the diagram in $\Vact$ as on the left below, such that
$\kappa$ is preserved as 
on the right:
\[
\hspace{-2mm}
\begin{tikzcd}[column sep = 1.5em]
    {(\B,\lact)} & {(\V, \tens)} & (\B, \ract) \\
    {(\B, \lact')} & \: & (\B', \ract')
    \arrow["{(\id_{\B}, \lefttrans)}"', from=1-1, to=2-1]
    \arrow["{(J, \theta')}", from=1-2, to=2-1]
    \arrow["{(J, \theta)}"', from=1-2, to=1-1]
	\arrow["{(\id_{\B}, \righttrans)}"{}, from=1-3, to=2-3]
	\arrow["{(J, \zeta')}", from=1-2, to=1-3]
	\arrow["{(J, \zeta)}"', from=1-2, to=2-3]    
  \end{tikzcd}
\hspace{4mm}
  \begin{tikzcd}[
    column sep=1em, scalenodes=1,
    execute at end picture={
      \foreach \nom in  {A,B,C, D, Y, X}
      {\coordinate (\nom) at (\nom.center);}
      \fill[\extcolour,opacity=\opacity] 
      (A) to[curve={height=-30pt}] (B) to[curve={height=-30pt}] (A);
      \fill[\kcolour,opacity=\opacity] 
      (A) to[curve={height=-30pt}] (B) -- (Y) -- (X) -- (A); 
    }
    ]
    \alias{A} {(A B) C} && \alias{X} A(B C) \\
    \alias{B} {(A' B') C'} && \alias{Y} {A' (B' C')}
    \arrow["{\kappaprime}", from=1-1, to=1-3]
    \arrow["{\kappaprime}"', from=2-1, to=2-3]
    \arrow[""{name=0, anchor=center, inner sep=0}, "{f \lact' (b \ract' h)}", 
    					from=1-3, to=2-3]
    \arrow[""{name=1, anchor=center, inner sep=0}, 
    curve={height=24pt}, from=1-1, to=2-1, ""{swap}] %
    \arrow[""{name=2, anchor=center, inner sep=0}, %
    from=1-1, to=2-1]
    \arrow[""{name=3, anchor=center, inner sep=0}, curve={height=0pt}, from=1-1, to=2-1]
    \arrow[""{name=4, anchor=center, inner sep=0}, 
    curve={height=-24pt}, from=1-1, to=2-1] 			
    \arrow["\righttrans"{description}, draw=none, from=4, to=3]			
    \arrow["\lefttrans \ract h"{yshift=-2mm, xshift=.2mm}, draw=none, from=1, to=3]	
    \arrow["\oncell{\cellOf{(\kappaprime)}_{f, b, h}}"{description, xshift=4mm}, 
    		draw=none, from=2, to=0] 		  
   \end{tikzcd}
\hspace{0mm}
  =
\hspace{0mm}
  \begin{tikzcd}[
    column sep=1em, scalenodes=1,
    execute at end picture={
      \foreach \nom in  {A,B,C, D, X, Y}
      {\coordinate (\nom) at (\nom.center);}
      \fill[\extcolour,opacity=\opacity] 
      (A) to[curve={height=28pt}] (B) to[curve={height=28pt}]  (A);
      \fill[\kcolour,opacity=\opacity] 
      (A) to[curve={height=28pt}]  (B) -- (Y) -- (X);
    }
    ]
    \alias{X} {(AB)C} && \alias{A} {A(BC)} \\
    \alias{Y} {(A'B')C'} && \alias{B} {A'(B'C')}
    \arrow[""{name=0, anchor=center, inner sep=0}, "\kappaone", from=1-1, to=1-3]
    \arrow["\kappa"', from=2-1, to=2-3]
    \arrow[""{name=1, anchor=center, inner sep=0}, 
    curve={height=22pt}, from=1-3, to=2-3] 		%
    \arrow[""{name=2, anchor=center, inner sep=0}, %
    from=1-1, to=2-1, "(f \lact b) \ract h"{swap}]
    \arrow[""{name=3, anchor=center, inner sep=0}, curve={height=0pt}, from=1-3, to=2-3]
    \arrow[""{name=4, anchor=center, inner sep=0}, "", 
    curve={height=-22pt}, from=1-3, to=2-3] %
    \arrow["\oncell{{\cellOf{\kappaone}_{f, b, h}}}"{description}, draw=none, from=2, to=1]
    \arrow["\oncell{{f \lact \righttrans}}"{yshift=-2mm,xshift=.17mm}, draw=none, from=1, to=3]
    \arrow["\oncell{{\lefttrans}}"{description}, draw=none, from=3, to=4]
  \end{tikzcd}
\]

On the other side, the category $\FreydBicat{\V, \J, \B}(\J)$ has
objects Freyd bicategories whose underlying pseudofunctor is $\J$;
these are determined by a premonoidal structure on $\B$ and families of
icons $\theta = \{ \theta^A \st A \in \B \}$ and $\{ \zeta^A \st A \in \B \}$ 
making the pseudofunctor $\J$ premonoidal. 
Morphisms 
	$(\ltie, \rtie, \theta, \zeta) \to (\ltie', \rtie',\theta', \zeta')$ 
are families of icons $\lefttrans^A  : (A \ltie -) \To (A \ltie' -)$
	and  $\righttrans^A  : (- \rtie A) \To (- \rtie' A)$
making the identity pseudofunctor $\B \to \B$ premonoidal and such
that 
	$(J, \theta', \zeta') \circ (\id_{\B}, \lefttrans, \righttrans) =
			(J, \theta, \zeta)$ 
as premonoidal pseudofunctors. 

Our correspondence theorem is then as follows. %

\begin{restatable}{theorem}{CorrespondenceTheorem}
\label{res:correspondence-theorem}
For any monoidal bicategory $\monoidal{\V}$, bicategory $\B$, and
identity-on-objects pseudofunctor $\J : \V \to \B$, the categories 
	$\FreydAct{\V, \J, \B}(\J)$
and
	$\FreydBicat{\V, \J, \B}(\J)$
are equivalent. 
\end{restatable}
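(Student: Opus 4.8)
The plan is to construct functors $\Phi \colon \FreydBicat{\V, \J, \B}(\J) \to \FreydAct{\V, \J, \B}(\J)$ and $\Psi$ in the other direction, and then exhibit natural isomorphisms $\Phi \circ \Psi \iso \id$ and $\Psi \circ \Phi \iso \id$, so that the two categories are equivalent. The underlying intuition is that an action of $\V$ is the same thing as a two-object tricategory with trivial hom-bicategories, so that both sides repackage the same data; the content is in making the dictionary precise and using coherence to cut the verifications down to size.

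Given a Freyd bicategory, $\Phi$ builds the two extensions directly from the binoidal structure and the chosen centrality witnesses. I would set $X \lact (-) := X \ltie (-)$ and $(-) \ract X := (-) \rtie X$ (unambiguous on objects since $\J$ is identity-on-objects), and make these pseudofunctors of two variables by using, for a value $f$, the chosen pseudonatural transformations $\lc{\J f}$ and $\rc{\J f}$ as the action on the $\V$-variable; the compatibility law in \Cref{def:freyd-bicategory}(2) is precisely what forces the interchange 2-cells of $\lact$ and $\ract$ to obey the pseudofunctor axioms, and \Cref{def:freyd-bicategory}(3) makes the mixed interchange agree with $\J$ of the interchange in $\V$. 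The unitor and associator of the left action are $\actlambda := \lambda$ and $\actalpha := \alpha$ (legitimate by \Cref{def:premonoidal-pseudofunctor}(1)), with pseudonaturality supplied by the ``third'' separate-argument witness $\cellOf{\alpha}$; dually for the right action with its $\actrho$; and the mediating equivalence is $\kappa := \J(\alpha)$, taking its pseudonaturality 2-cells from the ``mixed'' separate-argument witness of $\alpha$. The structural modifications $\actpentagonator, \acttrianglem, \acttrianglel$ are read off from $\mathfrak{p}, \mathfrak{m}, \mathfrak{l}, \mathfrak{r}$, and all the remaining axioms of \Cref{def:action-of-monoidal-bicategory}, \Cref{def:extending-canonical-action} and \Cref{def:freyd-action} (including the $\kappa$-equation and those of \Cref{app:missingaxioms}) are instances of the premonoidal axioms and the compatibility laws of \Cref{sec:equations-for-freyd-bicats}, modulo coherence for bicategories and pseudofunctors. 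On morphisms, a family $(\lefttrans^A, \righttrans^A)_A$ assembles into the pair of icons $(\lefttrans, \righttrans)$, and the $\kappa$-preservation square matches the premonoidal-pseudofunctor conditions term by term.

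The functor $\Psi$ reverses this. From a Freyd action I would define a binoidal structure on $\B$ by $A \ltie (-) := A \lact (-)$ and $(-) \rtie B := (-) \ract B$, with $A \ltie B = A \rtie B$ on objects since both extensions are identity-on-objects over $\J$. The crucial observation is that centrality data for each value is then \emph{forced}: since $\lact$ and $\ract$ are pseudofunctors of two variables, their interchange 2-cells provide, for each $g$, invertible 2-cells which (after transport along the icons $\theta, \zeta$) are exactly $\lc{\J f}_g$ and $\rc{\J f}_g$, and pseudonaturality follows from the pseudofunctor axioms; this is what upgrades $\J$ to a factorization through $\centreOf\B$, in parallel with \Cref{res:assoc-binoidal-bicat-has-binoidal-centre}. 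I would take $\lambda := \actlambda$, $\rho := \actrho$, and $\alpha := \J(\alpha)$, with its three separate-argument naturality witnesses supplied respectively by the associator of $\ract$, by $\kappa$, and by the associator of $\lact$; the modifications $\mathfrak{p}, \mathfrak{m}, \mathfrak{l}, \mathfrak{r}$ come from $\actpentagonator$, $\acttrianglem$, $\acttrianglel$ and the axioms of \Cref{app:missingaxioms}. The premonoidal axioms and the laws of \Cref{def:premonoidal-pseudofunctor} and \Cref{def:freyd-bicategory} then follow from the action axioms together with coherence of tricategories. Finally, the two round-trips return the same binoidal pseudofunctors, unitors, associator 1-cell and centrality data on the nose, but a pseudofunctor of two variables reassembled from its two one-variable restrictions and interchange agrees with the original only up to isomorphism of pseudofunctors; naturality of these comparison isomorphisms is routine, and this is exactly why the correspondence is an equivalence rather than an isomorphism.

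The main obstacle is the translation of the coherence data for the associator. In the premonoidal picture, $\alpha$ carries three separate pseudonaturality witnesses constrained by the modification axioms of \Cref{fig:premonoidal-bicat-modifications} and by the interaction with centrality described in \Cref{res:assoc-binoidal-bicat-has-binoidal-centre}; in the action picture the same information is distributed over the associators of $\lact$ and $\ract$ and the mediating equivalence $\kappa$, constrained by the axioms of \Cref{app:missingaxioms}. Showing that these two packagings are interderivable---in particular matching $\mathfrak{p}, \mathfrak{m}, \mathfrak{l}, \mathfrak{r}$ with $\actpentagonator, \acttrianglem, \acttrianglel$ (note that $\mathfrak{m}$ and $\mathfrak{l}$ already involve the centrality witnesses $\rc{\lambda}$ and $\lc{\lambda}$, so the correspondence there is the subtlest), verifying the $\kappa$-equation, and confirming that \Cref{def:freyd-bicategory}(2)--(3) are precisely pseudofunctoriality of $\lact, \ract$ plus naturality of $\kappa$ on values---is where essentially all the labour lies. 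Coherence for tricategories, pseudofunctors and bicategories is what keeps this a finite check rather than an unbounded one.
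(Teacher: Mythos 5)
Your proposal is correct and follows essentially the same route as the paper's proof: both directions of the correspondence are built from the same dictionary (binoidal structure from the action pseudofunctors, centrality witnesses from their interchange cells transported along $\theta$ and $\zeta$, the compositor of $\lact$/$\ract$ from the chosen $\lc{\J f}$/$\rc{\J f}$, the associator data split across $\actalpha^{\lact}$, $\kappa$, $\actalpha^{\ract}$), and the round trips are identified up to the same canonical isomorphisms, which is exactly why one gets an equivalence rather than an isomorphism. The paper's verification of the $\acttrianglem$ modification law likewise hinges on the axiom $\lc{\J f}_{\J g} = (\rc{\J g}_{\J f})^{-1}$, matching the subtlety you flag for $\mathfrak{m}$ and $\mathfrak{l}$.
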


\section{Conclusions}
\label{sec:conclusion}
\paragraph{Summary.} 
We have introduced bicategorical versions of premonoidal categories 
	(\Cref{def:premonoidal-bicategory})
and Freyd categories 
	(\Cref{def:freyd-bicategory}). 
Along the way we have observed subtleties that arise only in the
2-dimensional setting, and discussed simple canonical
examples. Finally, we have connected our theory to the existing literature by showing our definition is equivalent to certain actions in the expected way.

This paper develops abstract categorical notions, but these are
intended to be immediately practical. Specifically, the literature contains no satisfying account of
call-by-value languages in bicategories of games
(\cite{template-games,cg1}), spans \cite{Fiadeiro2007}, or
profunctors \cite{FioreSpecies}, and this work offers a technical basis to fill that
gap. Our next steps will be in this direction.

\paragraph*{Perspectives.}
This work takes place in a broader line of research on bicategorical
semantic structures, and there are several avenues to explore. We expect a tight connection between Freyd
bicategories and recently-developed notions of strength for
pseudomonads on monoidal bicategories (\cite{TanakaThesis,DBLP:journals/corr/abs-2304-11014,Slattery2023}).
Freyd bicategories should also be related to a 2-dimensional notion of
\emph{arrows}, based on $\Cat$-valued profunctors, yet to be
developed (\cite{DBLP:journals/entcs/HeunenJ06,corner2016day}). 

In particular,
the Kleisli bicategory of a strong pseudomonad should be premonoidal,
and the canonical functor from the base category should give a Freyd
structure, and conversely, a \emph{closed} Freyd bicategory should be equivalent
to a strong pseudomonad together with Kleisli exponentials. 
From a syntactic perspective, we expect cartesian Freyd bicategories to have an internal language similar to fine-grained call-by-value $\lambda$-calculus~\cite{Levy2003}, with the addition of \emph{rewrites} between terms
	(\cf~\cite{Seely1987,Hilken1996,Hirschowitz2013,LICS2019}).

In a more theoretical direction, although the centre of a premonoidal
category is always a monoidal category, this does not happen in the
bicategorical setting. Roughly speaking, for central $f$ and $g$, the
interchange of $f$ and $g$ is witnessed independently by 2-cells $\lc{f}_g$ and
$(\rc{g}_f)^{-1}$. This leads to ambiguity and it is not clear
how to define the pseudofunctor $\tens$; indeed, it is not even clear that these 2-cells are themselves central. In this paper we have shown that
the centre is a binoidal bicategory, and in further work we will give
a more complete description of its structure, along with
an alternative presentation of Freyd bicategories in terms of centrality
witnesses.

\paragraph*{Acknowledgements.} 
HP was supported by a Royal Society University Research Fellowship and by a Paris Region Fellowship co-funded by the European Union (Marie Sk\strokeL{}odowska-Curie grant agreement 945298).
PS was supported by the 
Air Force Office of Scientific Research under award number {FA9550-21-1-0038}.
Both authors thank D. McDermott, N. Arkor, and the Oxford PL
group for useful discussions.

\bibliography{short-bib}
\bibliographystyle{eptcs}

\appendix

\section{Missing coherence axioms for \texorpdfstring{\Cref{def:freyd-action}}{Definition 22}}
\label{app:missingaxioms}

We complete the list of coherence equations for
\Cref{def:freyd-action}. In the diagrams that follow, we consider $b : B \to B'$
and $c : C \to C'$ in $\B$ and, for $f : A \to A'$ and $g : B \to B'$
in $\V$, we write $\nu$ for the composite $\J f \ract g \XRA{\zeta}
\J(f \tens g) \XRA{\theta} f \lact \J g$.
	In \Cref{res:correspondence-theorem}, $\kappa$ represents the
pseudonaturality of the associator $\alpha$ in its middle argument,
and these axioms enforce the appropriate modification conditions
(\cf~\Cref{fig:premonoidal-bicat-modifications}).

\vspace{-5mm}
\[
\hspace{-1mm}
\begin{minipage}{0.5\textwidth}
\[
\begin{tikzcd}[
		column sep = 1.3em, 
		scalenodes = .8,
		execute at end picture={
							\foreach \nom in  {A,B,C, D, X}
				  				{\coordinate (\nom) at (\nom.center);}
							\fill[\extcolour,opacity=\opacity] 
				  				(A) to[curve={height=9pt}] (B) to[curve={height=9pt}] (A);	
							\fill[\monoidallcolour,opacity=\opacity] 
				  				(A) to[curve={height=9pt}] (B) to (C) to (A);		  		
							\fill[\kcolour,opacity=\opacity] 
				  				(A) to (C) to (D) to (X);		  						  						  					  				
				}
		]
\alias{A} {(IA)B} && \alias{B} AB \\
\alias{X} {(IA')B} & \alias{C} {I(AB)} & {A'B} \\
& \alias{D} {I(A'B)}
\arrow["\alpha"'{description}, from=1-1, to=2-2]
\arrow["\lambda"'{description},, from=2-2, to=1-3]
\arrow[""{name=0, anchor=center, inner sep=0}, from=2-2, to=3-2]
\arrow[""{name=1, anchor=center, inner sep=0},  from=1-1, to=2-1]
\arrow["\alpha"', from=2-1, to=3-2]
\arrow["\lambda"', from=3-2, to=2-3]
\arrow[""{name=2, anchor=center, inner sep=0}, from=1-3, to=2-3, "{a \ract B}"]
\arrow[""{name=3, anchor=center, inner sep=0}, "{\lambda \ract B}", curve={height=-8pt}, from=1-1, to=1-3]
\arrow[""{name=4, anchor=center, inner sep=0}, curve={height=8pt}, from=1-1, to=1-3]
\arrow["\acttrianglel^{\lact}"{description,yshift=0mm}, draw=none, from=4, to=2-2]
\arrow["\nu"{description}, shorten <=2pt, shorten >=2pt, Rightarrow, from=3, to=4]
\arrow["{\cellOf{\kappa}_{I, a, B}}"{description}, draw=none, from=1, to=0]
\arrow["{\cellOf{\lactlambda}}"{description}, draw=none, from=2, to=0]
\end{tikzcd}
\hspace{-1mm}
=
\hspace{-3mm}
\begin{tikzcd}[
	column sep = -.1em, 
	scalenodes = .8,
	execute at end picture={
						\foreach \nom in  {A,B,C, D, X}
			  				{\coordinate (\nom) at (\nom.center);}
						\fill[\extcolour,opacity=\opacity] 
			  				(A) to[curve={height=8pt}] (B) to[curve={height=9pt}] (A);	
						\fill[\monoidallcolour,opacity=\opacity] 
			  				(A) to[curve={height=8pt}] (B) to (C) to (A);		  			  						  					  				
			}
]
	{(IA)B} && AB \\
	\alias{A} {(IA')B} && \alias{B} {A'B} \\
	& \alias{C} {I(A'B)}
	\arrow[""{name=0, anchor=center, inner sep=0}, from=1-1, to=2-1, "{(I \lact a) \ract B}"'] 
	\arrow[""{name=1, anchor=center, inner sep=0}, "\alpha"', from=2-1, to=3-2]
	\arrow[""{name=2, anchor=center, inner sep=0}, "\lambda"', from=3-2, to=2-3]
	\arrow[""{name=3, anchor=center, inner sep=0}, from=1-3, to=2-3]
	\arrow["{\lambda \ract B}", curve={height=-6pt}, from=1-1, to=1-3]
	\arrow[""{name=4, anchor=center, inner sep=0}, "{\lambda \ract B}", curve={height=-6pt}, from=2-1, to=2-3]
	\arrow[""{name=5, anchor=center, inner sep=0}, curve={height=6pt}, from=2-1, to=2-3]
	\arrow["{\cellOf{\lactlambda} \ract B}"{description, yshift=3mm}, draw=none, from=0, to=3]
	\arrow["{\acttrianglel^{\lact}}"{description, yshift=1mm}, draw=none, from=1, to=2]
	\arrow["\nu"{description}, shorten <=2pt, shorten >=2pt, Rightarrow, from=4, to=5]
\end{tikzcd}
\]
\end{minipage}
\hspace{2mm}
\begin{minipage}{0.5\textwidth}
\[
\begin{tikzcd}[
	column sep = 1.45em, 
	scalenodes = .8,
	execute at end picture={
						\foreach \nom in  {A,B,C, D, X, Y, Z}
			  				{\coordinate (\nom) at (\nom.center);}
						\fill[\extcolour,opacity=\opacity] 
			  				(A) to[curve={height=9pt}] (B) to[curve={height=11pt}] (A);	
						\fill[\actlcolour,opacity=\opacity] 
			  				(X) to[curve={height=-11pt}] (B) to[curve={height=11pt}] (A) to (X);						
						\fill[\kcolour,opacity=\opacity] 
			  				(X) to (A) to (Z) to (Y);		  						  						  					  				
			}	
]
	\alias{X} {(AB)I} && \alias{B} AB \\
	\alias{Y} {(AB')I} & \alias{A} {A(BI)} & {AB'} \\
	& \alias{Z} {A(B'I)}
	\arrow[""{name=0, anchor=center, inner sep=0}, from=1-1, to=2-1]
	\arrow["\alpha"', from=2-1, to=3-2]
	\arrow["{A \lact \rho}"', curve={height=6pt}, from=3-2, to=2-3]
	\arrow[""{name=1, anchor=center, inner sep=0}, from=1-3, to=2-3, "{A \lact b}"] 
	\arrow[""{name=2, anchor=center, inner sep=0}, "\rho", curve={height=-10pt}, from=1-1, to=1-3]
	\arrow["\alpha"{description}, from=1-1, to=2-2]
	\arrow[""{name=3, anchor=center, inner sep=0}, curve={height=6pt}, from=2-2, to=1-3]
	\arrow[""{name=4, anchor=center, inner sep=0}, from=2-2, to=3-2]
	\arrow[""{name=5, anchor=center, inner sep=0}, curve={height=-10pt}, from=2-2, to=1-3]
	\arrow["{\acttriangler^{\ract}}"{description}, draw=none, from=2, to=2-2]
	\arrow["\nu"{description}, draw=none, from=5, to=3]
	\arrow["{\cellOf{\kappa}_{A, b, I}}"{description, xshift=.5mm, yshift=-.5mm}, draw=none, from=0, to=4]
	\arrow["{A \lact \cellOf{\ractlambda}}"{description, xshift = -.5mm, yshift = .5mm}, curve={height=6pt}, draw=none, from=4, to=1]
\end{tikzcd}
\hspace{-1mm}
=
\hspace{-3mm}
\begin{tikzcd}[
	column sep = -.1em, 
	scalenodes = .8,
	execute at end picture={
						\foreach \nom in  {A,B,C, D, X}
			  				{\coordinate (\nom) at (\nom.center);}
						\fill[\extcolour,opacity=\opacity] 
			  				(A) to[curve={height=8pt}] (B) to[curve={height=10pt}] (A);	
						\fill[\actlcolour,opacity=\opacity] 
			  				(A) to[curve={height=-10pt}] (B) to[curve={height=11pt}] (C) to (A);		  			  						  					  				
			}	
]
	{(AB)I} && AB \\
	\alias{C} {(AB')I} && \alias{B} {AB'} \\
	& \alias{A} {A(B'I)}
	\arrow[""{name=0, anchor=center, inner sep=0}, from=1-1, to=2-1, "{(A \lact b) \ract I}"'] 
	\arrow["\alpha"', from=2-1, to=3-2]
	\arrow[""{name=1, anchor=center, inner sep=0}, "{A \lact \rho}"', curve={height=6pt}, from=3-2, to=2-3]
	\arrow[""{name=2, anchor=center, inner sep=0},from=1-3, to=2-3]
	\arrow["\rho", curve={height=-6pt}, from=1-1, to=1-3]
	\arrow[""{name=3, anchor=center, inner sep=0}, "\rho"{description}, curve={height=-9pt}, from=2-1, to=2-3]
	\arrow[""{name=4, anchor=center, inner sep=0}, curve={height=-9pt}, from=3-2, to=2-3]	%
	\arrow["\nu"{description}, draw=none, from=4, to=1]
	\arrow["{\acttriangler^{\ract}}"{description, yshift=2mm}, draw=none, from=3, to=3-2]
	\arrow["{\cellOf{\ractlambda}}"{description, yshift=2mm}, draw=none, from=0, to=2]
\end{tikzcd}
\]
\end{minipage}
\]
\vspace{-3mm}
\[
\begin{tikzcd}[
	row sep = 1em, 
	column sep = 3em, 
	scalenodes = .9,
	execute at end picture={
						\foreach \nom in  {A,B,C, D, X, Y, Z, W}
			  				{\coordinate (\nom) at (\nom.center);}
						\fill[\extcolour,opacity=\opacity] 
			  				(A) to[curve={height=10pt}] (B) to[curve={height=10pt}] (A);	
						\fill[\actpcolour,opacity=\opacity] 
			  				(X) to (B) to[curve={height=10pt}] (A) to (Y) to (X);	
						\fill[\kcolour,opacity=\opacity] 
							(Z) to (X) to (Y) to (A) to (W) to (Z);		  				  			  						  					  				
			}		
]
	\alias{Z} {((AB)C)D} && \alias{X} {((AB')C)D} \\
	{(A(BC)))D} & \alias{Y} {(A(B'C))D} \\
	\alias{W} {A((BC)D)} & \alias{A} {A((B'C)D)} \\
	{A(B(CD))} && \alias{B} {A(B'(CD))}
	\arrow["{((A \lact b) \ract C) \ract D}", from=1-1, to=1-3]
	\arrow[""{name=0, anchor=center, inner sep=0}, "{\alpha \ract D}"{description}, from=1-3, to=2-2]
	\arrow[""{name=1, anchor=center, inner sep=0}, from=2-2, to=3-2]
	\arrow[""{name=2, anchor=center, inner sep=0}, "{A \ract \alpha}", curve={height=-8pt}, from=3-2, to=4-3]
	\arrow[""{name=3, anchor=center, inner sep=0}, "{\alpha \circ\alpha}", from=1-3, to=4-3]
	\arrow[""{name=4, anchor=center, inner sep=0}, curve={height=8pt}, from=3-2, to=4-3]
	\arrow[""{name=5, anchor=center, inner sep=0}, "{\alpha \ract D}"', from=1-1, to=2-1]
	\arrow[""{name=6, anchor=center, inner sep=0}, "\alpha"', from=2-1, to=3-1]
	\arrow[""{name=7, anchor=center, inner sep=0}, "{A \lact \alpha}"', from=3-1, to=4-1]
	\arrow["{A \lact (b \ract CD)}"', from=4-1, to=4-3]
	\arrow[from=3-1, to=3-2]
	\arrow[from=2-1, to=2-2]
	\arrow["\nu"{description}, draw=none, from=2, to=4]
	\arrow["{\actpentagonator^{\ract}}"{description, xshift=2mm, yshift=1mm}, draw=none, from=1, to=3]
	\arrow["{\cellOf{\kappa}_{A, b, C} \ract D}"{description}, draw=none, from=5, to=0]
	\arrow["{\cellOf{\kappa}_{A, b \ract C, D}}"{description}, draw=none, from=6, to=1]
	\arrow["{A \lact \cellOf{\ractalpha}}"{description, xshift=-1mm, yshift =1mm}, draw=none, from=7, to=4]
\end{tikzcd}
\hspace{1mm}
=
\hspace{1mm}
\begin{tikzcd}[
		row sep = 1em, 
		column sep = 3em, 
		scalenodes = .9,
		execute at end picture={
							\foreach \nom in  {A,B,C, D, X, Y, Z, U}
				  				{\coordinate (\nom) at (\nom.center);}
							\fill[\extcolour,opacity=\opacity] 
				  				(A) to[curve={height=8pt}] (B) to[curve={height=8pt}] (A);	
							\fill[\actpcolour,opacity=\opacity] 
				  				(U) to (Z) to (B) to[curve={height=8pt}] (A) to (U);
							\fill[\kcolour,opacity=\opacity] 
								(Z) to (X) to (Y) to (B) to (Z);		  				  			  						  					  				
				}		
	]
	\alias{U} {((AB)C)D} && {((AB')C)D} \\
	{(A(BC)))D} & \alias{Z} {(AB)(CD)} & \alias{X} {(AB')(CD)} \\
	\alias{A} {A((BC)D)} \\
	\alias{B} {A(B(CD))} && \alias{Y} {A(B'(CD))}
	\arrow["{((A \lact b) \ract C) \ract D}", from=1-1, to=1-3]
	\arrow["{\alpha \ract D}"', from=1-1, to=2-1]
	\arrow[""{name=0, anchor=center, inner sep=0}, "\alpha"', from=2-1, to=3-1]
	\arrow[""{name=1, anchor=center, inner sep=0}, "{A \lact \alpha}"', curve={height=6pt}, from=3-1, to=4-1]
	\arrow["{A \lact (b \ract CD)}"', from=4-1, to=4-3]
	\arrow[""{name=2, anchor=center, inner sep=0}, "\alpha", from=1-3, to=2-3]
	\arrow[""{name=3, anchor=center, inner sep=0}, "\alpha", from=2-3, to=4-3]
	\arrow[""{name=4, anchor=center, inner sep=0}, "\alpha"{description}, from=1-1, to=2-2]
	\arrow[""{name=5, anchor=center, inner sep=0}, "\alpha"{description}, from=2-2, to=4-1]
	\arrow[from=2-2, to=2-3]
	\arrow[""{name=6, anchor=center, inner sep=0}, curve={height=-6pt}, from=3-1, to=4-1]
	\arrow["\nu"{description}, draw=none, from=1, to=6]
	\arrow["{\cellOf{\kappa}_{A, b, CD}}"{description}, draw=none, from=5, to=3]
	\arrow["{\cellOf{\ractalpha}}"{description}, draw=none, from=4, to=2]
	\arrow["{\actpentagonator^{\ract}}"{description, xshift = 2mm}, draw=none, from=0, to=2-2]
\end{tikzcd}
\]
\vspace{-2mm}
\[
\begin{tikzcd}[
		row sep = 1em, 
		column sep = 3em, 
		scalenodes = .9,
		execute at end picture={
							\foreach \nom in  {A,B,C, D, X, Y, Z, W, F}
				  				{\coordinate (\nom) at (\nom.center);}
							\fill[\extcolour,opacity=\opacity] 
				  				(A) to[curve={height=10pt}] (B) to[curve={height=10pt}] (A);	
							\fill[\actpcolour,opacity=\opacity] 
				  				(X) to (B) to[curve={height=10pt}] (A) to (Y) to (X);	
							\fill[\kcolour,opacity=\opacity] 
								(W) to (Y) to (A) to[curve={height=10pt}] (B) to (F) to (W);		  				  			  						  					  				
				}		
	]
	\alias{Z} {((AB)C)D} && \alias{X} {((AB)C')D} \\
	\alias{W} {(A(BC)))D} & \alias{Y} {(A(BC'))D} \\
	{A((BC)D)} & \alias{A} {A((BC')D)} \\
	\alias{F} {A(B(CD))} && \alias{B} {A(B(C'D))}
	\arrow["{(AB \lact c) \ract D}", from=1-1, to=1-3]
	\arrow[""{name=0, anchor=center, inner sep=0}, "{\alpha \ract D}"{description}, from=1-3, to=2-2]
	\arrow[""{name=1, anchor=center, inner sep=0}, from=2-2, to=3-2]
	\arrow[""{name=2, anchor=center, inner sep=0}, "{A \ract \alpha}", curve={height=-8pt}, from=3-2, to=4-3]
	\arrow[""{name=3, anchor=center, inner sep=0}, "{\alpha \circ\alpha}", from=1-3, to=4-3]
	\arrow[""{name=4, anchor=center, inner sep=0}, curve={height=8pt}, from=3-2, to=4-3]
	\arrow[""{name=5, anchor=center, inner sep=0}, "{\alpha \ract D}"', from=1-1, to=2-1]
	\arrow[""{name=6, anchor=center, inner sep=0}, "\alpha"', from=2-1, to=3-1]
	\arrow[""{name=7, anchor=center, inner sep=0}, "{A \lact \alpha}"', from=3-1, to=4-1]
	\arrow["{A \lact (B \lact (c \ract D)}"', from=4-1, to=4-3]
	\arrow[from=3-1, to=3-2]
	\arrow[from=2-1, to=2-2]
	\arrow["\nu"{description}, draw=none, from=2, to=4]
	\arrow["{\actpentagonator^{\ract}}"{description}, draw=none, from=1, to=3]
	\arrow["{\cellOf{\lactalpha} \ract D}"{description}, draw=none, from=5, to=0]
	\arrow["{\cellOf{\kappa}_{A, B \lact c, D}}"{description}, draw=none, from=6, to=1]
	\arrow["{A \lact \cellOf{\kappa}_{B, c, D}}"{description}, draw=none, from=7, to=4]
\end{tikzcd}
\hspace{1mm}
=
\hspace{1mm}
\begin{tikzcd}[
		row sep = 1em, 
		column sep = 3em, 
		scalenodes = .9,
		execute at end picture={
							\foreach \nom in  {A,B,C, D, X, Y, Z, U, F}
				  				{\coordinate (\nom) at (\nom.center);}
							\fill[\extcolour,opacity=\opacity] 
				  				(A) to[curve={height=8pt}] (B) to[curve={height=8pt}] (A);	
							\fill[\actpcolour,opacity=\opacity] 
				  				(U) to (Z) to (B) to[curve={height=8pt}] (A) to (U);
							\fill[\kcolour,opacity=\opacity] 
								(F) to (D) to (X) to (Z);		  				  			  						  					  				
				}				
	]
	\alias{F} {((AB)C)D} && \alias{D} {((AB)C')D} \\
	{(A(BC)))D} & \alias{Z} {(AB)(CD)} & \alias{X} {(AB)(C'D)} \\
	\alias{A} {A((BC)D)} \\
	\alias{B} {A(B(CD))} && \alias{Y} {A(B(C'D))}
	\arrow["{((AB \lact c) \ract D}", from=1-1, to=1-3]
	\arrow["{\alpha \ract D}"', from=1-1, to=2-1]
	\arrow[""{name=0, anchor=center, inner sep=0}, "\alpha"', from=2-1, to=3-1]
	\arrow[""{name=1, anchor=center, inner sep=0}, "{A \lact \alpha}"', curve={height=6pt}, from=3-1, to=4-1]
	\arrow["{A \lact (B \lact (c \ract D)}"', from=4-1, to=4-3]
	\arrow[""{name=2, anchor=center, inner sep=0}, "\alpha", from=1-3, to=2-3]
	\arrow[""{name=3, anchor=center, inner sep=0}, "\alpha", from=2-3, to=4-3]
	\arrow[""{name=4, anchor=center, inner sep=0}, "\alpha"{description}, from=1-1, to=2-2]
	\arrow[""{name=5, anchor=center, inner sep=0}, "\alpha"{description}, from=2-2, to=4-1]
	\arrow[from=2-2, to=2-3]
	\arrow[""{name=6, anchor=center, inner sep=0}, curve={height=-6pt}, from=3-1, to=4-1]
	\arrow["\nu"{description}, draw=none, from=1, to=6]
	\arrow["{\cellOf{\lactalpha}}"{description}, draw=none, from=5, to=3]
	\arrow["{\cellOf{\kappa}_{AB, c, D}}"{description}, draw=none, from=4, to=2]
	\arrow["{\actpentagonator^{\ract}}"{description, xshift = 2mm}, draw=none, from=0, to=2-2]
\end{tikzcd}
\]

\section{Proofs for  \texorpdfstring{\Cref{sec:freyd-bicats-and-actions}}{Section 4}}
\subsection{From Freyd action to Freyd bicategory}

Fix a Freyd action 
	$(\lact, \theta, \ract, \zeta, \kappa)$
over 
	$\J : \V \to \B$.
We construct a Freyd bicategory with the same underlying pseudofunctor.
For the unit of the premonoidal structure we take the unit $\tensu$ for $\V$. 
Next define
	$A \ltie (-) := A \lact (-)$
and
	$(-) \rtie B := (-) \ract B$. 
The icons $\theta$ and $\zeta$ for the Freyd action then determine the required icons component-wise: 
\[
	 A \ltie J(-) =  A \lact (-) \XRA{\theta_{A, -}} \J(A \tens -) 
	 \qquad,\qquad
	 \J(-) \rtie B  =  J(-) \ract B \XRA{\zeta_{-, B}} \J(- \tens B).
\]
The left- and right unitors are given by 
	$\actlambda^{\lact} : IA \to A$
and 
	$\actlambda^{\ract} : AI \to A$
respectively, and the associator by $\J(\alpha)$ with 2-cell components given by the witnessing 2-cells for
	$\actalpha^{\lact}$,
	$\kappa$,
and
	$\actalpha^{\ract}$. 
The compatibility laws of a Freyd action immediately give the compatibility laws of a Freyd bicategory.  
Similarly, the structural modifications are wholly determined by the definition of a Freyd bicategory: for example, the pentagonator $\pentagonator$ in $\B$ is $\J(\pentagonator)$ composed with $\theta$ and $\zeta$ as in \Cref{def:freyd-bicategory}.
The axioms of a premonoidal bicategory are then checked using the compatibility laws and the corresponding axioms in~$\V$.

It remains to show that the 2-cell components of $\theta$ and $\zeta$ are central and that $\J$ factors through the centre. The former is a short direct calculation. For the latter, for 
	$f : X \to X'$ 
in $\V$ and
	$a : A \to A'$
in $\B$ we define $\lc{f}_a$  
using $\theta, \zeta$ and the interchange laws for the pseudofunctors underlying the actions as in the diagram to the right; $\rc{f}_a$ is similar. 
We write 
$\nu$ for the composite 
	$\J f \ract g \XRA{\zeta} \J(f \tens g) \XRA{\theta} f \lact \J g$.
\begin{wrapfigure}[8]{r}{5cm}
\vspace{-8mm}
\[
\lc{\J f}_a :=
\begin{tikzcd}[
	row sep = 4em, 
	column sep = 5em,
	execute at end picture={
						\foreach \nom in  {A,B,C, D, X, Y}
			  				{\coordinate (\nom) at (\nom.center);}
						\fill[\extcolour,opacity=\opacity] 
			  				(A) to[curve={height=14pt}] (B) to[curve={height=14pt}] (A);
						\fill[\extcolour,opacity=\opacity] 
			  				(X) to[curve={height=14pt}] (Y) to[curve={height=14pt}] (X);			  				
	}   	
	]
	\alias{A} XA & \alias{B} {X'A} \\
	\alias{X} {XA'} & \alias{Y} {X'A'}
	\arrow[""{name=0, anchor=center, inner sep=0}, "{X \lact a}"{description}', from=1-1, to=2-1]
	\arrow[""{name=2, anchor=center, inner sep=0}, "{X' \lact a}"{description}, from=1-2, to=2-2]
	\arrow["{f \lact A'}"{name = 3}, curve={height=-12pt}, from=2-1, to=2-2]
	\arrow["{\J(f) \ract A}"{name = 4}, curve={height=-12pt}, from=1-1, to=1-2]
	\arrow[""{name=5, anchor=center, inner sep=0}, "{f \lact A}"{swap}, curve={height=12pt}, from=1-1, to=1-2]
	\arrow[""{name=7, anchor=center, inner sep=0}, "{\J(f) \ract A'}"', curve={height=12pt}, from=2-1, to=2-2]
	\arrow["\cong"{description}, draw=none, from=0, to=2]
	\arrow["\nu"{description}, draw=none, from=4, to=5]
	\arrow["\nu"{description}, draw=none, from=3, to=7]
\end{tikzcd}
\]
\end{wrapfigure}
Thus, we define $\J'(f) := (f, \lc{\J f}, \rc{\J f})$. 
For any 2-cell $\sigma : f \To f'$ in $\V$, we get that $\J(\sigma)$ is natural by naturality of all the data defining $\lc{\J f}$ and $\rc{\J f}$. 
Finally, one shows that the unit and compositor for $\J$ are central using the identity and composition laws of the icons $\theta$ and $\zeta$. 

In summary, we have the following:

\begin{proposition} %
\label{res:Freyd-action-to-Freyd-bicategory}
Every Freyd action with underlying pseudofunctor
	$\J : \V \to \B$
determines a Freyd bicategory with the same underlying pseudofunctor. 
\end{proposition}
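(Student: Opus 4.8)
The plan is to unfold the data $(\lact, \theta, \ract, \zeta, \kappa)$ of the given Freyd action over $\J : \V \to \B$ into the data of a Freyd bicategory with the same underlying pseudofunctor, and then to check the coherence conditions by transporting those of $\monoidal\V$ along the compatibility laws of the two extensions. Concretely, I would set $A \ltie (-) := A \lact (-)$ and $(-) \rtie B := (-) \ract B$; fixing one argument of a pseudofunctor gives a pseudofunctor, and since $\theta$ and $\zeta$ are icons (so have identity $1$-cell components) one reads off $A \lact B = A \tens B = A \ract B$ on objects, so $\binoidal{\B}$ is binoidal with joint value $A \tens B$. The icons witnessing $0$-strictness of $\J$ are the component-wise restrictions $A \lact (-) \XRA{\theta_{A,-}} \J(A \tens -)$ and $(-) \ract B \XRA{\zeta_{-,B}} \J(- \tens B)$. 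I take the unitors $\lambda_A := \actlambda^{\lact}_A$, $\rho_A := \actlambda^{\ract}_A$, the associator $\alpha_{A,B,C} := \J(\alpha_{A,B,C})$, and extract the three families of witnessing $2$-cells making $\alpha$ pseudonatural in each argument separately from the pseudonaturality data of $\actalpha^{\lact}$ (first argument), of $\kappa$ (middle argument), and of $\actalpha^{\ract}$ (third argument). The structural modifications $\pentagonator, \montrianglel, \montrianglem, \montriangler$ are $\J$ applied to the pentagonator and triangulators of $\V$, whiskered by $\theta$ and $\zeta$ precisely as prescribed by \Cref{def:freyd-bicategory} and the equations of \Cref{sec:equations-for-freyd-bicats}.

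\textbf{Centre and factorization.} Next I would supply the centrality data. For the structural $1$-cells I simply \emph{declare} $\lc{\lambda_A}, \rc{\lambda_A}$ (and likewise for $\rho$ and $\alpha$) to be the pseudonaturality $2$-cells of the relevant actions, so that axiom~(1) of \Cref{def:freyd-bicategory} holds by construction and the equivalences of \eqref{eq:2-cells-for- alpha} become central. For a value $f : X \to X'$ in $\V$ and any $a : A \to A'$ in $\B$ I define $\lc{\J f}_a$, and dually $\rc{\J f}_a$, by the pasting diagram built from $\theta$, $\zeta$ and the interchange isomorphisms of the pseudofunctors $\lact, \ract$, writing $\nu$ for the composite $\J f \ract g \XRA{\zeta} \J(f \tens g) \XRA{\theta} f \lact \J g$. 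One checks, using naturality of all this data, that these assemble into pseudonatural transformations, so $\J_{\centre}(f) := (\J f, \lc{\J f}, \rc{\J f})$ is a central $1$-cell; that $\J(\sigma)$ is a central $2$-cell for every $2$-cell $\sigma$ of $\V$; and that the unit and compositor of $\J$ are central, by the identity and composition laws of $\theta$ and $\zeta$. This yields the binoidal pseudofunctor $\J_{\centre}$ factoring $\J$ through $\centreOf{\B}$.

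\textbf{Verifying the axioms.} It then remains to check that $\binoidal{\B}$ with this data is a premonoidal bicategory (\Cref{def:premonoidal-bicategory}), that $\J$ is an identity-on-objects $0$-strict premonoidal pseudofunctor, and that the three Freyd-bicategory axioms of \Cref{def:freyd-bicategory} hold. For the first two, every required equation of $2$-cells is obtained by transporting the corresponding equation of $\monoidal\V$ along the extension compatibility laws of \Cref{sec:equations-for-freyd-bicats}, using the coherence of monoidal bicategories and of pseudofunctors; the modification conditions on $\pentagonator, \montrianglel, \montrianglem, \montriangler$ hold because, in the open argument, each transformation is pseudonatural without any centrality assumption (as observed after \Cref{def:premonoidal-bicategory}), except where centrality does intervene, in which case one of the $\kappa$-axioms of \Cref{def:freyd-action} and \Cref{app:missingaxioms} supplies exactly the missing equation. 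For the Freyd-bicategory axioms: axiom~(1) is by construction; axiom~(2) is the displayed equation of \Cref{def:freyd-action} read under the identifications $\ltie = \lact$, $\rtie = \ract$; and axiom~(3) is its specialisation to $x, y$ both in $\V$, where $\nu$ collapses to the interchange isomorphism of $\V$ via \eqref{eq:centrality-from-interchange}.

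\textbf{Main obstacle.} The genuine work, as always with bicategorical definitions, is the $2$-cell bookkeeping: verifying that the centrality witnesses $\lc{\J f}, \rc{\J f}$ assembled from $\theta, \zeta$ are pseudonatural and satisfy the compatibility laws of \Cref{res:assoc-binoidal-bicat-has-binoidal-centre} and \Cref{def:freyd-bicategory}, and confirming that the extra coherence axioms listed in \Cref{app:missingaxioms} are precisely what is needed for $\pentagonator, \montrianglel, \montrianglem, \montriangler$ to be modifications in the premonoidal, rather than merely monoidal, sense. These are large but routine pasting computations, and I would organise them by matching each Freyd-bicategory axiom with the single Freyd-action axiom that implies it.
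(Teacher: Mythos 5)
Your proposal is correct and follows essentially the same route as the paper's proof: identical choices of binoidal structure ($\ltie := \lact$, $\rtie := \ract$), icons from the components of $\theta$ and $\zeta$, unitors and associator from $\actlambda^{\lact}$, $\actlambda^{\ract}$, $\J(\alpha)$ with $2$-cells from $\actalpha^{\lact}$, $\kappa$, $\actalpha^{\ract}$, structural modifications as $\J$ of those of $\V$ pasted with $\theta,\zeta$, and the same construction of $\lc{\J f}_a$ from $\theta$, $\zeta$, $\nu$ and the interchange isomorphisms of the action pseudofunctors. The only imprecision is your claimed one-to-one matching of Freyd-bicategory axiom~(2) with the displayed $\kappa$-equation of \Cref{def:freyd-action} (that equation has a \emph{value} in the middle slot, whereas axiom~(2) concerns an arbitrary $1$-cell of $\B$, so the relevant inputs are really the modification conditions of \Cref{app:missingaxioms}); this does not affect the construction, and the paper is no more explicit on this point.
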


\subsection{From Freyd bicategory to Freyd action}
Let $\freydCat = (\V \xra{\J} \B)$ be a Freyd bicategory. 
First we shall show how to construct a left action 
	$\lact : \V \times \B \to B$; 
the right action is constructed similarly. 
Thereafter we shall show how to construct the rest of the data for a Freyd action.

\paragraph*{From Freyd bicategory to a left action.}
\begin{wrapfigure}[7]{r}{6cm}
\vspace{-9mm}
\[
\begin{tikzcd}[row sep = 1.8em, column sep = 3em]
	XB \\
	{X'B} & {X''B} \\
	{X'B'} & {X''B'} & {X''B''}
	\arrow["{\J(f) \rtie B}"', from=1-1, to=2-1]
	\arrow[""{name=0, anchor=center, inner sep=0}, "{X' \rtie b}"', from=2-1, to=3-1]
	\arrow["{\J(f') \rtie B'}"', from=3-1, to=3-2]
	\arrow["{X'' \ltie b}"', from=3-2, to=3-3]
	\arrow["{\J(f') \rtie B}", from=2-1, to=2-2]
	\arrow[""{name=1, anchor=center, inner sep=0}, "{X'' \ltie b}"{description}, from=2-2, to=3-2]
	\arrow[""{name=2, anchor=center, inner sep=0}, "{\J(f'f) \rtie B}", curve={height=-6pt}, from=1-1, to=2-2]
	\arrow[""{name=3, anchor=center, inner sep=0}, "{X'' \ltie (b'b)}", curve={height=-6pt}, from=2-2, to=3-3]
	\arrow["{\lc{\J f'}_b}"{description}, draw=none, from=0, to=1]
	\arrow["\iso"{description}, draw=none, from=2-1, to=2]
	\arrow["\iso"{description}, draw=none, from=3-2, to=3]
\end{tikzcd}
\]
\end{wrapfigure}
We get a left action 
	$\lact : \V \times \B \to \V$
 as follows. 
 On objects, we set 
 	$X \lact B := X \tens B$.
The action on 1-cells is
	$f \lact b := 
		\big(
			XB 
				\xra{\J(f) \rtie B}
			X'B 
				\xra{X' \ltie b}
			X'B'
		\big)$ 
with the evident action on 2-cells. The unitor is constructed from the unitors for the premonoidal structure, 
but the compositor relies on centrality. We define $\phi_{x, b}$ as on the right, where we write just $\iso$ for the compositors.

Next note that 
	$X \lact b = (X \ltie b) \circ (\J\Id_X \rtie B)$
so the unitor also gives a canonical structural isomorphism 
	$(X \lact b) \iso (X \ltie b)$
yielding an icon 
	$(X \lact -) \To (X \ltie -)$. 
So we may define the unitor to be the composite
	$\actlambda := (\tensu \lact -) \XRA\iso (\tensu \ltie -) \XRA\lambda \id$.
For the associator, we take the 1-cell components to be as for the premonoidal structure in $\B$, so that
	$\actalpha_{X, Y, C} := \alpha_{X,Y,C}$,
and define the 2-cell components using  $\theta, \zeta$, and the associator for the premonoidal structure: 
\begin{equation*}
		\cellOf{\actalpha}_{f,g,c} :=
\begin{tikzcd}[column sep = 5em, row sep =1.3em]
	{(XY)C} && {(X'Y')C} & {(X'Y')C'} \\
	& {(X'Y)C} \\
	{X(YC)} & {X'(YC)} & {X'(Y'C)} & {X'(Y'C')} 
	\arrow[""{name=0, anchor=center, inner sep=0}, from=1-1, to=2-2, curve={height=-12pt}]
	\arrow["{\J(f \tens g) \rtie C }", from=1-1, to=1-3]
	\arrow[""{name=1, anchor=center, inner sep=0}, ""'{yshift=-1mm, xshift=1mm, swap}, from=2-2, to=1-3,curve={height=-12pt}] %
	\arrow[""{name=2, anchor=center, inner sep=0}, "\alpha"', from=1-1, to=3-1]
	\arrow["{\J(f) \rtie (YC)}"', from=3-1, to=3-2]
	\arrow["{(X'Y')\ltie a}", from=1-3, to=1-4]
	\arrow["{X' \ltie (\J(g) \rtie C)}"', from=3-2, to=3-3]
	\arrow["{X' \rtie (Y' \rtie a)}"', from=3-3, to=3-4]
	\arrow[""{name=3, anchor=center, inner sep=0}, "\alpha", from=1-4, to=3-4]
	\arrow[""{name=4, anchor=center, inner sep=0}, from=1-3, to=3-3]
	\arrow[from=2-2, to=3-2]
	\arrow[""{name=5, anchor=center, inner sep=0}, curve={height=12pt}, from=2-2, to=1-3, "(X\J(g))C"{swap, yshift=1mm}] %
	\arrow[""{name=6, anchor=center, inner sep=0}, curve={height=12pt}, from=1-1, to=2-2, "{(\J (f)Y)C}"'{}]
	\arrow["{\cellOf{\alpha}_{X', Y', c}}"{description, xshift=-2mm}, draw=none, from=4, to=3]
	\arrow["{\cellOf{\alpha}_{\J f, Y, C}}"{description, yshift=-2mm}, shift right=5, draw=none, from=2, to=2-2]
	\arrow["{\cellOf{\alpha}_{X', \J g, C}}"{description, pos=0.6,yshift=-2mm}, shift right=5, draw=none, from=2-2, to=4]
	\arrow["\iso"{description, yshift =-2mm}, draw=none, from=0, to=1]
	\arrow["{\theta C }"{description}, draw=none, from=1, to=5]
	\arrow["{\zeta C}"{description}, draw=none, from=0, to=6]
\end{tikzcd}
\end{equation*}

The compatibility laws on 
	$\cellOf{\actlambda}$
and 
	$\cellOf{\actalpha}$
hold by the corresponding compatibility laws of a Freyd bicategory.
Turning now to the structural modifications, because the structural transformations agree with those of $\B$ on 1-cells, we take the corresponding modifications for the premonoidal structure. 
Showing these are indeed modifications relies on  the condition that 
	$\lc{\J f}_{\J g} = (\rc{\J g}_{\J f})^{-1}$. 
Consider the case of $\acttrianglem$. As 2-cells, 
	$\acttrianglem_{A,B} = \mathfrak{m}_{A,B}$
but $\acttrianglem$ is required to be a modification in two arguments, while the axioms of a premonoidal bicategory make
	$\mathfrak{m}_{A,B}$
a modification in each argument separately: in one argument, using $\rc{\lambda}$, and in the other argument using $\lc{\rho}$. 
Unpacking the equations for showing $\acttrianglem$ is a modification at maps
	$a : A \to A'$
and 
	$x : X \to X'$, 
we get an instance of $\lc{\J a}_{\lambda}$ arising from the compositor for $\lact$. To apply the modification law for $\mathfrak{m}$, therefore, we first need to pass through the equality
	$\lc{\J a}_{\lambda} = \lc{\J a}_{\J\lambda} = (\rc{\J\lambda}_{\J a})^{-1}$. 

The axioms of an action hold immediately from the axioms of a premonoidal bicategory.
The proof for the right action case is analogous, except one sets 
	$
	a \ract g := 
		\big( 
				AY \xra{a \rtie Y} A'Y \xra{A' \ltie \J g} A'Y'
		\big)
	$
and defines the compositor using right centrality. 
In summary, therefore, we have the following. 

\begin{proposition}
\label{res:freyd-bicategory-determines-two-actions}
Every Freyd bicategory
	$(\V \xra{\J} \B)$
determines a left action 
	$\lact : \V \times \B \to \B$
and a right action 
	$\ract : \B \times \V \to \V$.
\end{proposition}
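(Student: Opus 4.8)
The plan is to unpack the data of the Freyd bicategory $\freydCat = (\V \xra{\J} \B)$ and repackage it as a left action $\lact$ and a right action $\ract$ of $\V$, following the blueprint laid out in the paragraph above; the proposition records that the resulting verification succeeds. I would proceed in four steps. First I would build the underlying pseudofunctor $\lact : \V \times \B \to \B$: on objects $X \lact B := X \tens B$; on a pair of $1$-cells $f : X \to X'$ in $\V$ and $b : B \to B'$ in $\B$, $f \lact b := (X' \ltie b) \circ (\J f \rtie B)$, with the evident action on $2$-cells. The unitor is inherited from the unitor of the binoidal pseudofunctor $X \ltie (-)$ together with the identification $X \lact b = (X \ltie b) \circ (\J \Id_X \rtie B) \iso (X \ltie b)$. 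The compositor $\phi_{x,b}$ is the first place centrality is used: composing $f' \lact b'$ after $f \lact b$ yields the string $(X'' \ltie b') \circ (\J f' \rtie B') \circ (X' \ltie b) \circ (\J f \rtie B)$, and one reaches $(f'f) \lact (b'b)$ by commuting $\J f' \rtie B'$ past $X' \ltie b$ using the centrality witness $\lc{\J f'}_b$ supplied by the factorization of $\J$ through $\centreOf\B$; the remaining reshuffles are compositors of $\ltie$ and $\rtie$, displayed in the diagram above.

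Second I would supply the structural transformations. For the unitor take $\actlambda := (\tensu \lact -) \XRA{\iso} (\tensu \ltie -) \XRA{\lambda} \id$, and for the associator take $\actalpha_{X,Y,C} := \alpha_{X,Y,C}$ on $1$-cells. The $2$-cell components $\cellOf{\actalpha}_{f,g,c}$ are obtained by pasting the three separate-argument naturality cells $\cellOf{\alpha}_{\J f, Y, C}$, $\cellOf{\alpha}_{X', \J g, C}$, $\cellOf{\alpha}_{X', Y', c}$ of the premonoidal associator together with the icons $\theta$ and $\zeta$, exactly as in the display above; invertibility is automatic since every constituent is invertible. The pseudonaturality equations for $\actlambda$ and $\actalpha$ then follow immediately from the compatibility laws of a Freyd bicategory (\Cref{sec:equations-for-freyd-bicats}).

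Third I would install the structural modifications $\actpentagonator$, $\acttrianglem$, $\acttrianglel$. As $2$-cells these are literally the premonoidal modifications $\pentagonator$, $\montrianglem$, $\montrianglel$ of $\B$; the content is that each is now required to be a modification in two arguments simultaneously, whereas the premonoidal axioms only make it a modification in each argument separately --- in one slot via $\rc{\lambda}$ (resp.\ $\lc{\rho}$), in the other via $\lc{\lambda}$ (resp.\ $\rc{\rho}$). Reconciling these forces one to pass through the equality $\lc{\J f}_{\J g} = (\rc{\J g}_{\J f})^{-1}$, which is the compatibility axiom of \Cref{def:freyd-bicategory} relating the two interchange witnesses for a pair of values. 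I expect this to be the main obstacle: unpacking the modification law for, say, $\acttrianglem$ at maps $a : A \to A'$ and $x : X \to X'$ produces an instance of $\lc{\J a}_{\lambda}$ coming from the compositor $\phi$, and one must rewrite it as $\lc{\J a}_{\J\lambda} = (\rc{\J\lambda}_{\J a})^{-1}$ before the premonoidal axiom for $\montrianglem$ can be applied, all while carefully tracking the icons $\theta$ and $\zeta$ through the pasting. The bookkeeping here is the real work.

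Finally, the action axioms --- the action pentagon and the two triangles --- hold immediately from the corresponding axioms of a premonoidal bicategory, since on $1$-cells the action data coincides with that of $\B$ and the $2$-cell data was chosen to match. The right action $\ract : \B \times \V \to \B$ is then constructed symmetrically, with $a \ract g := (A' \ltie \J g) \circ (a \rtie Y)$ and the compositor built from right-centrality $\rc{-}$; no new ideas are required, so I would state it as an analogous computation.
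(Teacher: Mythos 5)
Your proposal follows essentially the same route as the paper's own proof: the same definition of $f \lact b$ as $(X' \ltie b) \circ (\J f \rtie B)$ with compositor built from the centrality witness $\lc{\J f'}_b$, the same unitor $\actlambda$ via the icon $(\tensu \lact -) \To (\tensu \ltie -)$, the same pasting of $\theta$, $\zeta$ and the separate-argument associator cells for $\cellOf{\actalpha}$, and crucially the same identification of the axiom $\lc{\J x}_{\J y} = (\rc{\J y}_{\J x})^{-1}$ as the key point needed to upgrade the premonoidal modifications (illustrated on $\acttrianglem$ via $\lc{\J a}_{\lambda} = \lc{\J a}_{\J\lambda} = (\rc{\J\lambda}_{\J a})^{-1}$) into two-argument modifications for the action. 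The argument is correct and matches the paper's in both structure and the technical points it isolates.
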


\paragraph*{From Freyd bicategory to Freyd action.}

It remains to show the actions just constructed extend the canonical action of $\V$ on itself, and show they are compatible. 
\begin{wrapfigure}[5]{r}{6cm}
\vspace{-4mm}
\centering
\begin{tikzcd}
	& {X'Y} \\
	XY && {X'Y'}
	\arrow[""{name=0, anchor=center, inner sep=0}, "{\J(f) \rtie Y}", curve={height=-12pt}, from=2-1, to=1-2]
	\arrow[""{name=1, anchor=center, inner sep=0}, "{X' \ltie \J(g)}", curve={height=-12pt}, from=1-2, to=2-3]
	\arrow[""{name=2, anchor=center, inner sep=0}, "{\J(f \tens g)}"', from=2-1, to=2-3]
	\arrow[""{name=3, anchor=center, inner sep=0}, curve={height=12pt}, from=2-1, to=1-2]
	\arrow[""{name=4, anchor=center, inner sep=0}, curve={height=12pt}, from=1-2, to=2-3]
	\arrow["\zeta"{description}, draw=none, from=0, to=3]
	\arrow["\theta"{description}, draw=none, from=1, to=4]
	\arrow["\iso"{description}, draw=none, from=1-2, to=2]
\end{tikzcd}
\end{wrapfigure}
First we define icons $\theta'$ and $\zeta'$ by noting that 
$
	f \lact \J(g) 
	=
	(X' \ltie \J(g)) \circ (\J(f) \rtie Y)
	=
	\J(f) \ract g 
$
so that we can set $\theta'_{f,g}$ and $\zeta'_{f,g}$ both to be the composite diagram on the right.
In particular, $\theta'_{f, Y}$ and $\zeta'_{X, g}$ are just $\theta_f$ and $\zeta_g$, respectively, composed with structural isomorphisms. 
	
Now we define $\kappa$. On 1-cells we take just $\alpha$, but on 2-cells we take a definition similar to the proof of naturality in the 1-dimensional case:
for 
	$f : X \to X'$
and 
	$h : Z \to Z'$
in $\V$ and 
	$b : B \to B'$
in $\B$ we take: 
\begin{equation*}
\cellOf{\kappa}_{f,b,h} :=
\begin{tikzcd}[column sep = 4em, row sep = 1em]
	{(XB)Z} && {(X'B')Z} & {(X'B')Z'} \\
	& {(X'B)A} \\
	{X(BZ)} & {X'(BZ)} & {X'(B'Z)} & {X'(B'Z')} 	
	\arrow[
		swap,
		rounded corners,
		to path=
		{ -- ([yshift=.6cm]\tikztostart.north)
		-| ([yshift=.6cm]\tikztotarget.south)
		-- ([yshift=.7cm]\tikztotarget.south)
		}, 
		from=1-1, to=1-4
	]
	\arrow[
			swap,
			rounded corners,
			to path=
			{ -- ([yshift=-.4cm]\tikztostart.south)
			-| ([yshift=-.4cm]\tikztotarget.south)
			-- ([yshift=0cm]\tikztotarget.south)
			}, 
			from=3-2, to=3-4
	]
	\arrow[
			swap,
			rounded corners,
			to path=
			{ -- ([yshift=-.9cm]\tikztostart.south)
			-| ([yshift=-.9cm]\tikztotarget.south)
			-- ([yshift=0cm]\tikztotarget.south)
			}, 
			from=3-1, to=3-4
	]
	\arrow[from=3-2, to=3-4, "{X' \ltie (b \ract h)}", yshift=-1.2cm, draw=none]
	\arrow[from=3-2, to=3-4, "{\iso}", yshift=-.6cm, draw=none]
	\arrow[from=3-1, to=3-4, "{f \lact (b \ract h)}", yshift=-1.7cm, draw=none]
	\arrow[from=1-1, to=1-4, "{(f \lact b) \ract h}", yshift=1cm, draw=none]
	\arrow[""{name=0, anchor=center, inner sep=0}, "{(\J (f)B)Z}"{description, xshift=-2mm}, from=1-1, to=2-2]
	\arrow["{(f \lact b) \rtie Z }", from=1-1, to=1-3]
	\arrow[""{name=1, anchor=center, inner sep=0}, "{(X \ltie b)Z}"{description, xshift=2mm}, from=2-2, to=1-3]
	\arrow[""{name=2, anchor=center, inner sep=0}, "\alpha"', from=1-1, to=3-1]
	\arrow["{\J(f) \rtie (BZ)}"', from=3-1, to=3-2]
	\arrow["{(X'B')\ltie \J h}", from=1-3, to=1-4]
	\arrow["{X' \ltie (b \rtie Z)}"', from=3-2, to=3-3]
	\arrow["{X' \rtie (B' \rtie \J h)}"', from=3-3, to=3-4]
	\arrow[""{name=3, anchor=center, inner sep=0}, "\alpha", from=1-4, to=3-4]
	\arrow[""{name=4, anchor=center, inner sep=0}, from=1-3, to=3-3]
	\arrow[from=2-2, to=3-2]
	\arrow["{\cellOf{\alpha}_{X', B', \J h}}"{description}, draw=none, from=4, to=3]
	\arrow["\iso"{description}, draw=none, from=0, to=1]
	\arrow["{\cellOf{\alpha}_{\J f, B, Z}}"{description}, yshift=-3mm, draw=none, from=2, to=2-2]
	\arrow["{\cellOf{\alpha}_{X', b, Z}}"{description}, yshift=-3mm, draw=none, from=2-2, to=4]
\end{tikzcd}
\end{equation*}
	
The rest of the equations to check for the Freyd action are proven by applying the various compatibility laws to massage the statement into the corresponding axiom given by the definition of a Freyd bicategory. 
This completes the proof of the following.

\begin{proposition}
\label{res:Freyd-bicategory-to-Freyd-action}
Every Freyd bicategory
	$(\V \xra{\J} \B)$
determines a Freyd action with the same underlying pseudofunctor.
\end{proposition}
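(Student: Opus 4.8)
The plan is to unpack the definition of a Freyd action (\Cref{def:freyd-action}) and produce each piece of its data directly from the premonoidal structure and the centre factorization carried by $\freydCat$, then verify the axioms by reducing them to the premonoidal-bicategory axioms already known in $\V$ and $\B$. Since the underlying pseudofunctor $\J$ is to be left unchanged, all the work lies in building the two actions, the icons exhibiting them as extensions of the canonical self-action, and the associator equivalence $\kappa$.

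First I would build the left action $\lact : \V \times \B \to \B$. On objects set $X \lact B := X \tens B$; on 1-cells set $f \lact b := (X' \ltie b) \circ (\J f \rtie B)$, with the evident assignment on 2-cells. The unitor and associator 1-cells are inherited from the premonoidal structure of $\B$, so $\actlambda^{\lact} := \lambda$ and $\actalpha^{\lact} := \alpha$, while the associator 2-cell components are assembled from the premonoidal cells $\cellOf{\alpha}$ glued by the icons $\theta, \zeta$. The one genuinely non-formal point is the compositor of $\lact$: because $\lact$ must respect composition in both the $\V$- and $\B$-directions, interchanging a $\V$-map past a $\B$-map forces use of the chosen centrality witness $\lc{\J f}$ supplied by the factorization $\J_\centre$ through $\centreOf\B$. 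This is exactly the content I would record as \Cref{res:freyd-bicategory-determines-two-actions}. The right action $\ract : \B \times \V \to \B$ is built symmetrically, with $a \ract g := (A' \ltie \J g) \circ (a \rtie Y)$ and compositor defined using right centrality $\rc{\J g}$.

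Next I would exhibit both actions as extensions of the canonical self-action of $\V$. The key identity is that on values the two actions agree: $f \lact \J g = (X' \ltie \J g) \circ (\J f \rtie Y) = \J f \ract g$, so a single composite of the premonoidal icons $\theta, \zeta$ (together with the structural isomorphisms of $\J$) furnishes both the left-extension icon $\theta'_{f,g} : f \lact \J g \To \J(f \tens g)$ and the right-extension icon $\zeta'_{f,g}$, and invertibility is immediate from invertibility of $\theta, \zeta$ and the structural isomorphisms. I would then define $\kappa$ to have 1-cell components $\J(\alpha)$ and 2-cell components $\cellOf{\kappa}_{f,b,h}$ assembled, as for the associator above, from three copies of $\cellOf{\alpha}$ glued by $\theta$ and $\zeta$; pseudonaturality of $\kappa$ then reduces to pseudonaturality of $\alpha$ in its middle argument.

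The main obstacle is the final axiom-checking, in particular the modification conditions for the structural 2-cells and the coherence equations of \Cref{app:missingaxioms} for $\kappa$. Here the decisive ingredient is axiom (3) of a Freyd bicategory, $\lc{\J x}_{\J y} = (\rc{\J y}_{\J x})^{-1}$: it is what allows the single premonoidal cell $\mathfrak{m}_{A,B}$ — which is only a modification in each argument separately, in one argument via $\rc{\lambda}$ and in the other via $\lc{\rho}$ — to become a modification $\acttrianglem$ in both arguments simultaneously. Concretely, unpacking the two-argument modification law at maps $a$ and $x$ produces an instance of $\lc{\J a}_{\lambda}$ coming from the compositor of $\lact$, and I would first rewrite $\lc{\J a}_{\lambda} = \lc{\J a}_{\J\lambda} = (\rc{\J\lambda}_{\J a})^{-1}$ before invoking the premonoidal modification axiom. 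The remaining equations I expect to dispatch uniformly: each is massaged using the compatibility laws of \Cref{def:premonoidal-pseudofunctor} until it becomes the corresponding monoidal-bicategory axiom holding already in $\V$.
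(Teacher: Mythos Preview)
Your proposal is correct and follows essentially the same approach as the paper: you build the two actions with the same formulas $f \lact b := (X' \ltie b) \circ (\J f \rtie B)$ and $a \ract g := (A' \ltie \J g) \circ (a \rtie Y)$, use centrality for the compositors, observe the coincidence $f \lact \J g = \J f \ract g$ to define the extension icons, take $\kappa$ with 1-cell components $\alpha$ and 2-cell components assembled from three copies of $\cellOf{\alpha}$, and handle the two-variable modification conditions via the rewrite $\lc{\J a}_{\J\lambda} = (\rc{\J\lambda}_{\J a})^{-1}$. The paper carries out exactly this programme, including the same illustrative discussion of $\acttrianglem$.
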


\subsection{The correspondence theorem}

\CorrespondenceTheorem*
\begin{proof}
We define functors 
	$F :\FreydAct{\V, \J, \B}
		\leftrightarrows
		\FreydBicat{\V, \J, \B} : G$
given on objects by the constructions in 
	\Cref{res:Freyd-action-to-Freyd-bicategory}
and 
	\Cref{res:Freyd-bicategory-to-Freyd-action}
respectively. 
So suppose 
	$(\lefttrans, \righttrans)$
is a map in $\FreydAct{\V, \J, \B}$. Then 
	$F(\lefttrans, \righttrans) := (F\lefttrans, F\righttrans)$
is defined by taking
\begin{align*}
	(F\lefttrans)^A_{f} 
		:= 
		\big( 
			(A \ltie f)  
			= (\Id_A \lact f) 
			\XRA{\lefttrans_{\Id_A, f}} (\Id_A \lact' f) 
			= (A \ltie' f) 
		\big) \\
	(F\righttrans)^A_{f} 
		:= 
		\big(
			(f \rtie A)  
			= (f \ract \Id_A) 
			\XRA{\righttrans_{f, \Id_A}} (f \ract' \Id_A) 
			= (f \rtie A)
		\big) 
\end{align*}
Conversely, given a map 
	$(\lefttrans, \righttrans)$
in $\FreydBicat{\V, \J, \B}$ we define 
	$G(\lefttrans, \righttrans) := (G\lefttrans, G\righttrans)$
to be 
\begin{equation*}
(G\lefttrans)_{f, b} :=
\begin{tikzcd}[column sep = 4em]
	XB & {X'B} & {X'B'}
	\arrow[""{name=0, anchor=center, inner sep=0}, curve={height=-12pt}, from=1-1, to=1-2, 
					"{\J f \rtie B}"]
	\arrow[""{name=1, anchor=center, inner sep=0}, curve={height=12pt}, from=1-1, to=1-2,
					"{\J f \rtie' B}"{swap}]
	\arrow[""{name=2, anchor=center, inner sep=0}, curve={height=-12pt}, from=1-2, to=1-3,
					"X' \ltie b"]
	\arrow[""{name=3, anchor=center, inner sep=0}, curve={height=12pt}, from=1-2, to=1-3,
					"X' \ltie' b"{swap}]
	\arrow[shorten <=2pt, shorten >=2pt, Rightarrow, from=2, to=3,
					"\:\:\lefttrans^{X'}_{b}"]
	\arrow[shorten <=2pt, shorten >=2pt, Rightarrow, from=0, to=1,
					"\:\:\righttrans^{B}_{\J f}"]
	\arrow[
			swap,
			rounded corners,
			to path=
			{ -- ([yshift=.8cm]\tikztostart.north)
			-| ([yshift=.8cm]\tikztotarget.south)
			-- ([yshift=.6cm]\tikztotarget.south)
			}, 
			from=1-1, to=1-3
		]
	\arrow[
			swap,
			rounded corners,
			to path=
			{ -- ([yshift=-.8cm]\tikztostart.south)
			-| ([yshift=-.8cm]\tikztotarget.north)
			-- ([yshift=-.6cm]\tikztotarget.north)
			}, 
			from=1-1, to=1-3
	]
	\arrow["", draw = none, from=1-1, to=1-3, "f \lact b", yshift = 1.1cm]
	\arrow["", draw = none, from=1-1, to=1-3, "\iso", yshift = .5cm]	
	\arrow["", draw = none, from=1-1, to=1-3, "\iso", yshift = -.9cm]		
	\arrow["", draw = none, from=1-1, to=1-3, "f \lact' b", yshift = -1.6cm]
\end{tikzcd}
\quad , \quad
(G\righttrans)_{a, g} :=
\begin{tikzcd}[column sep = 4em]
	AX & {A'X} & {A'X'}
	\arrow[""{name=0, anchor=center, inner sep=0}, curve={height=-12pt}, from=1-1, to=1-2, 
					"{a \rtie X}"]
	\arrow[""{name=1, anchor=center, inner sep=0}, curve={height=12pt}, from=1-1, to=1-2,
					"{a \rtie' X}"{swap}]
	\arrow[""{name=2, anchor=center, inner sep=0}, curve={height=-12pt}, from=1-2, to=1-3,
					"A' \ltie \J g"]
	\arrow[""{name=3, anchor=center, inner sep=0}, curve={height=12pt}, from=1-2, to=1-3,
					"A' \ltie' \J g"{swap}]
	\arrow[shorten <=2pt, shorten >=2pt, Rightarrow, from=2, to=3,
					"\:\:\lefttrans^{A'}_{\J g}"]
	\arrow[shorten <=2pt, shorten >=2pt, Rightarrow, from=0, to=1,
					"\:\:\righttrans^{X}_a"]
	\arrow[
			swap,
			rounded corners,
			to path=
			{ -- ([yshift=.8cm]\tikztostart.north)
			-| ([yshift=.8cm]\tikztotarget.south)
			-- ([yshift=.6cm]\tikztotarget.south)
			}, 
			from=1-1, to=1-3
		]
	\arrow[
			swap,
			rounded corners,
			to path=
			{ -- ([yshift=-.8cm]\tikztostart.south)
			-| ([yshift=-.8cm]\tikztotarget.north)
			-- ([yshift=-.6cm]\tikztotarget.north)
			}, 
			from=1-1, to=1-3
	]
	\arrow["", draw = none, from=1-1, to=1-3, "a \ract g", yshift = 1.1cm]
	\arrow["", draw = none, from=1-1, to=1-3, "\iso", yshift = .5cm]	
	\arrow["", draw = none, from=1-1, to=1-3, "\iso", yshift = -.9cm]		
	\arrow["", draw = none, from=1-1, to=1-3, "a \ract' g", yshift = -1.6cm]
\end{tikzcd}
\end{equation*}
One shows both $F$ and $G$ are well-defined by a long calculation using the compatibility properties on one side to show the required compatibility condition on the other side.

We now show that $GF \iso \id$ and $FG \iso \id$. Given an action 
	$\mathcal{A} := (\lact, \theta, \ract, \zeta, \kappa)$,
the composite $GF(\mathcal{A})$ has left action $\lact'$ given by
$
	f \lact' b 
	=
	(\Id_{X'} \lact b) \circ (\J f \ract \Id_B) 
$ 
and right action 
	$\ract'$
given by
$
	a \ract' g 
	=
	(\Id_{A'} \lact \J g) \circ (a \ract \Id_Y) 
$ 
so we get an obvious choice of icons 
	$\lact' \To \lact$
and 
	$\ract' \To \ract$
given by
\[
\begin{tikzcd}[column sep = 3em, row sep=2.5em]
	XB & {X'B} & {X'B'}
	\arrow[""{name=0, anchor=center, inner sep=0}, from=1-1, to=1-2]
	\arrow["{\Id_{X'} \lact b}", from=1-2, to=1-3]
	\arrow[""{name=1, anchor=center, inner sep=0}, "{\J f \ract \Id_B}", curve={height=-16pt}, from=1-1, to=1-2]
	\arrow[""{name=2, anchor=center, inner sep=0}, "{f \lact \Id_B}"', curve={height=16pt}, from=1-1, to=1-2]
	\arrow["\zeta"{description}, draw=none, from=1, to=0]
	\arrow["\theta"{description}, draw=none, from=0, to=2]
	\arrow[
			swap,
			rounded corners,
			to path=
			{ -- ([yshift=-.8cm]\tikztostart.south)
			-| ([yshift=-.8cm]\tikztotarget.north)
			-- ([yshift=-.6cm]\tikztotarget.north)
			}, 
			from=1-1, to=1-3
	]	
	\arrow["", draw = none, from=1-1, to=1-3, "\iso", yshift = -.9cm, xshift=4mm]		
	\arrow["", draw = none, from=1-1, to=1-3, "f \lact b", yshift = -1.6cm]
\end{tikzcd}
\qquad 
\qquad 
\begin{tikzcd}[column sep = 3em, row sep=2.5em]
	AY & {A'Y} & {A'Y'}
	\arrow["{a \ract \Id_{Y}}"', from=1-1, to=1-2]
	\arrow[""{name=0, anchor=center, inner sep=0}, "{\Id_{A'} \lact \J g}", curve={height=-16pt}, from=1-2, to=1-3]
	\arrow[""{name=1, anchor=center, inner sep=0}, "{\Id_{A'} \ract g}"', curve={height=16pt}, from=1-2, to=1-3]
	\arrow[""{name=2, anchor=center, inner sep=0}, from=1-2, to=1-3]
	\arrow["\theta"{description}, draw=none, from=0, to=2]
	\arrow["\zeta"{description}, draw=none, from=2, to=1]
	\arrow[
			swap,
			rounded corners,
			to path=
			{ -- ([yshift=-.8cm]\tikztostart.south)
			-| ([yshift=-.8cm]\tikztotarget.north)
			-- ([yshift=-.6cm]\tikztotarget.north)
			}, 
			from=1-1, to=1-3
	]	
	\arrow["", draw = none, from=1-1, to=1-3, "\iso", yshift = -.9cm, xshift=-4mm]		
	\arrow["", draw = none, from=1-1, to=1-3, "a \ract g", yshift = -1.6cm]
\end{tikzcd}
\]
These commute with all the data because $\theta$ and $\zeta$ do, and forms a natural isomorphism 
	$GF(\mathcal{A}) \iso \mathcal{A}$ 
because morphisms in $\FreydAct{\V, \J, \B}$ commute with the icons of the actions. 

Finally, to show that $FG \iso \id$ consider a Freyd bicategory 
	$\freydCat := (\ltie, \theta, \rtie, \zeta)$.
Then 
	$FG(\freydCat)$
has 
	${a \rtie' B} := (X' \ltie \J\Id_B) \circ (a \rtie B)$
and
	${A \ltie' b} := (A \ltie b) \circ (\J\Id_A \rtie B)$
so we have evident structural isomorphisms
	$(a \rtie' B) \iso (a \rtie B)$
and 
	$(A \ltie' b) \iso (A \ltie b)$. 
These commute with all the data and define a natural isomorphism 
	$FG(\freydCat) \iso \freydCat$
by straightforward applications of coherence.
\end{proof}

\end{document}